\documentclass[a4paper,12pt]{amsart}
\usepackage{amsaddr}
\usepackage{latexsym}
\usepackage{enumitem}
\usepackage{amsmath, amssymb, graphics}
\usepackage{amsfonts}
\usepackage{mathtools}
\usepackage{setspace} 
\usepackage{capt-of}
\usepackage{natbib}
\usepackage{booktabs}  
\usepackage{hyperref}
\usepackage{array}
\usepackage{leftidx}
\usepackage{psfrag}
\usepackage{float}
\usepackage{color}
\usepackage{ulem}

\newcommand{\E}{\mathbb{E}}
\newcommand{\N}{\mathbb{N}}
\newcommand{\Z}{\mathbb{Z}_{+}}
\newcommand{\R}{\mathbb{R}_{+}}
\newcommand{\iid}{i.i.d.}
\newcommand{\ffrac}[2]{\ensuremath{\frac{\displaystyle #1}{\displaystyle #2}}}

\usepackage{marginnote} 
   
\newcommand\preq{\stackrel{\scriptstyle \mathclap{\mbox{\tiny d}}}{=}}

\newcommand\mn[1]{
}

\newcommand\omn[1]{
}

\newcommand\bibitemarg[1]{

}

\newtheorem{Theorem}{Theorem}[section]
\newtheorem{Proposition}{Proposition}[section]
\newtheorem{Corollary}{Corollary}[section]

\theoremstyle{definition}

\theoremstyle{remark} 
\newtheorem{Remark}{Remark}[section]
\newtheorem{Example}{Example}[section]

\begin{document}

\title[Discrete distributions and 
statistical mechanics of small systems]{Discrete probability distributions and 
statistical mechanics of small systems}
\author[L.B. Klebanov]{Lev B. Klebanov}
\address{Department of Probability and Mathematical Statistics, Faculty of Mathematics and Physics, Charles University, Sokolovská 49/83, 186 75 Prague, Czech Republic}
\email{Lev.Klebanov@mff.cuni.cz}
\author[M. \v{S}umbera]{Michal \v{S}umbera}
\address{Nuclear Physics Institute, Czech Academy of Sciences, 25068 \v{R}e\v{z}, Czech Republic}
\email{sumbera@ujf.cas.cz}

\date{\today}

\begin{abstract}
We study connections between discrete probability distributions and the statistical mechanics of small systems. Using probability generating functions, we develop the theory of power series, infinitely divisible, and scalable distributions of non-negative integer-valued random variables, and introduce the class of Markovian distributions that arise naturally in stationary solutions of birth-death processes and in scalable infinitely divisible distributions. 

These results are applied to the grand canonical ensemble description in statistical mechanics: the infinite divisibility leads to a quasiparticle picture of an interacting gas, and the virial expansion is linked to the combinants of the distribution. A kinetic model of the liquid-vapor phase transition is presented, in which the particle-number distribution at the critical point converges to the Discrete Stable distribution -- the fixed point of a renormalization semi-group transformation. We also show that the scalability of the particle-number distribution is preserved in Tsallis non-extensive thermodynamics, even though infinite divisibility fails. Moreover, deformed discrete distributions, including the negative binomial as a q-deformation of the Poisson, arise naturally in this setting.
\end{abstract}

\maketitle
{\bf Keywords:} power series distributions; infinite divisibility; scalability; combinants; Markovian distributions; birth-death processes; grand canonical ensemble; virial expansion; critical point; discrete stable distribution; Tsallis thermodynamics; non-extensive entropy; small systems.
\newpage

\tableofcontents

\newpage
\begin{flushright}
{\it Motto: I wanted to learn about elementary particles \\
by studying boiling water.
(A. M. Polyakov)}  
\end{flushright}

\section{Introduction}

Classical statistical mechanics typically deals with the macroscopic behavior of physical systems with an enormous number of degrees of freedom ( see, e.g., \cite{Hill, Lanlif, BlanTh}). To extend its validity to small (non-macroscopic) systems, one needs to study ensembles containing a countable, finite number of particles and corresponding modifications of the thermodynamic functions and interrelationships, including, in general, variations in the size of the system \cite{Hill1962, Hill_small}. Noticing that the true thermal equilibrium of most systems often requires a subdivision into regions described by the fully-open generalized ensemble - the grand canonical isothermal–isobaric ensemble - one frequently resorts to nanothermodynamics \cite{chamberlin2020}.

The small system thermodynamics \cite{Hill_small} describes, among others, colloidal particles, polymers, or macromolecules; but can also be used for systems formed in collisions of ultra-relativistic nuclei \cite{Hagedorn:1985js, Csernai,  Rafelski:2015xej}. There, the final state particles emerge fully thermalized on microscopic timescales \cite{Baier:2000sb, Xu:2004mz, Pasechnik:2016wkt}. Although such systems are usually described with the grand canonical formalism, the particle-number distribution -- except in a few cases (see, e.g., \cite{Mrowczynski:1984tj, Gorenstein:2008nr}) -- is rarely analyzed in depth. Instead, it is assumed to be Poissonian, which makes it practically impossible to describe the observed multi-particle correlations.
Moreover, ignoring the underlying form of a discrete distribution may be inconsistent with the well-defined behavior of thermodynamic quantities near the phase transition. 

This article aims to fill this gap by applying the recent results on discrete probability distributions to the statistical mechanics of small systems.  Our goal is to construct an equation of state based on the general grand canonical formalism, starting from a model assumption concerning a particle-number probability distribution $p_k=\mathbb{P}(N\!=\!k)$, rather than relying on the uncertain dynamics of particle interactions via a microcanonical description.

An important property of both discrete and continuous random variables (r.v.) is their infinite divisibility, which allows their decomposition into a sum of independent identically distributed ones of arbitrary length \cite{Feller1, StvH2004, JKK2005}. Infinitely divisible distributions play a central role in various aspects of probability theory and its applications. Yet, with a few exceptions - including the classical monograph \cite{Hill} - they are not discussed in standard textbooks on statistical mechanics. 

An important subclass of infinitely divisible discrete r.v. consists of those that can be split into two independent parts, one being the fraction of the original, with both parts remaining well-defined r.v. Such random variables are called self-decomposable \cite{StvH2004}. Requiring self-decomposability into a sum of two fractions uniquely leads to the {\it Discrete stable distribution}, which will be shown to represent a fixed point of the evolution in a process with many equally important scales, as arises near the critical point of the liquid-vapor phase diagram. Moreover, certain self-decomposable variables can be scaled to have sizes larger than those of the original r.v. \cite{Klebanov2023-xj, KlSum2026}. 

The article is organized as follows. Section \ref{cons} introduces the basics of the probability theory of discrete non-negative r.v. Using the apparatus of probability generating functions, we study two special classes of distributions - power series and infinitely divisible distributions - and their interrelations. To make contact with virial expansion in statistical mechanics and to demonstrate novel applications of infinitely divisible distributions, we also discuss the combinants \cite{Kauffmann:1978vw}. We then introduce the concepts of self-decomposability and scalability.  

In Section \ref{kinmodels}, we briefly introduce kinetic models of particle-number distributions and their stationary solutions.

In Section \ref{statmech}, all the above results are applied to several topics in statistical mechanics of small systems. Using the grand canonical ensemble formalism, we establish the power-series character of the particle-number distributions and derive conditions for their infinite divisibility. We discuss the equation of state, virial expansion, and their relations to kinetic models. We also address the problem of particle-number distribution at the critical point of the vapor-liquid phase diagram. 

Section \ref{CPsec} studies the particle-number distribution in the vicinity of the critical point.

Finally, in Section \ref{TT}, we present results on particle-number distributions in Tsallis non-extensive thermodynamics \cite{Tsallis:2009zex}. We show that, contrary to infinite divisibility, the scalability of the particle-number distribution holds even in the non-extensive case. We also discuss deformed discrete distributions and demonstrate that, e.g., the negative binomial distribution arises from a deformation of the Poisson distribution. 

The properties discussed are illustrated with many examples that may prove useful in the natural and social sciences.

\newpage
\section{Constructions and Properties of Some Families of Discrete Distributions of Non-Negative Integers} \label{cons}

\subsection{Probability Mass Functions, Generating Functions and Moments} Let $N$ be a random variable (r.v.) taking non-negative integer values, $N\in \Z $ with probability $p_k=\mathbb{P}(N\!=\!k)$ for $k=0,1,2,\ldots$. The {\it probability mass function} (p.m.f.) $p_k$ is thus a sequence of non-negative numbers normalized to one, $\sum_k p_k=1$, which can be classified according to its shape. The p.m.f. is {\it log-convex} (LX), or {\it log-concave} (LC), respectively, if
\begin{equation} \label{eq:cvx}
p^2_n\leq p_{n-1} p_{n+1} \;,\;\; p^2_n\geq p_{n-1} p_{n+1}\;.
\end{equation}
Note that an LC p.m.f. is {\it unimodal}: if it were not, there would exist $ p_{n-1} > p_n < p_{n+1}$ for some $n$, which would contradict the assumed log-concavity. It is also worth mentioning that an LX p.m.f. can, in principle, have several maxima. The {\it ultra log-concave} (ULC) and {\it ultra log-convex} (ULX) distributions are also introduced in some instances (see, e.g., \cite{Johnson_2013}):
\begin{equation} \label{eq: ulc}
n p^2_n\geq (n+1) p_{n-1} p_{n+1}\;,\;\;n p^2_n\leq (n+1) p_{n-1} p_{n+1}  \;, n\geq 0\;. 
\end{equation}
Every ULX (resp. ULC) p.m.f. is LX (resp. LC) but not vice versa.

Using p.m.f. $p_k$ we define the {\it probability generating function} (p.g.f.)\footnote{Sign $:=$ denotes definition.}
\begin{equation}\label{epgf}
G(w):=\mathbb E w^N= \sum_{k=0}^{\infty}p_{k}w^{k};\;\;\; p_k=\frac{G^{(k)}(w =0)}{k!}\;.
\end{equation}
With $p_k \geq 0$ and $G(1)=\sum_{k=0}^{\infty}p_k =1$ the series in (\ref{epgf}) converges for all complex $|w|\leq 1$.  

\begin{Example}
    Consider the r.v. $N\in\Z$ with p.m.f. and p.g.f.
\begin{equation}\label{jj1}
p_n=\frac{\theta^n}{(n+2)(n+1)}\cdot\frac{{1}}{Z(\theta)},\;\; G(w;\theta)=\frac{Z(\theta w)}{Z(\theta)},\; Z(\theta)=-
\frac{\theta + (1-\theta) \log (1-\theta)}{\theta ^2}\;.   
\end{equation}
This distribution has many notable properties, which will be discussed in detail in the following Subsections. It belongs to the class of {\it Power series distributions} (see Subsection \ref{psdsub}), is defined for $\theta \in (0,1]$, and has finite mean and higher moments only for $\theta<1$. The p.m.f. (\ref{jj1}) is LX and therefore is {\it infinitely divisible}, see Subsection \ref{infdsub}. Moreover, the r.v. $N$ is {\it scalable} in the sense that it can be multiplied by any non-negative real number without destroying its discrete character (see Subsection \ref{scalsub}).

\end{Example}

Consider the sum $S = N_1 + N_2$\footnote{More correct way would be to write  $S \preq N_1\!+\!N_2$, where $\preq$ means the equality in probability.} of two independent r.v. $N_{1,2}$ with p.g.f. $G_{1,2}(w)$ and p.m.f.{}s  $p_{n1,2}$. The p.m.f. and p.g.f. of the r.v. $S$ read   
\begin{eqnarray}
\mathbb{P}(N_1\!+\!N_2\!=\!S) =\sum_{n_1=0}^S \mathbb{P}(N_1\!=\!n_1)\mathbb{P}(N_2\!=\!S\!-\!n_1) 
\\
 G(w) = \mathbb{E}[w^{N_1+N_2}] = \mathbb{E}[w^{N_1}] \mathbb{E}[w^{N_2}] =G_1(w)G_2(w)\;. 
 \label{eq: prodpgf}
\end{eqnarray}
\begin{Proposition} \label{PGSwk}
If $G(w)$ is the p.g.f. of a r.v. $N\geq k>0$, then $w^k G(w)$ is the p.g.f. of the r.v. 
$ M=N+k$, and $w^{-k} Q(w)$ is the p.g.f. of the r.v. $M=N-k$.
\end{Proposition}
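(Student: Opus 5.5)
The argument is a direct computation from the definition (\ref{epgf}) $G(w)=\mathbb E w^N$, applied separately to the two shifts. The statement writes $Q(w)$ in the second part; I read this as the p.g.f. $G(w)$ of the same r.v. $N\geq k$, and will note that interpretation explicitly.

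\emph{First claim.} Put $M=N+k$. Since $N\in\Z$, also $M\in\Z$, so $M$ has a p.g.f. By definition,
\[
\mathbb E w^{M}=\mathbb E\, w^{N}w^{k}=w^k\,\mathbb E w^N=w^kG(w).
\]
The same follows from (\ref{eq: prodpgf}): $k$ is a degenerate r.v. with p.g.f. $w^k$, independent of $N$. At the level of coefficients, $\mathbb P(M=n)=p_{n-k}$ for $n\geq k$ and $0$ otherwise. This is exactly the coefficient sequence of $w^kG(w)=\sum_j p_j w^{j+k}$, and it still sums to $1$.

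\emph{Second claim.} The only real point is that $w^{-k}G(w)$ must be a genuine power series with non-negative coefficients summing to one, rather than a Laurent series. The hypothesis $N\geq k$ means $p_j=0$ for $j<k$, so
\[
G(w)=\sum_{j\geq k}p_jw^j.
\]
Hence
\[
w^{-k}G(w)=\sum_{m\geq 0}p_{m+k}w^m
\]
is analytic at $0$, converges for $|w|\leq 1$, and equals $1$ at $w=1$. Moreover $M=N-k\geq 0$ is $\Z$-valued with $\mathbb P(M=m)=p_{m+k}$. So $\mathbb E w^M=\mathbb E\,w^N w^{-k}=w^{-k}G(w)$ for $w\neq 0$, and the identity extends to $w=0$ by continuity (equivalently, by comparing coefficients).

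The only step needing care is this well-definedness check for the negative shift, where the hypothesis $N\geq k$ is essential. Without it, $w^{-k}G(w)$ would have a pole at $0$ and $N-k$ would take negative values. Everything else is routine.
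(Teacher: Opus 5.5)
Your proof is correct and supplies exactly the verification the paper omits. The paper states this proposition without proof, as an immediate consequence of $G(w)=\mathbb{E}w^N$, and your check that $w^{-k}G(w)=\sum_{m\ge 0}p_{m+k}w^m$ is a genuine power series (because $p_j=0$ for $j<k$) is the only substantive point. Your reading of $Q$ as $G$ matches how the paper later applies the result, $H(w;\theta)=G(w;\theta,k)w^{-k}$; and, as you tacitly assume, $k$ must be a positive integer for $w^{\pm k}G(w)$ to be a p.g.f. on $\mathbb{Z}_+$.
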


\begin{Example}
The Bernoulli distribution has p.m.f. $p_n(a)=(1-a)\delta_{n0}+a\delta_{n1}$ and p.g.f. $\mathcal{B}(w; a)=1-a +aw$. The sum of $m$ {\it independent identically distributed random variables} (\iid{}r.v.)  with common p.m.f. $p_n(a)$ has the binomial distribution with p.g.f.  and p.m.f., respectively:
\begin{equation}\label{Binom}
G(w)=(\mathcal{B} (w;a))^m\;,\; \;\; p_j(m,a)=\binom{m}{j}a^j(1-a)^{m-j}\;,\; m\in \N\;.   
\end{equation}
The p.g.f. $w^k\mathcal{B}(w;a)$ has the p.m.f. $p_n(a)=(1-a)\delta_{n,k}+a\delta_{n,k+1}$ for any integer $k$.
\end{Example}

The analyticity of the p.g.f. $G(w)$ at $w=1$ thallows us to define the $k^{\rm th}$ {\it factorial moment} 
\begin{equation} \label{facmom}
\mu_{[k]} := G^{(k)}(w=1)   =\E(N)_k \;;\; (N)_k:=N(N\!-\!1)\cdots (N\!-\!k\!+\!1)\;.
\end{equation}
The analyticity of the {\it moment generating function} $\mathcal{M}_N(t)$ in the region in $|t|\leq 1$ 
\begin{equation}\label{mgf}
\mathcal{M}_N(t):= \sum_{k=0}^{\infty}\frac{\mu_k}{k!}t^k=\sum_{k=0}^{\infty}\frac{\mathbb{E}N^k}{k!}t^k=\mathbb{E}e^{tN}=G(e^t)  
\end{equation}
implies the finiteness of  the ordinary moments $\mu_k:=\E N^k$ with $k\in\Z $. For a p.g.f. $G(w)$ that is analytical in the closed circle $|w|\leq r$, where  $r\geq e$, this is equivalent to the condition $G(e)<\infty$.

Using the Stirling numbers of the first $s(k,j)$ and second $S(k,j)$ kind, respectively, defined by
\[
(x)_k=\sum_{j=0}^n s(n,j)x^j\;,\;\; x^n=\sum_{k=0}^n S(n,k)(x)_k\;.
\]
we obtain the connection between factorial $\mu_{[k]}$ and ordinary $\mu_k$ moments \cite{JKK2005}:
\begin{equation}\label{frac2norm}
 \mu_{[k]} = \sum_{j=0}^k s(k,j)\mu_j\;,\;\; \mu_k=\sum_{j=0}^k S(k,j)\mu_{[j]}\;.
\end{equation}

If $\E N<\infty$ then $\E N^r <\infty$ $\forall r$ such that $0<r<a$ and $a\in (0,1)$. In the opposite case, the r.v. $N$ is called {\it heavy-tailed}; its $r$-th moments exist for $0 < r < a < 1$ provided the p.g.f. $G(w)$ satisfies \cite{Klebanov2023-xj, KS2208}
\begin{equation}\label{momheavy}
 \int_0^{1}\frac{1-G(w)}{(-\log w)^{r+1}w}dw<\infty \;. 
\end{equation}

One may naively expect that for heavy-tailed distributions, the Boltzmann-Gibbs-Shannon (BGS) entropy $S^{BGS}:=-\sum_n p_n \log p_n$ diverges. However, this need not be the case: $S^{BGS}$ is finite if either $\mathbb{E} \log N  = \sum_{n=1}^{\infty}p_n \log n < \infty $, or if $\exists ~r> 0$ such that $\mathbb{E} N^{r} = \sum_{n=1}^{\infty}p_n n^{r}< \infty$ \cite{Baccetti_2013}.

\subsection{Power series distributions} \label{psdsub}
The following construction gives a simple way to transform an arbitrary r.v. $N\in \Z$ into a new one with all moments finite. Let $Z(w)=\sum_{k=0}^{\infty}p_n w^n$ be the p.g.f. of r.v. $N$, and define the family of {\it power series distributions} (PSD) \cite{JKK2005} with p.m.f. $\{r_n(\theta)\}$ and p.g.f. $\{G(w;\theta)\}$ associated with $Z(w)$ 
\begin{equation} \label{PSDdef}
r_n(\theta):=\frac{p_n \theta^n}{Z(\theta)},\;\;\;
G(w;\theta) =\frac{Z(\theta w)}{Z(\theta)} =\frac{1}{Z(\theta)}\sum_{k=0}^{\infty}p_k\theta^kw^k, \quad \theta \in (0,1] \;.
\end{equation}
The function $G(w;\theta)$ is analytic inside the disk $|w|\leq 1/\theta$. Since a power series is infinitely differentiable inside its disk of convergence, a r.v. $N$ with p.g.f. $G(w;\theta)$ has finite moments of all orders for $\theta <1$.

Suppose $Z(w)$ is a p.g.f. Does there exist a family $\{G(w,\tau), \; \tau \in (0,1]\}$ such that $Z(w) = G(w,\tau_o)$ for some given $\tau_o \in (0,1)$? The answer is simple: such a family exists if and only if (iff) $Z(w)$ admits analytical continuation into a closed disk $|w| \leq 1/\tau_o$. In this case $G(w;\tau) =Z(\tau/\tau_o\cdot w)/Z(\tau/\tau_o)$. Thus, the possibility of analytic continuation into a larger disk is what allows a p.g.f. to belong to a wider family of PSDs as an interior point. 

To illustrate that property, consider two cases:
\begin{enumerate}[label=(\roman*), leftmargin=*]
\item Let $P(w; 1) = \exp\{(w-1)\}$ be p.g.f. of the Poisson distribution with intensity $\lambda =1$. The corresponding power series family 
\begin{equation} \label{poispgf}
G(w;\theta) =\frac{P(\theta w;1)}{P(\theta;1)}=\exp\{\theta (w-1)\}  =P(w;\theta) 
\end{equation}
is the Poisson distribution with intensity $\lambda=\theta$. Since $P(w;1)$ is an entire function (i.e., analytic on the whole complex plane), the value of $\theta>0$ is arbitrary.

\item\label{ExSib}
Let $Z(w)=\mathcal{S}(w;\gamma) := 1-(1-w)^{\gamma}\;,\gamma \in (0,1]$, be Sibuya p.g.f..
Then
\begin{equation}\label{exsbd_pgf}
G(w;\theta,\gamma)= \frac{\mathcal{S}(\theta w;\gamma)}{\mathcal{S}(\theta;\gamma)}=
\frac{1-(1-\theta w)^{\gamma}}{1-(1-\theta)^{\gamma}}
\end{equation}
is the p.g.f. of the {\it Extended Sibuya distribution} \cite{Klebanov2023-xj}.
Here $\theta \in (0,1]$ and this family cannot be embedded in a wider one because $\mathcal{S}(w;\gamma)$ does not allow an analytic continuation to $|w|\leq r$ for $r>1$. 

\end{enumerate}

The following two propositions describe elementary properties of the PSD family.
\begin{Proposition}
Assume that the p.g.f. of r.v. $N\in \Z $ belongs to the PSD family $\mathcal{A}=\{G(w;\theta), \; \theta \in (0,1]\}$.
\begin{enumerate}[leftmargin=1.2cm] \label{propPSD}

    \item If $n\in \N$ then $(G(w;\theta))^n\in \mathcal{A}$.
    \item If $H(w)$ is a p.g.f. of r.v. $M\in \Z $ such that $M+k=N$ with  $\mathbb{P}(N=j)=0,\; j\leq k$ and $\mathbb{P}(N=j)>0,\; j> k,\; k \in \Z $ then $H(w;\theta)=G(w;\theta,k) w^{-k}\in \mathcal{A}$. 
    \end{enumerate}
\end{Proposition}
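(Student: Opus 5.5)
The approach is to read membership in $\mathcal{A}$ as ``being of the form $Z(\theta w)/Z(\theta)$ for some p.g.f.\ $Z$ and all $\theta\in(0,1]$''. I would prove both items by producing the generating p.g.f.\ explicitly and checking the PSD form (\ref{PSDdef}). Write $G(w;\theta)=Z(\theta w)/Z(\theta)$ with $Z(w)=\sum_k p_k w^k$, $Z(1)=1$. In both items the family of the new p.g.f.\ is the PSD family generated by a new base function, which I will also denote by $\mathcal{A}$ as the statement does.

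For item (1), I raise both numerator and denominator to the $n$-th power:
\[
(G(w;\theta))^n=\frac{(Z(\theta w))^n}{(Z(\theta))^n}=\frac{Z_n(\theta w)}{Z_n(\theta)},\qquad Z_n:=Z^n .
\]
By (\ref{eq: prodpgf}), $Z_n$ is the p.g.f.\ of a sum of $n$ i.i.d.\ copies of a r.v.\ with p.g.f.\ $Z$. It therefore has non-negative coefficients $p^{*n}_k$ (the $n$-fold convolution) and satisfies $Z_n(1)=1$. Hence $(G(w;\theta))^n$ has exactly the form (\ref{PSDdef}), with p.m.f.\ $p^{*n}_k\theta^k/Z_n(\theta)$. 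The radius of analyticity is preserved, since a power of a function analytic in $|w|<1/\theta$ is analytic there. So the family $\{(G(w;\theta))^n\}$ is a PSD family indexed by the same $\theta\in(0,1]$.

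For item (2), the hypothesis says $p_j=0$ for $j\le k$, so $Z(w)=w^{k}\tilde Z(w)$ with $\tilde Z(w)=\sum_{m\ge 0}p_{m+k}w^m$. This $\tilde Z$ has non-negative coefficients and $\tilde Z(1)=Z(1)=1$, so it is the p.g.f.\ of $M=N-k$, as in Proposition \ref{PGSwk}. Then
\[
w^{-k}G(w;\theta)=\frac{w^{-k}(\theta w)^k\tilde Z(\theta w)}{\theta^k\tilde Z(\theta)}=\frac{\tilde Z(\theta w)}{\tilde Z(\theta)},
\]
which is again of PSD form (\ref{PSDdef}), generated by $\tilde Z$. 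The positivity assumption $\mathbb{P}(N=j)>0$ for $j>k$ guarantees that $\tilde Z(\theta)>0$, so the quotient is well defined. It also ensures that the support of $M$ starts at $0$, so that $H$ is a genuine p.g.f.\ on $\Z$ with $H(0)>0$.

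The only real obstacle is interpretive. In item (2) the indexing in the statement is loose: whether $0$ or $k$ is the first vanishing index decides whether one divides by $w^k$ or by $w^{k+1}$. I would settle this by matching the factor $w^{-k}$ with the first nonzero index, which keeps the identity above exact. The rest is routine verification of non-negativity and normalization.
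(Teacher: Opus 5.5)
Your proof is correct and is essentially the paper's. Item (1) is the evident observation that $Z^n$ is again a p.g.f.\ (the paper leaves this implicit), and item (2) is exactly Proposition \ref{PGSwk} combined with the identity $\theta^k(\theta w)^{-k}=w^{-k}$, which is your factorization $Z=w^k\tilde Z$. The only slip is the side remark $H(0)>0$: under the literal hypothesis $p_k=0$ one gets $H(0)=0$. This is harmless, since $w^{-k}G(w;\theta)=\tilde Z(\theta w)/\tilde Z(\theta)$ has PSD form either way, and $\tilde Z(\theta)>0$ holds automatically for $\theta>0$.
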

Property $(II)$ follows from Proposition \ref{PGSwk} and the identity $\theta^k(\theta w)^{-k}=w^{-k}$. 

\medskip

The following proposition connects the class of PSDs with the geometric distribution.
\begin{Proposition} \label{G2PSD}
Let $N \in \Z$  be the r.v. with p.m.f. $p_n$ and p.g.f. $Z(w)$, and $M$ be a r.v., independent of $N$, with p.m.f. $q_n=(1-\theta)\theta^n$ of the geometric distribution. Then the r.v. $K = N \wedge M$, which takes the value $k$ if both $N=k$ and $M=k$, has a PSD and vice versa: Every r.v. $K$ from the PSD family is the product of two r.v. $K = N \wedge M$,  where $M\in \Z$ has the geometric distribution and $N \in \Z$ is some arbitrary random variable.  
\end{Proposition}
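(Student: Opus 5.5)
The statement has to be read with the conditioning made explicit. $K=N\wedge M$ means the law of $N$ conditioned on the event $\{N=M\}$: $K$ takes the value $k$ exactly when the two independent variables coincide at $k$, and we renormalize by $\mathbb{P}(N=M)$. The plan is to compute this conditional p.m.f. directly, recognize it as the power series distribution (\ref{PSDdef}), and then reverse the computation.

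\textbf{Direct part.} By independence,
\[
\mathbb{P}(N=k,\,M=k)=p_k(1-\theta)\theta^k .
\]
Summing over $k$ gives
\[
\mathbb{P}(N=M)=(1-\theta)\sum_k p_k\theta^k=(1-\theta)Z(\theta),
\]
which is strictly positive for $\theta\in(0,1)$ unless $N\equiv$ const with $p_0=0$ degenerate issues; in fact $Z(\theta)>0$ always, since some $p_k>0$. Hence
\[
\mathbb{P}(K=k)=\frac{p_k\theta^k(1-\theta)}{(1-\theta)Z(\theta)}=\frac{p_k\theta^k}{Z(\theta)}=r_n(\theta)\big|_{n=k}.
\]
This is exactly the PSD associated with $Z$ in (\ref{PSDdef}). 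The p.g.f. is therefore $Z(\theta w)/Z(\theta)$, and the factor $1-\theta$ cancels.

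\textbf{Converse.} Let $K$ have a PSD, so $\mathbb{P}(K=k)=p_k\theta^k/Z(\theta)$ for some p.g.f. $Z$ with coefficients $p_k$. Take $N$ with p.m.f. $p_k$ and an independent geometric $M$ with parameter $\theta$. The computation above, read backwards, shows that $N\wedge M$ has the law of $K$. So the converse is just the identification of the generating sequence $p_k$ with the law of $N$.

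\textbf{Main obstacle.} The only delicate point is the endpoint $\theta=1$, which is allowed in (\ref{PSDdef}) but where the geometric distribution degenerates. I would handle it by restricting to $\theta\in(0,1)$, or by noting that the PSD at $\theta=1$ is the law of $N$ itself, obtained as the limit $\theta\to1$ of the conditional laws. I would also remark that "product" in the statement should be read as this conditioning operation, not as an arithmetic product.
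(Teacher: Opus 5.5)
Your proposal is correct and follows essentially the paper's route: the paper likewise reads $K=N\wedge M$ as $N$ conditioned on $\{N=M\}$, i.e.\ $r_k=p_kq_k/\sum_n p_nq_n$, computes this at the level of p.g.f.s so that the factor $1-\theta$ cancels to give $Z(\theta w)/Z(\theta)$, and obtains the converse by reinserting $(1-\theta)/(1-\theta)$. You are more careful than the paper in two places: you state the conditioning and the positivity of $\mathbb{P}(N=M)=(1-\theta)Z(\theta)$ explicitly, and you flag that the endpoint $\theta=1$ allowed in (\ref{PSDdef}) must be excluded, since there the geometric law degenerates and the paper's $(1-\theta)/(1-\theta)$ manipulation breaks down; at that endpoint $K$ simply has the law of $N$.
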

\begin{proof}
Let $r_k$ be the p.m.f. of the r.v. $K$ with the p.g.f. $G(w;\theta)$. Then:
\begin{equation} \label{PSD2G}
G(w;\theta)=\sum_{k}r_k w^k=\frac{(1\!-\!\theta) \sum_{k}p_k (\theta w)^k}{(1\!-\!\theta) \sum_k p_k \theta^k} 
=\frac{\sum_{k}p_k (\theta w)^k}{\sum_{k}p_k\theta^k } =
\frac{Z(\theta w)}{Z(\theta)}\;. 
\end{equation}
The proof of the opposite conjecture follows by multiplying the ratio $Z(\theta w)/Z(\theta)$ by $(1-\theta)/(1-\theta)$ and expanding it into powers of its arguments. 
\end{proof}

\begin{Example} \label{sibgeom}
Consider a multiplicative process represented by the r.v. $M$, in which a particle of type {\bf A} undergoes binary fission with probability $\theta$. Each offspring particle independently either remains in a stable (non-splitting) state with probability $1-\theta$ or undergoes another fission event with probability $\theta$. The probability of ultimately observing $j$ particles of type {\bf A} is given by
\[
\mathbb{P}(M = j) = (1-\theta)\,\theta^{j-1}, \quad j = 1, 2, \dots,
\]
which corresponds to a geometric distribution starting at $j=1$.
In the concurrent process characterized by the r.v. $N$, the probability that a particle of type {\bf B} undergoes binary fission at step $i$ is given by $1-\gamma/i$. The p.m.f. for observing $j$ particles of type {\bf B} is described by the Sibuya distribution:
\begin{equation}\label{Sbdpmf}
\mathbb{P}(N=j)=p_j=\frac{\gamma}{j}\prod_{i=1}^{j-1}\left(1-\frac{\gamma}{i}\right), \qquad j=1,2,\ldots
\end{equation}
The random variable $K$ characterizes the event in which both processes generate an identical number of particles, that is, $K = N \wedge M$. In this case, $K$ follows an Extended Sibuya distribution with probability generating function given by (\ref{exsbd_pgf}). Notably, even though the first and higher-order moments of $M$ do not exist, all moments of $K$ are finite.
\end{Example}

\begin{Theorem}\label{PSDcum}{\rm \cite{Khatri1959}} 
Let $N\in\Z$ be a random variable with probability generating function $G(w;\theta)=Z(\theta w)/Z(\theta)$, where $\log Z(\theta)$ admits a local power-series representation in $\theta$ with strictly positive coefficients. Then the cumulants $\kappa_{\ell}$ of $N$ satisfy the recurrence relation
\begin{equation} \label{reccum1}
\kappa_{\ell}=\theta \frac{d\kappa_{\ell-1}}{d\theta}\;,\;\; \ell=1,2,\ldots\;.
\end{equation}
\end{Theorem}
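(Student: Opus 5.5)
The plan is to use the cumulant generating function. For $t$ in a neighbourhood of $0$,
\[
K(t;\theta):=\log \E e^{tN}=\log G(e^t;\theta)=\log Z(\theta e^{t})-\log Z(\theta),
\]
which follows from (\ref{mgf}) and the power-series form of $G(w;\theta)$ in (\ref{PSDdef}). The cumulants are $\kappa_\ell=\partial_t^\ell K(t;\theta)\big|_{t=0}$.

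The key observation is that $K$ depends on $t$ and $\theta$ only through the combination $\theta e^t$, apart from the $t$-independent term $-\log Z(\theta)$. Writing $L(u):=\log Z(u)$, we have
\[
\partial_t K = \theta e^t L'(\theta e^t) \qquad\text{and}\qquad \theta\,\partial_\theta K=\theta e^t L'(\theta e^t)-\theta L'(\theta).
\]
Hence the identity
\[
\partial_t K(t;\theta)=\theta\,\partial_\theta K(t;\theta)+\theta L'(\theta)
\]
holds, and the extra term does not depend on $t$. Equivalently, the operators $\partial_t$ and $\theta\partial_\theta$ agree on every function of $\theta e^t$, because with $u=\theta e^t$ both reduce to $u\,d/du$.

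The proof then proceeds by differentiation. Apply $\partial_t^{\ell-1}$ to the identity for $\ell\ge 2$; the constant term disappears. Since $\partial_t$ and $\theta\partial_\theta$ commute, this gives
\[
\partial_t^{\ell}K=\theta\,\partial_\theta\,\partial_t^{\ell-1}K .
\]
Evaluating at $t=0$ yields $\kappa_\ell=\theta\, d\kappa_{\ell-1}/d\theta$. For $\ell=1$, with the convention $\kappa_0=\log Z(\theta)$ (the $t$-part of $K$ before normalization), we get $\kappa_1=\theta Z'(\theta)/Z(\theta)=\theta\,d\log Z/d\theta$, which is the usual mean of a PSD. Setting $t=0$ in $\partial_t K$ confirms this directly.

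The main obstacle is the analytic justification, and this is where the hypothesis on $\log Z$ is used. The local power series of $\log Z(\theta)$ with positive coefficients makes $L$ analytic on a neighbourhood of the given $\theta$. Therefore $L(\theta e^t)$ is jointly analytic in $(t,\theta)$ near $(0,\theta)$, which permits interchanging $\partial_t$ and $\partial_\theta$ and evaluating at $t=0$. It also guarantees that all cumulants exist for an interior $\theta$, consistent with the remark after (\ref{PSDdef}) that $G(\cdot;\theta)$ is analytic on a disk of radius larger than $1$. Positivity of the coefficients is not needed beyond this; it only ensures that $Z>0$, so that the logarithm is well defined.
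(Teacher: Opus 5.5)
Your proof is correct and is the paper's argument made explicit. The paper's proof also rests on the fact that $\log Z(\theta e^{t})$ depends on $t$ and $\theta$ only through $t+\log\theta$, so $t$-derivatives at $t=0$ become derivatives in $\tau=\log\theta$; this is your observation that $\partial_t$ and $\theta\partial_\theta$ coincide on functions of $\theta e^{t}$. Iterating your recurrence from $\kappa_0=\log Z(\theta)$ gives $\kappa_\ell=(\theta\, d/d\theta)^{\ell}\log Z(\theta)$, which also shows that the paper's final equality with $\theta^{\ell}d^{\ell}\log Z/d\theta^{\ell}$ fails for $\ell\ge 2$ (in the Poisson case it would give $\kappa_2=0$ instead of $\theta$), although this does not affect the recurrence itself.
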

\begin{proof}
The $\ell^{\rm th}$ cumulant $\kappa_{\ell}$ of $N$ is the coefficient of $t^{\ell}/\ell!$ in the power-series expansion of the {\it cumulant generating function} $\mathcal{K}_{N}(t)$ in the variable $t=\log w$:
\begin{equation}\label{cumgf}
\mathcal{K}_{N}(t):= \log \E e^{tN} =\log G(e^t;\theta) 
= \sum_{\ell \geq 1}\kappa_{\ell}\frac{ t^{\ell}}{\ell!};\;\;\; \kappa_{\ell}=\mathcal{K}_N^{(\ell)}(t=0) \;.
\end{equation}  
For the PSD, $\kappa_{\ell}$ is obtained from the $\ell^{\rm th}$ derivative of the function $\log Z(\theta e^t)$ evaluated at $t=0$. In particular,
\begin{equation}\label{reccum_m}
\kappa_{1}  =\frac{dG(e^t;\theta)}{dt}\Big|_{t=0}= \frac{d\log Z(\theta e^t)}{dt}\Big|_{t=0} = \theta \frac{d\log Z(\theta)}{d\theta}\;.
\end{equation}
The analyticity of $\log Z(\theta)$ implies that the cumulant $\kappa_{\ell}$ can be obtained by differentiating $\ell$ times the function $\log Z(\theta)$ with respect to $\theta$ or, equivalently, with respect to $\tau=\log \theta$:
\begin{equation}\label{reccum}
    \kappa_{\ell} =\theta \frac{d \kappa_{\ell-1}}{d\theta} = \frac{d\kappa_{\ell-1}}{d\tau}=\frac{d^{\ell}\log Z(\theta)}{d\tau^{\ell}} =
    \theta^{\ell}\frac{d^{\ell}\log Z(\theta)}{d\theta^{\ell}}\;.
    \vspace{-.5cm}
\end{equation}
\end{proof}
From (\ref{reccum}) it follows that, if $\mu=\E N = \kappa_1$ is known as a function of $\theta$, then all higher-order cumulants are determined by this single function.

\begin{Corollary}\label{addcum}
Let $N_{1,2}\in \Z $ be two independent r.v. with the cumulants $\kappa_{\ell,1}$ and $\kappa_{\ell,2}$. Then the r.v. $M=N_1+N_2$ has the cumulants $\kappa_{\ell}=\kappa_{\ell,1}+\kappa_{\ell,2}$.
\end{Corollary}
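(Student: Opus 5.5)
The plan is to work at the level of cumulant generating functions, using the multiplicativity of p.g.f.{}s for independent summands, (\ref{eq: prodpgf}), together with the definition (\ref{cumgf}) of $\mathcal{K}_N(t)=\log G(e^t)$.

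First, by (\ref{eq: prodpgf}), the p.g.f. of $M=N_1+N_2$ is $G(w)=G_1(w)G_2(w)$. Substituting $w=e^t$ gives $\E e^{tM}=\mathcal{M}_{N_1}(t)\,\mathcal{M}_{N_2}(t)$.

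Second, I take logarithms to get
\[
\mathcal{K}_M(t)=\log\bigl(G_1(e^t)G_2(e^t)\bigr)=\mathcal{K}_{N_1}(t)+\mathcal{K}_{N_2}(t).
\]
This step needs some care about where the logarithm is defined. Both $G_i(e^t)$ are positive real numbers for real $t$ in a neighbourhood of $0$, since each equals $1$ at $t=0$ and is continuous there. So the real logarithm is well defined and additive on that neighbourhood, with no branch issues.

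Third, I compare the coefficients of $t^\ell/\ell!$, or equivalently differentiate $\ell$ times at $t=0$ as in (\ref{cumgf}). This gives $\kappa_\ell=\kappa_{\ell,1}+\kappa_{\ell,2}$ for every $\ell$.

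The only real obstacle is the hypothesis that the cumulants exist, i.e. that $\mathcal{K}_{N_i}$ is analytic, or at least $\ell$ times differentiable, near $t=0$. I will read the statement as presupposing this, since it refers to the cumulants $\kappa_{\ell,1},\kappa_{\ell,2}$. In the PSD setting of Theorem~\ref{PSDcum} with $\theta<1$ this holds automatically, because $G(w;\theta)$ is analytic for $|w|<1/\theta$, so $G(e^t;\theta)$ is analytic for $t$ near $0$.

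If only finitely many moments are finite, the same argument goes through with the derivatives taken as one-sided derivatives, i.e. derivatives in $t\le 0$ at $t=0$. Alternatively one can use the characteristic function $\E e^{itN}$, expand it up to the available order, and apply the Leibniz rule to the logarithm of the product.

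As a consistency check in the PSD case, if $N_i$ has p.g.f. $Z_i(\theta w)/Z_i(\theta)$ with the same $\theta$, then $M$ has PSD p.g.f. built from $Z=Z_1Z_2$. Then (\ref{reccum}) gives $\kappa_\ell=\theta^\ell\,\frac{d^\ell}{d\theta^\ell}\bigl(\log Z_1+\log Z_2\bigr)$, which yields the same additivity.
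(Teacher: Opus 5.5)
Your proof is correct and follows the same route as the paper: multiply the p.g.f.s via (\ref{eq: prodpgf}), take logarithms of $G(e^t)$ to get $\mathcal{K}_M=\mathcal{K}_{N_1}+\mathcal{K}_{N_2}$, and compare coefficients of $t^\ell$; your extra care about where the logarithm is defined and about existence of the cumulants is a welcome refinement. One small caution applies only to your PSD side check, and comes from the last equality in (\ref{reccum}): the correct formula is $\kappa_\ell=\bigl(\theta\,\frac{d}{d\theta}\bigr)^{\ell}\log Z(\theta)$, not $\theta^{\ell}\frac{d^{\ell}}{d\theta^{\ell}}\log Z(\theta)$ (the Poisson case $Z=e^\theta$ with $\kappa_2=\theta$ shows the difference), although additivity still follows by linearity either way.
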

\begin{proof}
The p.g.f. $G(w)$ of a r.v. $M$  equals the product $G(w)=G_1(w)G_2(w)$ of the p.g.f. of r.v. $N_1$ and $N_2$. Using  (\ref{cumgf}) we  have $\mathcal{K}_{M}(t) = \mathcal{K}_{N_1}(t) +\mathcal{K}_{N_2}(t)$, and comparing coefficients of $t^{\ell}$ gives $\kappa_{\ell}=\kappa_{r,1}+\kappa_{r,2}$.
\end{proof}

\begin{Example}~~~~
\begin{enumerate}[label=(\roman*), leftmargin=*]
\item The Poisson p.g.f. (\ref{poispgf}) with 
$Z(\theta)=e^{\theta}$ has $\kappa_{\ell}=\theta,\;\forall \ell \in \N$.  
\item The sum of independent r.v. $M=N_1+N_2$, where $N_1$ is concentrated on $n=0,1,2,\ldots$ and $N_2$ on $n=0,2,4,\ldots$, each Poisson distributed, has {\it Hermite distribution} \cite{JKK2005} with the p.g.f. $Q(w)=e^{\theta_1(w-1)}e^{\theta_2(w^2-1)}$. Its cumulant generating function and cumulants  are: 
\begin{equation} \label{Hercum}
    \mathcal{K}_N(t)= \theta_1(e^{t} - 1) + \theta_2(e^{2t} - 1)\;,\;\; \kappa_{\ell} = \kappa_{\ell}(\theta_1)+\kappa_{\ell}(\theta_2) =\theta_1+ 2^{\ell}\theta_2\;.
\end{equation}
The knowledge of $\log Z =\theta_1  + \theta_2$ alone is insufficient to determine $\kappa_1=\theta_1+2\theta_2$. Note that for the Poisson distribution concentrated on $n=0,j,2j,3j, j\in \N$ with the p.g.f. $\mathcal{P}(\theta w^j)/\mathcal{P}(\theta)$ we obtain $\kappa_{\ell}=j^{\ell} \theta$. 
\end{enumerate}
\end{Example}
\begin{Example}
Eq. \ref{reccum} yields the cumulants of the {\it Negative binomial distribution} (NBD) with parameters $k>0$\footnote{The geometric distribution on $\Z$ represents the case with $k=1$.} and $0<\theta<1$: 
\begin{eqnarray}\label{NBD}
 \frac{Z(\theta w)}{Z(\theta)}= \left(\frac{1-\theta }{1-\theta w}\right)^k\;,\;\;
 \log Z(\theta)  =-k \log (1\!-\!\theta)
 =k\log \left(1\!+\!\frac{\kappa_1}{k}\right)
\end{eqnarray}
 
\begin{equation} 
\label{NBD_cum}
\kappa_{1}= k\frac{\theta }{1-\theta }\;,\;\; \kappa_2=k \left(\frac{\theta ^2}{(1-\theta )^2}+\frac{\theta}{1-\theta }\right)\;, \ldots\;, \;\;\kappa_{\ell}= k\cdot f(\theta,r)\;,
\end{equation}
where $f(\theta,\ell)$ is a polynomial of degree $\ell^{\rm th}$  in variable $\kappa_1/k=\theta/(1-\theta)$.  
Therefore, at fixed $\theta$, the ratio of any two cumulants $\kappa_i/\kappa_j$ is independent of $k$. 
\end{Example}

\begin{Corollary}\label{Khatri} {\rm \cite{Khatri1959}} 
Every PSD is completely determined by the functional dependence of its first two moments (cumulants) on some parameter $\omega$. 
\end{Corollary}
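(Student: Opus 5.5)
The plan is to reduce the claim to the recurrence of Theorem \ref{PSDcum}. Suppose the PSD is parametrized as $G(w;\theta)=Z(\theta w)/Z(\theta)$, with $\theta=\theta(\omega)$ a smooth, strictly monotone function of a parameter $\omega$. Suppose also that we know the mean and variance as functions of $\omega$:
\[
\kappa_1=m(\omega), \qquad \kappa_2=v(\omega).
\]
Then $\theta(\omega)$ itself, and through it $Z$, can be recovered. Once $Z(\theta)$ is known up to a multiplicative constant, the whole distribution $Z(\theta w)/Z(\theta)$ is known, because the constant cancels in the ratio.

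\textbf{Step 1: recover $\theta(\omega)$.} Apply the chain rule to \eqref{reccum1} with $\ell=2$:
\[
v(\omega)=\kappa_2=\theta\frac{d\kappa_1}{d\theta}=\theta\,\frac{m'(\omega)}{\theta'(\omega)},
\qquad\text{hence}\qquad
\frac{d\log\theta}{d\omega}=\frac{m'(\omega)}{v(\omega)}.
\]
Integrating gives
\[
\theta(\omega)=C\exp\Bigl\{\int^{\omega}\frac{m'(s)}{v(s)}\,ds\Bigr\}.
\]
The constant $C$ only rescales $\theta$, which amounts to the reparametrization $Z(\cdot)\mapsto Z(C\,\cdot)$. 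This is exactly the freedom discussed after \eqref{PSDdef}: the family $\{Z(\theta w)/Z(\theta)\}$ as a set of distributions is unchanged by it.

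\textbf{Step 2: recover $Z$.} Use \eqref{reccum_m}, which says $d\log Z/d\log\theta=\kappa_1$. This gives
\[
\log Z(\theta)=\int^{\log\theta}m\bigl(\omega(e^{\tau})\bigr)\,d\tau+\mathrm{const},
\]
where $\omega(\theta)$ is the inverse of the function found in Step 1. The additive constant cancels in $Z(\theta w)/Z(\theta)$. Because $Z$ is analytic, knowing it on an interval of $\theta$ determines its power series, hence every $p_k\theta^k$, and therefore all the probabilities $r_n(\theta)$.

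\textbf{Step 3: higher cumulants (optional).} As a consistency check, all higher cumulants then follow from $\kappa_\ell=\theta\,d\kappa_{\ell-1}/d\theta$.

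\textbf{Main obstacle.} The delicate step is justifying that $\theta(\omega)$ is invertible and that $v>0$, so that the division in Step 1 is legitimate. For a nondegenerate PSD, $v=\theta\,d\kappa_1/d\theta>0$ because it is a variance, so $\kappa_1$ is strictly increasing in $\theta$. I would therefore first take $\omega=\kappa_1$ itself, where invertibility holds automatically. The general case then follows by composing with the map $\omega\mapsto\kappa_1$, provided $m$ is monotone. If $m$ is not monotone, the statement should be read locally, on intervals where $m'\neq0$; the result then extends by analyticity.
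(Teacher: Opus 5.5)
Your proposal is correct and follows essentially the same route as the paper. Both use $\kappa_2=d\kappa_1/d\log\theta$ to obtain $d\log\theta/d\omega=y_1'/y_2$, and then $\kappa_1=d\log Z/d\log\theta$ to recover $\log Z$ up to an additive constant that cancels in $Z(\theta w)/Z(\theta)$; the paper integrates $d\log Z/d\omega=(y_1/y_2)\,y_1'$ directly in $\omega$. Your extra points are ones the paper leaves implicit: the harmless multiplicative constant in $\theta(\omega)$, the role of $v>0$ and the monotonicity of $m$ in justifying the change of variables, and the analytic extension of $Z$ from an interval.
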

\begin{proof}
Let $\kappa_{1}\!=\!y_{1}(\omega)$ and $\kappa_{2}\!=\!y_{2}(\omega)$ be such functions. Then from (\ref{reccum1}) it follows that
\[
\kappa_2=y_2(\omega)=\frac{d\kappa_1}{d\log \theta}= 
\frac{d y_1(\omega)}{d\omega}\frac{d\omega}{d\log\theta}\;,\;\;\; 
\frac{d\log \theta}{d \omega}=\frac{1}{y_2(\omega)}\frac{dy_1(\omega)}{d\omega}
\]
and from (\ref{reccum_m}) that
\[
\kappa_1=y_1(\omega)=\frac{d\log Z(\theta)}{d \log \theta}\;,\;\;\;
\frac{d \log Z(\omega)}{d\omega}=\frac{d\log Z(\theta)}{d\log \theta}\frac{d\log \theta}{d \omega}=
\frac{y_1(\omega)}{y_2(\omega)}\frac{d y_1(\omega)}{d\omega}\;.
\]
Consequently, the dependence of \(Z(\omega)\) on \(\omega\) is entirely characterized by the two functions \(y_{1}(\omega)\) and \(y_{2}(\omega)\). Integrating \(\log Z(\omega)\) with respect to \(\omega\) and subsequently exponentiating the resulting expression yields a function of the form \(c_{1} Z(\omega)\), where \(c_{1}\) is a constant of integration. When we construct the probability generating function \(Z(\theta w)/Z(\theta)\), this multiplicative constant cancels, thereby yielding the asserted result and completing the proof.\end{proof}
\begin{Example}
    For the NBD p.g.f. with the cumulants (\ref{NBD_cum}) and $\omega=\theta$ we obtain:
    \[
    \frac{d \log Z(\theta)}{d\theta}=\frac{\kappa_1}{\kappa_2}\frac{d \kappa_1}{d\theta}=
    \frac{\kappa_1}{\theta}=\frac{k}{1-\theta};\;\;\;  Z(\theta)=\frac{c_1}{(1-\theta)^k};\;\;\;\frac{Z(\theta w)}{Z(\theta)}=\left(\frac{1-\theta}{1-\theta w}\right)^k\;.
    \]
\end{Example}
\subsection{Sum of a Random Number of Random Variables: Compound Distributions}
Consider the random sum \(S_K = N_1 + N_2 + \ldots + N_K\), where \(\{N_i\}_{i=1}^K\) are independent and identically distributed integer-valued random variables, \(N_i \in \mathbb{Z}\), each with probability mass function \(\mathbb{P}(N=j)=q_j\) and corresponding probability generating function (p.g.f.) 
\[
Q(w) = \sum_{j} q_j w^j.
\]
If \(K\) is fixed (nonrandom), the p.g.f. of the random variable \(S_K\) is, by (\ref{eq: prodpgf}), given by
\[
H(w) = [Q(w)]^K.
\]

Now assume that the number of summands \(K\) is itself a random variable with probability mass function \(r_k = \mathbb{P}(K = k)\) and p.g.f.
\[
R(w) = \sum_{k} r_k w^k.
\]
Then the p.g.f. of \(S_K\) is
\begin{equation}
\label{eq: comp}
H(w) = \sum_{k} r_k [Q(w)]^k = R(Q(w)) := R \circ Q \;.
\end{equation}
The operator \(\circ\) in (\ref{eq: comp}) denotes the composition (compounding) of two probability generating functions. In this setting, the random variable \(S_K\) is said to have a {\it compound distribution}. The generating function \(R(w)\) is referred to as the p.g.f. of the {\it compounding} (or {\it primary}) distribution, while \(Q(w)\) is the p.g.f. of the {\it compounded} (or {\it secondary}) distribution.

Let us observe that, under the assumptions $\E N<\infty$ and $\E K<\infty$, Eq.~\ref{eq: comp} yields the simple identity $\E S_K = (\E K)\cdot(\E N)$. This relation shows that, in expectation, the compounding of two independent random variables $K$ and $N$ corresponds to the product of their respective mean values.

An important case of Eq. \ref{eq: comp} arises when $H(w)=Q_{a_1}\circ Q_{a_2}=Q_{a_2}\circ Q_{a_1}$, where $a_{1,2}$ are parameters, and $Q_{a_i}\circ Q_{a_j}:=Q(Q(w;a_i);a_j)$ \footnote{In the following we write, e.g., $P_{\lambda}\circ\mathcal{S}_{\gamma}$ as shorthand of $P(w;\lambda)\circ \mathcal{S}(w;\gamma)$.}.  Examples of p.g.f.{}s forming an additive semi-group with $Q_{a_1}\circ Q_{a_2}= Q_{a_1 +a_2}$ are  listed in Table \ref{tab:additive}.
\begin{table}[hbt]
    \centering
    \begin{tabular}{|c|c|c|c|c|}
    \hline
       {\bf Distribution} & {\bf p.g.f.} & {\bf additive parameter} & {\bf ref.} \\[1.ex]
        \hline  \hline  
        Bernoulli & $\mathcal{B}(w; a)=1-a +a w$ & $a$ &  - \\
        \hline \hline
      Geometric & $\mathcal{G}(w;\theta)=(1-\theta)/(1-\theta w)$ & $\log(1-\theta)$ & \cite{JKK2005}  \\
        \hline \hline
      Sibuya & $\mathcal{S}(w;\gamma)=1-(1- w)^{\gamma}$ & $\log \gamma$ & \cite{art:Sibuya}  \\
         \hline \hline
         scaled Sibuya  & $\mathcal{S}(w;\gamma,\lambda)=1-\lambda(1- w)^{\gamma}$ & $\log \gamma$ & \cite{Christoph2000,Klebanov2023-xj} \\
         \hline
    \end{tabular}
    \vspace{.2cm}\caption{p.g.f.{}s forming an additive semi-group under compounding operation.}
    \label{tab:additive}
\end{table}

  Consider a r.v. $M \in \Z $ whose p.g.f. can be $\forall k \in\N$ written as  $G(w) = Q_1\circ Q_2\circ \ldots \circ Q_k$, where $Q_i(w)=Q(w)$ are the p.g.f. of \iid{}r.v. $N_i \in \Z, i=1,\ldots k$. In this case the non-integer moment $\E M^r,~ r=1/k$   represents the geometric mean
\begin{equation}\label{momfrac}
\E M^{1/k}=\E (\sqrt[k]{N_1\cdot N_2\cdot \ldots \cdot N_k})\,. 
\end{equation}
Eq. \ref{momfrac} holds regardless of whether $\E N$ is finite. 
\subsection{Infinitely divisible distributions} \label{infdsub}
The r.v. $N\in \Z$ is $n$ {\it divisible} if it can be expanded into a sum of $n$ \iid{}r.v. $N_1+N_2+\ldots N_n,\; N_i \in \Z$, each with the same p.g.f. $G(w)$, so that the  p.g.f. $H(w)$ of $N$ satisfies $H(w)=[G(w)]^n$.

By generalizing this concept, we say that an r.v. with the p.g.f. $H(w)$ is {\it infinitely divisible}  if $\forall n\in \N$  its $n^{\rm th}$ root $(H(w))^{1/n}$ is also the p.g.f.. This is equivalent to saying that $H(w)$ is the p.g.f. of a {\it compound Poisson distribution} \cite{Feller1} with a secondary p.g.f. $G(w)$: 
\begin{equation}
\label{eq: cpd}
H(w;\lambda)=P(G(w);\lambda)\,,~~ P(w;\lambda)=e^{\lambda(w-1)}\,,~
G(w)=\sum_{n>0} g_n w^n=1+\frac{\log H(w)}{\lambda}\,,
\end{equation}
 \begin{Remark}
 Let us note that $G(0)=0$ implies  $\lambda=-\log H(0)$. 
 For secondary p.g.f. with $G(0)=1-a>0,\; a\in(0,1)$,  there always exists a p.g.f. $G_0(w)$ such that 
\begin{equation}\label{zero-ext-compP}
P(G(w);\lambda)=P(G_0(w);a\lambda))\;.    
\end{equation}
\end{Remark}

\begin{Proposition} {\rm \cite{StvH2004}} \label{infd_nec}
Necessary conditions for the infinite divisibility of a r.v.  $N\in \Z $  with p.m.f. $p_n$ and p.g.f. $H(w)$ are
\begin{enumerate} 
\item $p_0 > 0$.
\item  $\forall n \in \N,\; p_n < 1/e $. 
\item $\log H(w)$ is analytic for $|w|\leq 1$ and has non-negative power series expansion coefficients at $w=0$. 
\end{enumerate}
\end{Proposition}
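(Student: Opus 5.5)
The plan is to work from the compound Poisson representation (\ref{eq: cpd}). Since $H$ is infinitely divisible, we may write $H(w)=\exp\{\lambda(G(w)-1)\}$ with $G(w)=\sum_{n\ge 0} g_n w^n$ a p.g.f. By the Remark, we may pass to the zero-truncated secondary p.g.f. $G_0$ and so assume $g_0=0$ and $\lambda>0$ finite. If one prefers not to take the equivalence with the compound Poisson form as given, it can be recovered in the usual way: the roots $H^{1/n}$ are p.g.f.{}s, so $n(H^{1/n}(w)-1)\to \log H(w)$ exhibits $\log H$ as a limit of functions whose coefficients outside the constant term are non-negative.

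\emph{Condition (1).} Putting $w=0$ gives $p_0=H(0)=e^{-\lambda}>0$. One can also argue directly: if $p_0=0$, let $k\ge1$ be the smallest index with $p_k>0$. Then each summand in $N=N_1+\dots+N_n$ satisfies $N_i\ge0$, so $N\ge n\cdot\min N_i$, while the event $\{N=k\}$ forces some summand to be $0$ once $n>k$. That gives $H^{1/n}(0)>0$ and therefore $H(0)>0$, a contradiction.

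\emph{Condition (3).} Since $\log H(w)=\lambda(G(w)-1)=-\lambda+\lambda\sum_{n\ge1} g_n w^n$ with $g_n\ge0$ and $\sum g_n=1$, the series converges absolutely on $|w|\le1$. So $\log H$ is analytic in the open disk, continuous up to the closed disk, and its Taylor coefficients beyond the constant term are $\lambda g_n\ge0$. Equivalently, $H$ has no zeros in $|w|\le1$, because $|H(w)|\ge e^{-2\lambda}$.

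\emph{Condition (2).} This is the main step. Expanding the exponential gives
\[
p_n=e^{-\lambda}\sum_{j\ge0}\frac{\lambda^j}{j!}\,g^{*j}_n,
\]
where $g^{*j}$ is the $j$-fold convolution. The goal is the bound $p_n<1/e$ for $n\ge1$. The approach is to use $H(0)=e^{-\lambda}$ together with $\sum_n p_n=1$.

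If $\lambda\le 1$, then $p_0=e^{-\lambda}\ge 1/e$, so $p_n\le 1-p_0\le 1-1/e$. This is not sufficient on its own, so a sharper argument is needed. I would instead use the coefficient bound $p_n\le e^{-\lambda}\sum_{j\ge1}\frac{\lambda^j}{j!}\,g^{*j}_n$, and then the crude estimate $g^{*j}_n\le 1$ for $j\ge1$. Because the support of $G$ is in $\{1,2,\dots\}$, $g^{*j}_n=0$ for $j>n$, which limits the range of $j$.

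The cleaner route is the known argument of Steutel–van Harn, and that is what I would try first. Write $p_n\le \max_j e^{-\lambda}\frac{\lambda^j}{j!}\cdot\sum_j g^{*j}_n$. Alternatively, since the $N$-distribution is a mixture over $j$ of $g^{*j}$ with Poisson weights $\pi_j=e^{-\lambda}\lambda^j/j!$, for $n\ge1$ we get $p_n\le \sup_{j\ge1}\pi_j\cdot\#\{j:g^{*j}_n>0\}$ in the worst case. That bound is too crude, so the refinement is this: apply the claim to $H^{1/m}=\exp\{(\lambda/m)(G-1)\}$, which is also infinitely divisible, and compare. I expect the hard step to be producing the sharp constant $1/e$. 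The plan there is to reduce to showing $\sup_{\lambda>0}\lambda e^{-\lambda}=1/e$ bounds the $j=1$ contribution, and then to control the total mass at $n$ by a monotonicity or convexity argument in $\lambda$, following \cite{StvH2004}.
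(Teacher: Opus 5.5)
Your arguments for (1) and (3) are fine. You are also right to read (3) as analyticity in the open disk with continuity up to $|w|=1$: for the discrete stable law, $\log H(w)=-\lambda(1-w)^{\gamma}$ is not analytic at $w=1$.

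The gap is in (2). You actually write down the right inequality,
\[
p_n\le \max_{j\ge1}\pi_j(\lambda)\sum_{j\ge1}g^{*j}_n,\qquad \pi_j(\lambda)=e^{-\lambda}\lambda^j/j!,
\]
but then drop it. You replace it with the cruder count $\#\{j:g^{*j}_n>0\}$ and a vague plan (apply the claim to $H^{1/m}$, then use monotonicity or convexity in $\lambda$), which does not lead to a proof as stated. Two ingredients are missing.

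\textbf{Disjointness of renewal events.} Once $g_0=0$, the partial sums $S_j=M_1+\dots+M_j$ of i.i.d.\ secondary variables are strictly increasing in $j$. Hence the events $\{S_j=n\}$, $j\ge1$, are pairwise disjoint, and
\[
\sum_{j\ge1}g^{*j}_n=\sum_{j\ge1}\mathbb{P}(S_j=n)=\mathbb{P}\bigl(S_j=n \text{ for some } j\bigr)\le 1 .
\]

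\textbf{A uniform bound on the Poisson weights.} You need $\pi_j(\lambda)\le 1/e$ for all $j\ge1$ and all $\lambda$; your remark that $\sup_\lambda \lambda e^{-\lambda}=1/e$ only handles $j=1$. In general $\pi_j(\lambda)\le\pi_j(j)=j^je^{-j}/j!$, and $\pi_{j+1}(j+1)/\pi_j(j)=(1+1/j)^je^{-1}<1$. Therefore $\pi_j(\lambda)\le\pi_1(1)=1/e$.

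Together these give $p_n\le 1/e$ in one line. This is exactly the paper's argument, which uses the bound $\sum_{n}\mathbb{P}(S_n=k)\le1$ without comment.

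Separately, you cannot prove the strict inequality you set as your goal, because it is false. The Poisson law with $\lambda=1$ is infinitely divisible and has $p_1=e^{-1}$. The argument above also characterizes equality. With $g_0=0$, $p_n=1/e$ forces every contributing $j$ to satisfy $\pi_j(\lambda)=1/e$, hence $j=1$ and $\lambda=1$, and it forces $g_n=1$. So equality holds precisely for $H(w)=e^{w^n-1}$. The correct necessary condition is $p_n\le 1/e$ for $n\ge1$, and that is all the paper's proof establishes.
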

While conditions $(I)$ and  $(III)$ are immediate,  condition $(II)$ requires more insight\cite{StvH2004}. For the Poisson distribution, the following inequality is true 
\[ \forall \lambda>0 \;\;\; \mathbb{P}(N=n) = \frac{\lambda^n}{n!}e^{-\lambda}\leq \frac{n^n}{n!}e^{-n}=:\pi_n\;.
\]
Since $(1+1/n)^n \uparrow e$ as $n\to\infty$, the sequence $\pi_n$ is non-increasing, so $\mathbb{P}(N=n)\leq \pi_1=1/e$. Therefore, the p.m.f. $p_k,\; k\in \N$, of the random sum $S_N=\sum_{i=1}^N M_i$ of $N$ \iid{}r.v.  $M_i\in \Z$, where $N$ is Poisson distributed satisfies
\[
p_k=\sum_{n=1}^k \mathbb{P}(N=n)\mathbb{P}(S_n=k)\leq \frac{1}{e}\sum_{n=1}^k\mathbb{P}(S_n=k)\leq \frac{1}{e}\;.
\]
Let us add that if $R(w)$ is an infinitely divisible p.g.f. and $G(w)$ is an arbitrary p.g.f. on $\Z $, then $H(w)=R(G(w))$ is also an infinitely divisible p.g.f.

Some examples of infinitely divisible p.g.f. are given in Table \ref{tab:comppois}.

\begin{table}[h]
    \centering
    \begin{tabular}{|c|c|c|c|c|}
    \hline
         {\bf Name} & {\bf Secondary p.g.f.} & {\bf Compound p.g.f.} & {\bf ref.} \\
        \hline  \hline  
       Logarithmic & $\mathcal{L}(w;\theta)=\log(1-\theta w)/\log(1-\theta)$ & Negative Binomial & \cite{JKK2005}  \\
        \hline \hline
       Sibuya & $\mathcal{S}(w;\gamma)=1-(1- w)^{\gamma}$ & Discrete stable & \cite{StvH2004}  \\
         \hline \hline
       Shifted Bernoulli  & $w\mathcal{B}(w;a)=(1-a) w+ a w^2$ & Hermite distribution & \cite{JKK2005} \\
         \hline
    \end{tabular}
   \vspace{.2cm} \caption{Secondary p.g.f. of some compound Poisson distribution.}
    \label{tab:comppois}
\end{table}

\begin{Proposition} {\rm \cite{Feller1}}
\label{Pmultiplets}
Let $H(w;\lambda)=\exp\{\lambda [G(w)\!-\!1]\}$ be the p.g.f. of  infinitely divisible r.v. $M\in \mathbb{Z}_+$, and $G(w)=\sum_{j>0}^{J}g_j w^j$ its secondary p.g.f. Then $M$ can be written as the sum $M \!= \!N_1\!+ \! N_2 \!+\! \ldots \!+\! N_J$ of 
Poisson-distributed singlets, doublets, $\ldots, J$-tuplets. 
\end{Proposition}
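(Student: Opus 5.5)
The plan is to factor the p.g.f. $H(w;\lambda)$ term by term in the exponent and then read each factor as a Poisson law living on a lattice. Since $G(w)=\sum_{j=1}^{J} g_j w^j$ with $g_j\ge 0$ and $\sum_j g_j=G(1)=1$, write
\[
\lambda\,[G(w)-1]=\sum_{j=1}^{J}\lambda g_j\,(w^j-1),
\]
so that
\[
H(w;\lambda)=\prod_{j=1}^{J}\exp\{\lambda_j (w^j-1)\},\qquad \lambda_j:=\lambda g_j\ge 0 .
\]
The first step is exactly this identity. It uses only $\sum_j g_j=1$, which lets the constant $-\lambda$ be split as $-\sum_j\lambda_j$.

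Next, identify each factor. The function $\exp\{\lambda_j(w^j-1)\}=P(w^j;\lambda_j)$ is the p.g.f. of $N_j:=jK_j$, where $K_j$ is Poisson with intensity $\lambda_j$ by (\ref{poispgf}). This is just the substitution $w\mapsto w^j$ in $\mathbb{E}w^{K_j}$. As in the Hermite example, $N_j$ is concentrated on $0,j,2j,\ldots$. It is the total number of particles carried by $K_j$ Poisson-distributed $j$-tuplets: singlets for $j=1$, doublets for $j=2$, and so on. If some $g_j=0$, the factor is $1$ and $N_j\equiv 0$, which is the degenerate Poisson case.

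Finally, take $K_1,\ldots,K_J$ mutually independent. By (\ref{eq: prodpgf}), extended by induction to $J$ factors, the p.g.f. of $N_1+\cdots+N_J$ is the product above, which equals $H(w;\lambda)$. Uniqueness of the p.m.f. given the p.g.f. (\ref{epgf}) then yields $M\preq N_1+\cdots+N_J$.

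No step is a genuine obstacle. The only point requiring care is the normalization $\sum_j g_j=1$, which ensures that the product has no leftover constant and that each factor is a true p.g.f. If the sum is allowed to run to $J=\infty$, the same argument goes through: the product converges for $|w|\le 1$ because $\sum_j\lambda_j=\lambda<\infty$, and the resulting sum of independent $N_j$ is almost surely finite because $\sum_j \mathbb{P}(K_j>0)\le \sum_j\lambda_j<\infty$, so only finitely many $N_j$ are nonzero by Borel--Cantelli.
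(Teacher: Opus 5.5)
Your proof is correct and follows the paper's argument: you use $\sum_j g_j=1$ to split $\lambda[G(w)-1]=\sum_j \lambda g_j(w^j-1)$, then recognize each factor $e^{\lambda g_j(w^j-1)}=P(w^j;\lambda_j)$ as a Poisson law on the lattice $\{0,j,2j,\ldots\}$, exactly as in (\ref{pois2comb}). Your explicit use of independence and of uniqueness of the p.g.f., and your Borel--Cantelli extension to $J=\infty$, are correct details that the paper leaves implicit.
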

\begin{proof}
If $P(w;\lambda_j)$ is the p.g.f. of Poisson distribution then $P(w^j;\lambda_j)$ is the p.g.f. of Poisson distribution concentrated at $j,2j,3j,\ldots$. Their product is:
\begin{equation}  \label{pois2comb}
 H(w;\lambda)=\exp\left[\lambda \left(\sum_{j>0}^{J}g_j w^j\!-\!1\right)\right] \!=\!\prod_{j>0}^{J}e^{\lambda g_j (w^j \!-\!1)}\! =\! 
 \prod_{j>0}^{J}P(w^j,\lambda_j) \;. 
\vspace{-.6cm}
\end{equation}
\end{proof}

Note that the coefficients $\lambda_j=\lambda g_j$ in (\ref{pois2comb}) represent the mean number of clusters $\langle n_j \rangle$, each consisting exactly of  $j$ particles. These can be identified with the combinants to be introduced in Section \ref{comsec}.  

The following proposition establishes the connection between the cumulants of an infinitely divisible distribution and the moments of its secondary distribution.

\begin{Proposition}\label{CPDcum}
Let $\langle j^r \rangle=\E(J^r)$ be the moments of r.v. $J\in \Z $ with the p.m.f. $g_j$ and p.g.f. $G(w)$. Then the r.v. $N$ with the p.g.f. $H(w;\lambda)=\exp\{\lambda[G(w)-1]\}$ has the cumulants $\kappa_r=\lambda \langle j^r \rangle$. Moreover, if $\langle j^r \rangle<\infty$ then $0\leq \kappa_r<\infty$.
\end{Proposition}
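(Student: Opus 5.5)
The plan is to work directly with the cumulant generating function (\ref{cumgf}). Set $w=e^t$. For $H(w;\lambda)=\exp\{\lambda[G(w)-1]\}$ the logarithm is explicit:
\[
\mathcal{K}_N(t)=\log H(e^t;\lambda)=\lambda\,[G(e^t)-1]=\lambda\,[\mathcal{M}_J(t)-1],
\]
where $\mathcal{M}_J(t)=G(e^t)=\E e^{tJ}$ is the moment generating function (\ref{mgf}) of $J$. By (\ref{mgf}) we have $\mathcal{M}_J(t)=\sum_{r\ge 0}\langle j^r\rangle t^r/r!$ with $\langle j^0\rangle=1$. The constant term therefore cancels against $-1$, and
\[
\mathcal{K}_N(t)=\sum_{r\ge1}\lambda\langle j^r\rangle\,\frac{t^r}{r!}.
\]
Comparing coefficients of $t^r/r!$ with (\ref{cumgf}) gives $\kappa_r=\lambda\langle j^r\rangle$.

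An alternative route is to use Proposition \ref{Pmultiplets} and Corollary \ref{addcum}. This writes $N$ as a sum of independent Poisson $j$-tuplets, each with p.g.f. $P(w^j;\lambda g_j)$. The earlier Hermite example shows that the $j$-tuplet contributes cumulants $j^r\lambda g_j$. Summing over $j$ gives $\lambda\sum_j j^r g_j=\lambda\langle j^r\rangle$. This route needs a limiting argument when $G$ has infinitely many terms, so I would present the generating-function argument as the main proof and mention this one as a heuristic check.

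The main obstacle is justifying the termwise expansion when $\mathcal{M}_J$ need not be analytic at $t=0$, i.e. when only finitely many moments of $J$ exist. Under the hypothesis $\langle j^r\rangle<\infty$, I would not expand into a power series. Instead I would differentiate $r$ times at $t=0^-$ (from the left, using $|w|\le1$). The function $G(e^t)$ is $r$ times differentiable there, with $\frac{d^r}{dt^r}G(e^t)\big|_{t=0}=\sum_j j^r g_j$, by monotone or dominated convergence, since $j^r g_j e^{tj}\le j^r g_j$ for $t\le 0$. Because $\mathcal{K}_N=\lambda(G(e^t)-1)$ is an affine function of $G(e^t)$, no Faà di Bruno expansion is needed, and the identity $\kappa_r=\lambda\langle j^r\rangle$ follows for every order for which the moment is finite.

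The bound $0\le\kappa_r<\infty$ is then immediate. We have $\lambda>0$, $g_j\ge0$ and $j\ge0$, so $\langle j^r\rangle\ge0$, and $\langle j^r\rangle$ is finite by assumption.
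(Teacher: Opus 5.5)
Your proposal is correct and follows the paper's own proof, which likewise writes $\log H(e^t;\lambda)=\lambda\bigl(G(e^t)-1\bigr)$ and differentiates $r$ times termwise at $t=0$ to obtain $\lambda\sum_j j^r g_j=\lambda\langle j^r\rangle$. Your left-sided differentiation at $t=0^-$ with dominated convergence supplies the justification for the termwise step when only finitely many moments of $J$ exist, which the paper leaves implicit.
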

 \begin{proof}
Expanding the cumulant generating function (\ref{cumgf}) gives:
\begin{equation}
 \kappa_r=\frac{d^r \log H(w;\lambda)}{dt^r}\Big |_{t=0} =\frac{d^r (\lambda \sum_{j>0}g_j e^{jt})}{dt^r}\Big |_{t=0}=  \lambda \sum_{j>0}j^r g_j e^{jt}\Big |_{t=0}
 = \lambda \langle j^r \rangle \,.
\vspace{-.8cm}
\label{infdcum}
\end{equation}        
\end{proof}

Expressing the moments of $J$ as $\langle j^r \rangle = \sum_{j=0}^r S(r,j)\mu_{[j]}(1)$ \cite{JKK2005}, where $S(r, i)$ are the Stirling numbers of the second kind  (see Eq. \ref{frac2norm} and $\mu_{[i]} = G^{(i)}(1)$ are the factorial moments of the secondary distribution, we obtain
\begin{equation}\label{cum2fac}
\kappa_r= \lambda \langle j^r \rangle = \lambda \sum_{i=0}^r S(r,i)\mu_{[i]}\;.    
\end{equation}
By exploiting the connection between the moments $\mu_r=\E N^r$ and the cumulants $\kappa_r$ of infinitely divisible distribution (Proposition \ref{CPDcum}), we obtain 

\begin{equation}\label{mom2cum}
 \mu_{r+1}=\sum_{i=0}^r \binom{r}{i}\mu_i\kappa_{r-i+1} =\lambda\sum_{j=i}^r \binom{r}{i}\mu_i \langle j^{r-i+1} \rangle\;.
\end{equation}   

Expressing the p.g.f. as a compound Poisson distribution can sometimes be a difficult task. Another way to prove infinite divisibility is  to rewrite (\ref{eq: cpd}) as 
\begin{equation}
\label{eq: cpdr}
H(w)=\exp\left[-\sum_{n=0}^{\infty}\frac{r_n}{n+1}(1-w^{n+1}) \right]\;,\;\; r_n:=\lambda(n+1)g_{n+1}\;;\; k\in\Z;,
\end{equation}
and construct the generating function of the {\it canonical sequence}   $\{r_n\}$ defined as:  
\begin{equation}\label{Rw}
 R(w)\!=\! \sum_{n=0}^{\infty}r_n w^n \!=\!\frac{d}{dw}\log H(w)\!=\!\frac{H^{(1)}(w)}{H(w)}\;.
\end{equation}
Solving the differential equation in (\ref{Rw})  and using the fact that a real-valued function $R(w) \in [ 0,1)$ is completely monotone iff there exists a non-negative sequence $\{r_k\}, k \in \Z$, for which $R(w)$ is the generating function, one can formulate the following condition for infinite divisibility.

\begin{Theorem}{\rm \cite{StvH2004}}\label{HexpR}
The p.g.f. $H(w)\!=\! \exp \Big [\! - \! \int_w^1 R(x) dx  \Big ]$ is infinitely divisible iff the generating function $R(x)$ (\ref{Rw}) is absolutely monotone. 
\end{Theorem}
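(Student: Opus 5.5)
We prove the two implications separately, working with the canonical sequence $\{r_n\}$ and the compound Poisson representation (\ref{eq: cpd}). We use the normalization $H(1)=1$ and the requirement $p_0=H(0)>0$ from Proposition \ref{infd_nec}. The latter guarantees that $\log H$ is well defined near $w=0$ and on $[0,1]$, so that $R=H'/H$ makes sense there.

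\emph{Necessity.} Suppose $H$ is infinitely divisible. By (\ref{eq: cpd}) we may write $H(w)=\exp\{\lambda[G(w)-1]\}$, where $G(w)=\sum_{n>0}g_nw^n$ has $g_n\ge 0$. Differentiating $\log H$ gives
\[
R(w)=\lambda G'(w)=\sum_{n\ge 0}\lambda(n+1)g_{n+1}w^n=\sum_{n\ge0} r_nw^n ,
\]
which is exactly (\ref{Rw}) with $r_n\ge 0$. A power series with non-negative coefficients is absolutely monotone on $[0,1)$, since every derivative is again such a series. Integrating $R$ from $w$ to $1$ and using $\log H(1)=0$ recovers $H(w)=\exp[-\int_w^1R(x)\,dx]$.

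\emph{Sufficiency.} Suppose $R$ is absolutely monotone on $[0,1)$. By Bernstein's theorem, in the form quoted before the statement, $R(x)=\sum_{n\ge0}r_nx^n$ with all $r_n\ge0$, and the series converges on $[0,1)$. Integrating term by term (monotone convergence) gives
\[
\int_w^1R(x)\,dx=\sum_{n\ge0}\frac{r_n}{n+1}\bigl(1-w^{n+1}\bigr).
\]
Evaluating at $w=0$ gives $\lambda:=\sum_n r_n/(n+1)=-\log H(0)<\infty$, so this sum is finite. Now set $g_{n+1}:=r_n/(\lambda(n+1))$. These numbers are non-negative and sum to one, so $G(w)=\sum g_{n+1}w^{n+1}$ is a p.g.f. with $G(0)=0$, and $H=P(G(w);\lambda)$, which is representation (\ref{eq: cpdr}). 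Consequently, for every $m$ the function $H^{1/m}=P(G(w);\lambda/m)$ is a compound Poisson p.g.f., and hence a p.g.f. This proves infinite divisibility.

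\emph{Main obstacle.} The delicate step is justifying convergence and the endpoint $w=1$. We must know that $\int_0^1 R<\infty$, which follows from $H(0)>0$, and that $H(w)\to1$ as $w\to1^-$. Both follow from monotone convergence, because all terms are non-negative. The remaining subtlety is the direction from absolute monotonicity of a real function on $[0,1)$ to a power series with non-negative coefficients; this is Bernstein's theorem, which we take as given. Finally, we would note that in the statement ``absolutely monotone'' should replace the phrase ``completely monotone'' used in the preceding text. This is a matter of terminology only, since the property actually used is non-negativity of all derivatives of $R$ on $[0,1)$.
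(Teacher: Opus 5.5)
Your proposal is correct and follows essentially the paper's route: expand $R$ into a power series with non-negative coefficients via the Bernstein-type fact quoted before the theorem (where ``completely'' should indeed read ``absolutely'' monotone), then integrate term by term to obtain the compound Poisson form (\ref{eq: cpdr}), adding the necessity direction and the convergence details that the paper's one-line computation leaves implicit. One small fix: in the sufficiency direction $H(0)>0$ should not be taken from Proposition \ref{infd_nec}, which gives necessary conditions for what you are proving; it follows from your hypotheses, since $R$ is finite and non-decreasing on $[0,1)$ and $\int_w^1 R=-\log H(w)<\infty$ for every $w\in(0,1)$, so $\int_0^1 R<\infty$.
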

\begin{proof}
\[
\log H(w)=-\int_w^1 R(x) dx =\sum_{n=0}^{\infty}r_n \int_w^1 x^n= -\sum_{n=0}^{\infty}\frac{r_n}{n+1}(1-w^{n+1})
\]
\end{proof}
Accordingly, the logarithm of the infinitely divisible p.g.f. $\log H(w)$  must be an analytical function with positive power expansion coefficients  $\forall n \in \N$.

Another approach is to prove the infinite divisibility by equating the coefficients on both sides in the power series expansions of $H^{(1)}(w)=R(w)H(w)$. 

 \begin{Theorem}{\rm \cite{StvH2004}} \label{infd_suf}
The necessary and sufficient condition for the infinite divisibility of the r.v.  $N\in \Z $  with p.m.f. $p_n$  and p.g.f. $H(w)$ is the existence of a positive sequence $\{r_n\}$ such that
\begin{equation}\label{r_k}
\sum_{n=0}^{\infty}\frac{r_n}{n\!+\!1}\!=\!-\log(p_0) < \infty,\;\;  (n\!+\!1)p_{n+1}\! =\!\sum_{k=0}^{n}p_k r_{n-k} 
\end{equation}
\end{Theorem}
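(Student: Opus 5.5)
The plan is to reduce everything to the compound Poisson representation (\ref{eq: cpd}) in the canonical form (\ref{eq: cpdr}), and to read the recursion in (\ref{r_k}) as the coefficient form of the identity $H^{(1)}(w)=R(w)H(w)$ from (\ref{Rw}). Throughout I take ``positive'' to mean non-negative, $r_n\ge 0$.

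\emph{Necessity.} Assume $N$ is infinitely divisible. By Proposition \ref{infd_nec}, $p_0>0$. By (\ref{eq: cpd}), $H(w)=\exp\{\lambda[G(w)-1]\}$ for some p.g.f. $G(w)=\sum_{n>0}g_nw^n$; the Remark after (\ref{eq: cpd}) lets us take $G(0)=0$.
\begin{itemize}
\item Set $r_n:=\lambda(n+1)g_{n+1}\ge 0$, as in (\ref{eq: cpdr}).
\item Evaluating (\ref{eq: cpdr}) at $w=0$ gives $\log p_0=\log H(0)=-\sum_n r_n/(n+1)$. Since $p_0>0$, this sum is finite.
\item Both sides of $\log H(w)=-\sum_n \frac{r_n}{n+1}(1-w^{n+1})$ converge absolutely for $|w|<1$, so we may differentiate termwise. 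This gives $H^{(1)}(w)=R(w)H(w)$ with $R(w)=\sum_n r_nw^n$.
\item Comparing coefficients of $w^n$: the left side contributes $(n+1)p_{n+1}$, and the right side gives the Cauchy product $\sum_{k=0}^n p_k r_{n-k}$. This is exactly (\ref{r_k}).
\end{itemize}

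\emph{Sufficiency.} Suppose a sequence $r_n\ge 0$ satisfies (\ref{r_k}). Since $p_0>0$, the recursion determines $r_0,r_1,\dots$ successively from the $p_n$: solve for $r_n$, which is multiplied by $p_0\ne 0$. Define
\[
L(w):=\log p_0+\sum_{n\ge 0}\frac{r_n}{n+1}w^{n+1}.
\]
The finiteness condition in (\ref{r_k}) makes this series converge absolutely on $|w|\le 1$, and it gives $L(1)=0$. Put $\tilde H:=e^{L}$. Then $\tilde H(0)=p_0$ and $\tilde H^{(1)}=R\tilde H$. Expanding this ODE in coefficients shows that the Taylor coefficients of $\tilde H$ satisfy the same recursion as $p_n$ with the same initial value $p_0$. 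By induction on $n$ the coefficients coincide, so $\tilde H=H$.

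It then follows that
\[
H(w)=\exp\{\lambda[G(w)-1]\},\qquad \lambda:=-\log p_0=\sum_n\frac{r_n}{n+1},\qquad G(w):=\lambda^{-1}\sum_n\frac{r_n}{n+1}w^{n+1}.
\]
Here $G$ has non-negative coefficients and $G(1)=1$, so it is a genuine p.g.f. (the case $\lambda=0$ is the degenerate $N\equiv 0$). Hence $H$ is compound Poisson. Finally, $H^{1/n}=\exp\{(\lambda/n)[G-1]\}$ is again a p.g.f. for every $n$, so $N$ is infinitely divisible. Alternatively, one can invoke Theorem \ref{HexpR}, since $R$ has non-negative coefficients and is therefore absolutely monotone.

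\emph{Main obstacle.} The delicate point is the analytic bookkeeping at the boundary $w=1$. We must make sure the canonical series converges there, so that $L(1)=0$ and $G$ is properly normalized. We must also make sure the formal identification $\tilde H=H$ is legitimate. I would first handle both on the open disk $|w|<1$, where everything is absolutely convergent. I would then pass to $w\uparrow 1$ by Abel's theorem, using $\sum r_n/(n+1)<\infty$ and $\sum p_n=1$.
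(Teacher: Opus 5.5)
Your proposal is correct and follows essentially the route the paper indicates. The paper gives no separate proof; it obtains (\ref{r_k}) by equating coefficients in $H^{(1)}(w)=R(w)H(w)$, with the canonical sequence $r_n=\lambda(n+1)g_{n+1}$ of (\ref{eq: cpdr}), and relies on the compound Poisson characterization and Theorem \ref{HexpR}, which is exactly your argument. Your reading of ``positive'' as $r_n\ge 0$ is the right one (the Poisson case has $r_n=0$ for $n\ge 1$), and the boundary issue you flag is harmless, since matching the power series on $|w|<1$ already identifies the distribution.
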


The following proposition, presented here without the proof, gives an easily verifiable sufficient condition for infinite divisibility. 

\begin{Proposition} {\rm \cite{StvH2004}}\label{LXpmf}
A p.m.f. $p_n, n\in\mathbb{Z}_+$, satisfying the log-convexity (LX) condition of (\ref{eq:cvx}), i.e., with the ratios
$p_n/p_{n-1}$ non-decreasing with $n$,  is infinitely divisible.    
\end{Proposition}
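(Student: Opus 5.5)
We verify the criterion of Theorem~\ref{infd_suf}: from the log-convex p.m.f.\ we build the canonical sequence $\{r_n\}$ and show every term is non-negative. Log-convexity forces $p_0>0$; otherwise $p_1^2\le p_0p_2=0$ gives $p_1=0$, and inductively all $p_n=0$, contradicting $\sum p_n=1$. (The degenerate case $p_n=\delta_{n0}$ is trivially infinitely divisible.) Since $p_0>0$, the recursion
\begin{equation*}
(n+1)p_{n+1}=\sum_{k=0}^{n}p_k r_{n-k}
\end{equation*}
determines $r_0,r_1,\ldots$ uniquely, because the coefficient of $r_n$ is $p_0$. By construction $R(w)=H'(w)/H(w)$ as formal power series.

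\emph{Key claim:} $r_n\ge 0$ for all $n$, proved by induction. First, $r_0=p_1/p_0\ge0$. For the inductive step, subtract from the identity at index $n$ the identity at index $n-1$ multiplied by $q:=p_{n}/p_{n-1}$ (if $p_{n-1}=0$, log-convexity makes the tail vanish and the claim is trivial). The quantity $q$ is a fixed ratio, and log-convexity gives $p_{k}/p_{k-1}\le q$ for all $k\le n$. This yields
\begin{equation*}
p_0 r_n=(n+1)p_{n+1}-n\,q\,p_n-\sum_{k=1}^{n-1}\bigl(p_k-q\,p_{k-1}\bigr)r_{n-k}.
\end{equation*}
The terms are controlled as follows:
\begin{itemize}
\item Each $p_k-qp_{k-1}\le0$, since $p_k/p_{k-1}\le q$ for $k\le n$.
\item By the induction hypothesis $r_{n-k}\ge0$, so the subtracted sum is $\le 0$ and its negative contributes non-negatively.
\item The leading part satisfies $(n+1)p_{n+1}-nqp_n\ge(n+1)p_{n+1}-np_{n+1}=p_{n+1}\ge0$, using $qp_n=p_n^2/p_{n-1}\le p_{n+1}$, which is exactly log-convexity.
\end{itemize}
Hence $r_n\ge0$. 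I expect this inductive step to be the main obstacle. The fix is to multiply by the single fixed ratio $q=p_n/p_{n-1}$ rather than pairing terms individually, so that the monotonicity of the ratios controls every term at once. The bookkeeping of the index shift needs care.

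Finally, we check the summability condition $\sum r_n/(n+1)=-\log p_0$. Because the $r_n$ are non-negative, the series $R(w)$ has non-negative coefficients and agrees with $H'/H$ near $0$. Integrating termwise on $[0,w]$ gives $\log H(w)-\log p_0=\sum r_n w^{n+1}/(n+1)$. Letting $w\uparrow1$ and applying monotone convergence (Abel's theorem, valid for non-negative coefficients) yields $\sum r_n/(n+1)=\log H(1)-\log p_0=-\log p_0<\infty$. Theorem~\ref{infd_suf} then gives infinite divisibility. Equivalently, $R$ is absolutely monotone, so Theorem~\ref{HexpR} applies.
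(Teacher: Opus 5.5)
The paper gives no proof of this proposition (it is quoted from Steutel--van Harn). Your argument is correct, and it is the standard canonical-sequence induction that Theorem~\ref{infd_suf} is set up for. Your key identity checks out: the $k=n$ term cancels exactly because $q=p_n/p_{n-1}$. It even gives the quantitative bound $p_0r_n\ge p_{n+1}$. Your handling of zeros is also fine: log-convexity with $p_0>0$ and some $p_m=0$, $m\ge1$, forces the degenerate law $\delta_{n0}$.

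One step needs an extra line. Knowing that $R$ agrees with $H'/H$ near $0$ does not by itself let you integrate termwise on $[0,w]$ for $w$ close to $1$. You need $R$ to converge on all of $[0,1)$, and this follows at once from the recursion once $r_n\ge 0$ is established. All terms on the right of $(n+1)p_{n+1}=\sum_{k=0}^{n}p_k r_{n-k}$ are non-negative, so $0\le r_n\le (n+1)p_{n+1}/p_0$. Hence $\sum_n r_n w^n\le H'(w)/p_0<\infty$ for $0\le w<1$. Then $RH=H'$ on $[0,1)$ by the Cauchy product, and $H\ge p_0>0$ there. Your monotone-convergence argument as $w\uparrow 1$ then goes through unchanged. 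In particular, the summability condition $\sum_n r_n/(n+1)=-\log p_0$ is automatic once non-negativity is known, so the real content of the proposition is your inductive step.
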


\begin{Example}\label{ULXCex}
~~~~~
\begin{enumerate}[label=(\roman*), leftmargin=*]
 \item The Shifted Logarithmic distribution with p.m.f. and p.g.f. 
    \begin{equation}\label{shlog}
 p_n =-\frac{1}{\log(1-\theta)}\frac{\theta^{n+1}}{n+1}\;,\;\; \; H(w)=\frac{\log (1-\theta  w)}{w \log (1-\theta )}
\end{equation}
is ULX and hence also LX, and is therefore infinitely divisible. 
\item The Poisson distribution satisfies both ULX and ULC conditions. 
\end{enumerate} 
\end{Example}
Additional examples of LX condition can be found in Appendix \ref{ExLX}.

\subsection{Comparisons for different families of p.g.f.} 
We begin with a simple but useful observation.
\begin{Proposition}{\rm \cite{StvH2004}}\label{infdthin}
    Suppose that $H(w;\lambda)$ is an infinitely divisible p.g.f. Then for $\theta \in (0,1]$ each of the families 
\[ \Bigl\{\frac{H(\theta w;\lambda)}{H(\theta;\lambda)}\Bigr\}, \quad \Bigl\{\frac{H(\theta;\lambda)H(w;\lambda)}{H(\theta w;\lambda)}\Bigr\}\quad\text{and} \quad \Bigl\{H(1-\theta+ \theta w;\lambda)\Bigr\} \]
consists of an infinitely divisible p.g.f.{}s.
\end{Proposition}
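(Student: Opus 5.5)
The plan is to use the characterization of infinite divisibility through the logarithm (Theorem \ref{HexpR} and condition $(III)$ of Proposition \ref{infd_nec}). Since $H(w;\lambda)$ is infinitely divisible, we can write
\[
\log H(w;\lambda)=\sum_{n\ge 1} c_n (w^n-1)=: \psi(w), \qquad c_n=\lambda g_n\ge 0,\quad \sum_{n} c_n=\lambda<\infty .
\]
Equivalently, $H(w;\lambda)=P(G(w);\lambda)$ as in (\ref{eq: cpd}). A function of the form $\exp\{\sum_n a_n(w^n-1)\}$ with $a_n\ge 0$ and $\sum_n a_n<\infty$ is the p.g.f. of a compound Poisson law, hence infinitely divisible. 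So for each of the three families I only need to compute the logarithm, check that it has this form, and check that it is normalized at $w=1$.

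\emph{First family.} Here
\[
\log\frac{H(\theta w;\lambda)}{H(\theta;\lambda)}=\psi(\theta w)-\psi(\theta)=\sum_n c_n\theta^n (w^n-1).
\]
The coefficients $c_n\theta^n$ are non-negative and summable for $\theta\in(0,1]$, so this is compound Poisson with secondary weights proportional to $g_n\theta^n$.

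\emph{Second family.} Here
\[
\psi(\theta)+\psi(w)-\psi(\theta w)=\sum_n c_n\bigl[(w^n-1)-(\theta^n w^n-\theta^n)\bigr]=\sum_n c_n(1-\theta^n)(w^n-1).
\]
The coefficients $c_n(1-\theta^n)\ge 0$ are bounded by $c_n$, hence summable. One should note that this is again a p.g.f. with value $1$ at $w=1$.

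\emph{Third family.} This is the only step needing a small extra remark. Since $H(1-\theta+\theta w;\lambda)=P\bigl(G(1-\theta+\theta w);\lambda\bigr)$, and $G(1-\theta+\theta w)=G\circ\mathcal{B}(w;\theta)$ is a p.g.f. (binomial thinning of the secondary law, compounding as in (\ref{eq: comp})), the result is again compound Poisson. One can also invoke directly the remark after Proposition \ref{infd_nec}: $R\circ Q$ is infinitely divisible whenever $R$ is and $Q$ is any p.g.f., here with $Q=\mathcal{B}(\cdot;\theta)$. The possible atom of $G(1-\theta+\theta w)$ at $0$ is harmless, by (\ref{zero-ext-compP}).

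The main obstacle, mild as it is, is the second family. There the quotient structure is not obviously a p.g.f. until the logarithm is expanded; the key point is that $1-\theta^n\ge 0$, which is exactly where the restriction $\theta\in(0,1]$ is used. Everything else reduces to the non-negativity of the Lévy coefficients $c_n$ guaranteed by Theorem \ref{HexpR}.
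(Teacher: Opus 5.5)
Your proof is correct and is essentially the argument the paper points to. The paper gives no proof of its own beyond citing \cite{StvH2004}, but its reparametrization (\ref{PSDfromCPD}) with $\nu=\lambda G(\theta)$, $Q(w)=G(\theta w)/G(\theta)$ is your first-family computation, and its remark after Proposition \ref{infd_nec} that $R\circ Q$ stays infinitely divisible is your third-family step; your expansion $\psi(\theta)+\psi(w)-\psi(\theta w)=\sum_n c_n(1-\theta^n)(w^n-1)$ fills in the second family, which the paper leaves to the reference, and in the language of Theorem \ref{HexpR} it says that the $R$-function $R(w)-\theta R(\theta w)$ has non-negative coefficients $r_k(1-\theta^{k+1})$.
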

In particular, for every infinitely divisible p.g.f. $H(w;\lambda)=\exp\left[\lambda(G(w)-1)) \right]$, there always exists a PSD p.g.f. $R(w;\theta,\lambda)=H(\theta w;\lambda)/H(\theta;\lambda)$ representing a compound Poisson distribution with parameter $\nu$ depending on $\theta$ and $\lambda$:
\begin{equation}\label{PSDfromCPD}
 R(w;\theta,\lambda)
 =\frac{e^{\lambda (G(\theta  w)-1)}}{e^{\lambda (G(\theta )-1)}}=e^{\nu ( Q(w)-1)};\;\;
 \nu=\lambda G(\theta),\; Q(w)=\frac{G(\theta w)}{G(\theta)}\;.
\end{equation}

The following theorem addresses the case in which a p.g.f. belongs simultaneously to both
families.
\begin{Theorem} \label{infd2PSD}
Suppose the infinitely divisible p.g.f. $H(w;\lambda)=\exp\{\lambda(G(w)-1)\}$ is at the same time a PSD, $H(w;\lambda) = Z(\theta w)/Z(\theta)$, with $\theta$ depending only on $\lambda$. Then $H(w;\lambda)$ has exactly one nonzero combinant $\lambda_j=\lambda a$, so that
\begin{equation}\label{eq: PSDINFD}
H(w;\lambda) = \exp\{\lambda a(w^j -1)\} 
\end{equation}
for some $j\in \N$ and $a>0$.
\end{Theorem}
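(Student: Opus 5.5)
Let $\theta=\theta(\lambda)$, and write the two representations through their logarithms. From the PSD form, $\log H(w;\lambda)=\log Z(\theta w)-\log Z(\theta)$. From the compound Poisson form, $\log H(w;\lambda)=\lambda(G(w)-1)$. By Proposition \ref{infd_nec}(III), $\log H$ is analytic in $|w|\le 1$ with non-negative coefficients at $w=0$, so I will set $\log Z(x)=\sum_{j\ge0} c_j x^j$. Then
\begin{equation*}
\lambda\sum_{j>0} g_j w^j-\lambda=\sum_{j>0} c_j\theta^j (w^j-1).
\end{equation*}
Comparing coefficients of $w^j$ gives
\begin{equation*}
\lambda_j:=\lambda g_j=c_j\,\theta(\lambda)^j \qquad \text{for all } j\in\N,
\end{equation*}
where the $c_j$ do not depend on $\lambda$. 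Since $g_j=\lambda_j/\lambda$ is a probability mass function for every $\lambda$, normalization $\sum_j g_j=1$ yields the constraint
\begin{equation}\label{constr}
\sum_{j>0} c_j\,\theta(\lambda)^j=\lambda .
\end{equation}

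The key step is to exploit that the secondary p.g.f. $G$ in $H(w;\lambda)=\exp\{\lambda(G(w)-1)\}$ is a fixed function, independent of $\lambda$. This is the standard reading of a one-parameter compound Poisson family, and the whole theorem rests on it.

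With $G$ fixed, each ratio $g_j/g_k=(c_j/c_k)\,\theta^{j-k}$ is constant in $\lambda$ whenever $c_j,c_k>0$. If two indices $j\neq k$ had nonzero combinants, then $\theta^{j-k}$ would be constant. Hence $\theta$ would be constant in $\lambda$, and so would $\log H(w;\lambda)=\log Z(\theta w)-\log Z(\theta)$. That contradicts the explicit dependence $\lambda(G(w)-1)$ as $\lambda$ varies, since $G\not\equiv1$. Therefore exactly one index $j$ has $c_j>0$, which forces $g_j=1$ and $G(w)=w^j$. The constraint (\ref{constr}) then becomes $c_j\theta^j=\lambda$, and the only nonzero combinant is $\lambda_j=\lambda$. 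Absorbing the normalization into a constant gives $H=\exp\{\lambda a(w^j-1)\}$, matching (\ref{eq: PSDINFD}) with $\lambda_j=\lambda a$. The corresponding $Z(x)=\exp(c_j x^j)$ is the Poisson distribution concentrated on multiples of $j$.

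The main obstacle is justifying that $G$ does not depend on $\lambda$. If $G$ were allowed to vary, Eq. (\ref{PSDfromCPD}) shows counterexamples: every PSD built from an infinitely divisible $Z$ is compound Poisson with a $\theta$-dependent secondary distribution, for instance the negative binomial with logarithmic secondary. I will therefore state this hypothesis explicitly as the interpretation of the theorem. If needed, I will phrase it equivalently as requiring all ratios $\lambda_i/\lambda_k$ of combinants to be independent of $\lambda$, so that the argument above is visibly the entire content of the proof.
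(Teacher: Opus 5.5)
Your proof is correct and is essentially the paper's argument. The paper likewise treats $G$ and $h=\log Z$ as fixed. It differentiates $\lambda(G(w)-1)=h(\theta w)-h(\theta)$ at the lowest nonvanishing order $j$ to get $\lambda\propto\theta^j$, and then concludes $G^{(j)}\equiv\mathrm{const}$; this is the derivative form of your coefficient identity $\lambda g_j=c_j\theta^j$ and your ratio argument.

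You are right to make the $\lambda$-independence of $G$ explicit. The paper uses it silently, and the negative binomial with logarithmic secondary shows the hypothesis is necessary. Your convention $G(0)=0$ gives $a=1$, whereas the paper allows $G(0)=1-a$ to obtain a general $a$.
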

\begin{proof}
Because $H(w;\lambda)$ is a p.g.f., $G(w)$ is analytic in $|w|\leq 1$, and the same holds for the function $h(w):=\log Z(w)$. Therefore 
$\lambda (G(w)-1) = h(\theta w) - h(\theta)$ 
for all $|w|\leq 1$ and all $\lambda >0$. Let $j$ be the smallest integer $m >0$ for which $G^{(m)}(0) \neq 0$. Differentiating $j$ times with respect to $w$ we gives
$ \lambda G^{(j)}(w) = \theta^{j}h^{(j)}(\theta w)$.
Setting $w=0$ and using $G^{(j)}(0) \neq 0$ yields
\[ \lambda = \frac{h^{(j)}(0)}{G^{(j)}(0)}\theta^{j};\;\; G^{(j)}(w) =\frac{G^{(j)}(0)}{h^{(j)}(0)} h^{(j)}(\theta w).\]
Because $\lambda$ is arbitrary positive then $G^{(j)}(w)=const$. However, $G^{(n)}(0)=0$ for $0<n<j$. Therefore, $ \lambda G(w)= \lambda(a w^j +b)$. The condition $G(1)=1$ leads to $b=1-a$ and hence to $\lambda (G(w)-1)= \lambda a (w^j-1)$.
\end{proof}

The following theorem addresses the converse situation.
\begin{Theorem} \label{PSDisCPD}
Suppose that the p.g.f. belongs to the PSD family, $H(w;\lambda) = Z(\theta w)/Z(\theta)$. If $Z(0)=a_0>0$ and the function $\log Z(\theta w)$ is analytic in $|\theta w|\leq 1$ with non-negative power series coefficients at $w=0$, then $H(w;\lambda)$ is infinitely divisible with parameter $\lambda=-\log H(0)$ dependending on $\theta$.
\end{Theorem}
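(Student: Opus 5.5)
The approach is to write $H(w;\lambda)$ directly in the compound Poisson form (\ref{eq: cpd}) and read off the secondary p.g.f. from the power series of $\log Z$. Set $h(x):=\log Z(x)$. Since $Z(0)=a_0>0$, $h$ is well defined near $0$. By hypothesis $h(\theta w)=\sum_{n\ge0} c_n\theta^n w^n$ converges for $|\theta w|\le 1$, with $c_n\ge 0$ and $c_0=\log a_0$. Then
\[
\log H(w;\lambda)=h(\theta w)-h(\theta)=-\sum_{n\ge1} c_n\theta^n(1-w^n)\,.
\]
At $w=0$ this gives $\log H(0)=-\sum_{n\ge1}c_n\theta^n$. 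So I define $\lambda:=-\log H(0)=\sum_{n\ge1}c_n\theta^n=h(\theta)-h(0)$. This series is finite because the expansion converges at $w=1$, i.e.\ at $|\theta w|=\theta\le 1$.

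Next, I would dispose of the degenerate case. If $\lambda=0$, then $c_n\theta^n=0$ for all $n\ge1$, so $H\equiv1$. This is the point mass at $0$, which is trivially infinitely divisible. Otherwise $\lambda>0$, and I define
\[
G(w):=\frac{1}{\lambda}\sum_{n\ge1}c_n\theta^n w^n\,.
\]
Its coefficients $g_n=c_n\theta^n/\lambda\ge0$ sum to $1$ and $g_0=0$, so $G$ is a genuine p.g.f. The identity above then reads $H(w;\lambda)=\exp\{\lambda(G(w)-1)\}=P(G(w);\lambda)$. This is a compound Poisson p.g.f., hence infinitely divisible by the equivalence stated after (\ref{eq: cpd}).

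As an independent check, I would use Theorem \ref{HexpR}. Here $R(w)=\frac{d}{dw}\log H=\sum_{n\ge1} n c_n\theta^n w^{n-1}$ has non-negative coefficients and is therefore absolutely monotone on $[0,1)$. Equivalently, $H^{1/m}=\exp\{(\lambda/m)(G-1)\}$ is a p.g.f. for every $m$, which matches condition (III) of Proposition \ref{infd_nec}.

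The main obstacle is the rigour of the analyticity and convergence claims. I need the expansion of $\log Z(\theta w)$ to converge on the closed disk $|w|\le1$, in particular at $w=1$, so that $\lambda<\infty$ and $G(1)=1$ hold. I would justify this from the hypothesis of analyticity on the closed disk $|\theta w|\le 1$ together with non-negativity of the coefficients. By Pringsheim/Abel, non-negative coefficients make convergence at the positive real point equivalent to convergence on the whole circle. I would also note explicitly that $\lambda$ depends on $\theta$ through $h(\theta)-h(0)$, as the statement asserts.
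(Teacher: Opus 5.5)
Your proposal is correct and is essentially the paper's own argument: the paper likewise sets $G(w)=1+\lambda^{-1}\log[Z(\theta w)/Z(\theta)]$, fixes $\lambda=-\log(Z(0)/Z(\theta))$ by requiring $G(0)=0$, and reads off $G^{(i)}(0)\ge 0$ from the non-negative power series coefficients of $\log Z(\theta w)$. Your treatment of the degenerate case $\lambda=0$ and of the convergence of the series at $w=1$ (so that $\lambda<\infty$ and $G(1)=1$) supplies details the paper leaves implicit.
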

\begin{proof}
 Using Eqs.\ref{epgf} and \ref{eq: cpd} we must show that 
\begin{equation} \label{CPD_G}
 G(w) =1+\frac{1}{\lambda}\log \left[ \frac{Z(\theta w)}{Z(\theta )}\right] \;
\end{equation}
is a secondary p.g.f. of the compound Poisson distribution $P(G(w);\lambda)$. Since $G(1)=1$, we only need to verify that $\forall i>0,~ G^{(i)}(0)\geq 0$ and $G(0)=0$. The first condition follows from the assumed analyticity and non-negativity of coefficients of $\log Z(\theta w)$. The second condition, $G(0)=0$, implies  $\lambda=-\log (Z(0)/Z(\theta))$.
\end{proof}

\begin{Example}
The negative binomial distribution is a PSD. The function $\log Z(\theta w)=-k\log(1- \theta w)=k\sum_{n=1}(\theta w)^n/n$ has positive expansion coefficients $\theta^n/n$, and is therefore infinitely divisible.
\end{Example}

\begin{Example}
The r.v $N$ with the p.g.f. (\ref{exsbd_pgf}) of Extended Sibuya distribution is a PSD with $Z(0)=0$ and therefore is not infinitely divisible.  On the other hand, the r.v. $M=N-1$  with the p.g.f. of the {\it Shifted Extended Sibuya distribution}:
\begin{equation} \label{shexsbd}
H_1(w;\theta)=\frac{Z_1(\theta w)}{Z_1(\theta)}\;,\;\; 
 Z_1(\theta)=\frac{1-(1-\theta)^{\gamma}}{\theta}
\end{equation}
with polynomial expansion 
\begin{equation}
H_1(w;\theta)=
\frac{1}{Z_1(\theta)}\sum_{j=0} \binom{\gamma }{j+1}(-\theta )^j w^j\;.
\end{equation}
This function is analytic on $|\theta w| \leq 1$.  With $\lim_{w\to 0^+}H_1(w;\theta)=\theta\gamma/Z(\theta)>0$ and $H_1(1;\theta)=1$, the function $H_1(w;\theta)$ is a valid p.g.f. on $\Z$. Its infinite divisibility follows from the positivity of all coefficients of the power series of $\log Z_1(w)$ in the variable $\theta w$. An alternative proof is given in \cite{Klebanov2023-xj}. 
\end{Example}
\subsection{Combinants}\label{comsec}
Let $p_n$ be the p.m.f. and $H(w)$ the p.g.f. of the r.v. $N\in\Z$, and assume that $\log H(w)$ is an analytic in  $w$ \footnote{Classical example of analytic p.g.f. with nonanalytic logarithm is $H(w)=w^m, m\in \N$, i.e., with the p.m.f. $p_n=\delta_{nm}$.}. In this case, $\log H(w)$ can be expanded directly in powers of $w$: 
\begin{equation}\label{combdef}
 \mathcal{C}_N(w):= \log \E w^N= \log H(w)=\sum_{j=0} \lambda_j w^j=\sum_{j>0}\lambda_j( w^j -1);.
\end{equation}
The coefficients $\lambda_j$ in the power series expansion (\ref{combdef}) are called {\it combinants} \cite{Kauffmann:1978vw}. The last equality in (\ref{combdef}) follows from the analyticity of $\log H(w)$ at $w=0$ together with the condition $\log H(1)=0$, which implies that $\lambda_0=\log p_0=-\sum_{j>0}\lambda_j$. Therefore,

\[
H(w)=\exp[\sum_{j>0} \lambda_j(w^j-1)]=\prod_{j>0}e^{\lambda_j (w^j - 1)}
\]
When $\lambda_j$ are positive, $H(w)$ is infinitely divisible, and the combinants represent the p.m.f. of its secondary distribution, $\lambda_j=g_j$.

As with cumulants, combinants are also additive.
\begin{Proposition}\label{addcomb}
Let $N_{1,2}\in \Z $ be two independent r.v. with the combinants $\lambda_{j,1}$ and $\lambda_{j,2}$ and p.g.f. $H_1(w)$ and $H_2(w)$. Then  $M=N_1+N_2$ has the combinants $\lambda_{j}=\lambda_{j,1}+\lambda_{j,2}$.
\end{Proposition}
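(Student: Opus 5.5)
The plan mirrors the proof of Corollary \ref{addcum}, with the combinant generating function $\mathcal{C}_N(w)=\log H(w)$ of (\ref{combdef}) replacing the cumulant generating function. Since $N_1$ and $N_2$ are independent, (\ref{eq: prodpgf}) gives the p.g.f. of $M=N_1+N_2$ as $H(w)=H_1(w)H_2(w)$. Taking logarithms yields
\[
\mathcal{C}_M(w)=\log H_1(w)+\log H_2(w)=\mathcal{C}_{N_1}(w)+\mathcal{C}_{N_2}(w).
\]

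Next I would expand each term using (\ref{combdef}):
\[
\mathcal{C}_M(w)=\sum_{j\ge 0}(\lambda_{j,1}+\lambda_{j,2})w^j .
\]
A convergent power series has unique coefficients, so comparing coefficients of $w^j$ gives $\lambda_j=\lambda_{j,1}+\lambda_{j,2}$ for every $j$. For $j=0$ this is just $\log p_0=\log p_{0,1}+\log p_{0,2}$, which agrees with $p_0=p_{0,1}p_{0,2}$.

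The only point needing care is whether $\log H(w)$ is analytic, i.e. whether the combinants of $M$ exist at all. The hypothesis that $N_1$ and $N_2$ have combinants means $\log H_1$ and $\log H_2$ are analytic near $w=0$. In particular $H_i(0)=p_{0,i}>0$, so $H$ does not vanish near $0$, and $\log H=\log H_1+\log H_2$ is analytic on the common disc of convergence. I would work in a small neighbourhood of $w=0$ on which both logarithms are analytic, so that no branch issues arise. There the identity of power series holds, and the coefficient comparison is justified.
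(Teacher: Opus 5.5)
Your proof is correct and follows the paper's own argument: use independence to write $H=H_1H_2$, take logarithms, and compare the coefficients of $w^j$ in the power-series expansions of $\log H$, $\log H_1$ and $\log H_2$. You also note that $p_{0,i}>0$, so that $\log H_1+\log H_2$ is a well-defined analytic branch of $\log H$ near $w=0$. This is a useful bit of rigor that the paper leaves implicit.
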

\begin{proof}
The p.g.f. $H(w)$ of $M$ satisfies $H(w)=H_1(w)H_2(w)$. Using (\ref{combdef})  and comparing the coefficients of $w^{\ell}$ gives $\lambda_{j}=\lambda_{j,1}+\lambda_{j,2}$.
\end{proof}

 We remark in passing that for distributions of the PSD family, there is a close connection between combinants and cumulants. Expanding the logarithm of the PSD p.g.f.  
\[
\log Z(\theta w) -\log Z(\theta)
=\sum_{j\geq 0}\lambda_j\theta^jw^j -\sum_{j\geq0}\lambda_j\theta^j
=\sum_{j> 0}\lambda_j\theta^j(w^j-1) 
\]
 and using Eq. \ref{reccum} the coefficients $\lambda_j$ are expressed as
\begin{equation}\label{PSDcomb}   
\lambda_j 
=\frac{1}{j!}\frac{d^{j} \log Z(\theta )}{d\theta^j}=
\frac{j!}{\theta^{j}}\kappa_{j}\;.
\end{equation}

\begin{Example}\mbox{} \smallskip
\begin{enumerate}[label=(\roman*), leftmargin=*]
\item Consider the p.g.f. $H(w)=p_0+p_1 w+p_2 w^2+p_3 w^3+p_4 w^4$. The coefficients in the expansion of $\log H(w)=\sum \lambda_j w^j$  are:
    \[
    \lambda_0=\log p_0,\;
    \;\lambda_1=\frac{p_1}{p_0},\;
    \lambda_2=\frac{ p_2}{p_0}-\frac{1}{2}\left(\frac{p_1}{p_0}\right)^2,\;
    \lambda_3=\frac{ p_3}{p_0}-\left(\frac{p_1}{p_0}\right) \left(\frac{p_2}{p_0}\right) +\frac{1}{3}\left (\frac{p_1}{p_0}\right)^3,\;  
    \]
    \[
     \lambda_4=\frac{ p_4}{p_0}-\left(\frac{p_1}{p_0}\right) \left(\frac{p_3}{p_0}\right)+ \left(\frac{p_1}{p_0}\right)^2 \left(\frac{p_2}{p_0} \right)-               \frac{1}{2}\left(\frac{p_2}{p_0}\right)^2   -\frac{1}{4}\left (\frac{p_1}{p_0}\right)^4
     \;.
    \]
    Note that combinants with $j>1$ need not always be positive.
 \item The compound Poisson p.g.f. $P\bigl(Q(w;\theta,\gamma),\nu\bigr)$ with secondary  p.g.f. 
\begin{equation}\label{Pois-ExSbtrNBD}
Q(w;\theta,\gamma)=\frac{G(\theta w;\gamma)}{G(\theta,\gamma)}=\frac{1-(1-\theta w)^{\gamma}}{1-(1-\theta)^{\gamma}}\;,\qquad 0<\theta<1,\;\gamma<1.
\end{equation}
belongs to the PSD family, see Eq.\ref{PSDfromCPD}. In terms of the secondary distribution p.m.f. $q_j$ its  combinants $\lambda_j$ read:
\begin{equation}
\lambda_j = \nu\, q_j 
= \nu\,\frac{(-\theta )^j}{(1-\theta )^{\gamma }-1}\binom{\gamma }{j} 
\xrightarrow[\gamma\to 0]{} -\frac{\theta^j}{j\,\log (1-\theta )}\,.
\end{equation}
For $0<\gamma<1$, the secondary p.g.f. $Q(w;\theta,\gamma)$ is associated with the {\it Extended Sibuya} distribution. In particular, when $Q(w;1,\gamma)=\mathcal{S}(w;\gamma)$, the function $P(Q(w;1,\gamma),\nu)$ represents the p.g.f. of the {\it discrete stable distribution} \cite{Christoph1998a}. In the limit $\gamma\to 1^{-}$, this compound distribution reduces to the Poisson distribution with combinants $\lambda_j=\delta_{j,1}\nu$, whereas in the limit $\gamma\to 0$ it converges to the logarithmic distribution. For $\gamma<0$, the function $Q(w;\theta,\gamma)$ corresponds to the zero-truncated negative binomial distribution with parameter $k=-\gamma$ and p.g.f.
\begin{equation} \label{eq:nbdtrunc}
Q(w;\theta,k)
=\frac{H_{NBD}(w;\theta,k) -H_{NBD}(0;\theta,k)}{1-H_{NBD}(0;\theta,k)}\;,\qquad 
H_{NBD}(w;\theta,k):=\left(\frac{1-\theta}{1-\theta w}\right)^k\, .
\end{equation}
\end{enumerate} 
\end{Example}
The central property of the combinants is encapsulated in the following theorem.
\begin{Theorem}
\label{comb2pmf}
Let $N \in \mathbb{Z}$ be a random variable with p.g.f. $H(w)$. If
\[
\log H(w)=\log H(0)+ \log \mathbb{E}[w^N]
\]
is analytic at $w=0$, then the probability $p_n=\mathbb{P}\{N=n\}$ is uniquely determined by the first $n$ combinants $\lambda_1, \ldots, \lambda_n$, and {\it conversely}: for $0<j\leq n$, each combinant $\lambda_j$ depends only on 
$\{p_0,p_1,\ldots,p_n\}$.
\end{Theorem}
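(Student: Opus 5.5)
The plan is to work directly with the power series identity $H(w)=\exp\bigl(\mathcal{C}_N(w)\bigr)$ and to compare coefficients, much as Theorem~\ref{infd_suf} does for the canonical sequence. Analyticity of $\log H(w)$ at $w=0$ forces $p_0=H(0)>0$. Then (\ref{combdef}) gives
\[
H(w)=p_0\exp\Bigl(\sum_{j>0}\lambda_j w^j\Bigr),\qquad \lambda_0=\log p_0=-\sum_{j>0}\lambda_j .
\]
Differentiating once produces the convolution recursion
\[
H'(w)=H(w)\,\mathcal{C}_N'(w),
\]
and equating coefficients of $w^{n}$ on both sides yields
\begin{equation*}
(n+1)p_{n+1}=\sum_{k=0}^{n}(n-k+1)\lambda_{n-k+1}\,p_k ,\qquad n\geq 0 .
\end{equation*}
This is exactly the relation (\ref{r_k}) with $r_m=(m+1)\lambda_{m+1}$. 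The only difference is that no sign condition is imposed now, since combinants may be negative, as the preceding example shows.

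The first direction is an induction on $n$, with the normalization $p_0=\exp(-\sum_{j>0}\lambda_j)$ fixed at the start. Suppose $p_0,\dots,p_n$ are known. The recursion then expresses $p_{n+1}$ through $\lambda_1,\dots,\lambda_{n+1}$ and $p_0,\dots,p_n$, so each $p_n$ is a polynomial in $\lambda_1,\dots,\lambda_n$ multiplied by $p_0$. Alternatively, one can write $p_n/p_0$ explicitly as the complete Bell polynomial in $(j!\lambda_j)$ divided by $n!$, using Faà di Bruno's formula.

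The converse direction rearranges the same identity. Since $p_0>0$, the recursion can be solved for the top combinant:
\[
(n+1)\lambda_{n+1}\,p_0=(n+1)p_{n+1}-\sum_{k=1}^{n}(n-k+1)\lambda_{n-k+1}\,p_k .
\]
By induction $\lambda_{n+1}$ is then a function of $p_0,\dots,p_{n+1}$ alone, and the base case $\lambda_1=p_1/p_0$ matches the worked example. Equivalently, $\lambda_j=\frac{1}{j!}\frac{d^j}{dw^j}\log H(w)\big|_{w=0}$, and by the chain rule this derivative involves only $H(0),\dots,H^{(j)}(0)$, that is, only $p_0,\dots,p_j$.

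The main subtlety is the meaning of ``uniquely determined by the first $n$ combinants'' for $p_n$. Through $p_0=\exp(\lambda_0)$, $p_n$ depends on all combinants. The statement has to be read as saying that $p_n$ is determined by $\lambda_1,\dots,\lambda_n$ together with $p_0$ (or $\lambda_0$). Equivalently, the ratio $p_n/p_0$ depends only on $\lambda_1,\dots,\lambda_n$. I would state this normalization explicitly.

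The only analytic point is the justification that formal coefficient comparison is legitimate. This follows from analyticity of both $H$ and $\log H$ in a neighbourhood of $0$.
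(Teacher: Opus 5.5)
Your argument is correct, but it takes a different route from the paper's. The paper writes $H(w)=\exp[\sum_{j>0}\lambda_j(w^j-1)]$ and applies Fa\`a di Bruno's formula. This expresses $p_n=H^{(n)}(0)/n!$ as an explicit sum over $(m_1,\dots,m_n)$ with $\sum_j jm_j=n$, which gives the forward direction in closed form. The converse is left implicit: it follows because $\lambda_n$ enters $p_n$ only through the linear term $p_0\lambda_n$.

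You instead differentiate once, $H'=H\,\mathcal{C}_N'$. This gives the triangular convolution system $(n+1)p_{n+1}=\sum_{k=0}^{n}(n-k+1)\lambda_{n-k+1}p_k$, i.e.\ (\ref{r_k}) without sign constraints. You then invert it by induction in either direction, using only $p_0>0$. Your version proves both halves from one identity and yields a recursive algorithm; the paper's yields an explicit formula.

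If you quote the closed form, use your Bell-polynomial normalization $p_n=p_0\sum\prod_j\lambda_j^{m_j}/m_j!$. The display (\ref{FdB}) as printed has $e^{-\lambda_0}$ in place of $e^{\lambda_0}=p_0$ and a spurious $1/n!$. For $n=1$ it gives $\lambda_1/p_0$ instead of $p_1=\lambda_1p_0$; compare the first equality in (\ref{FaaDSD}), which is correct.

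Your caveat about $p_0$ is right and can be sharpened. You borrow the identity $\lambda_0=-\sum_{j>0}\lambda_j$ from the paper to fix $p_0$ at the start. That identity needs the power series of $\log H$ to converge at $w=1$, and analyticity at $w=0$ does not guarantee this. For a Bernoulli variable with $a>1/2$, $H$ vanishes at $-(1-a)/a$, inside the unit disk. Then $\sum_{j>0}\lambda_j=\sum_{j>0}(-1)^{j+1}\bigl(a/(1-a)\bigr)^j/j$ diverges, although the hypotheses hold. Your induction only uses $p_0=e^{\lambda_0}>0$. So drop that identity and treat $p_0$ (equivalently $\lambda_0$) as the extra datum, exactly as in your reading of the statement.
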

\begin{proof}
The analyticity of $\log H(w)$ at $w=0$, together with the normalization condition $\log H(1)=0$, implies the representation
\[
\log H(w)= \sum_{j>0}\lambda_j( w^j -1),
\]
and hence $\lambda_0=\log p_0=-\sum_{j>0}\lambda_j$. Applying Faà di Bruno’s formula to
\[
H(w)=\exp\Big[\sum_{j>0}\lambda_j( w^j -1)\Big]
\]
yields
\begin{equation}\label{FdB}
p_n=\frac{H^{(n)}(0)}{n!} =  \sum \frac{e^{-\lambda_0}}{m_1!m_2!\cdots m_n! \, n!} \prod_{j=1}^{n}\lambda_j^{m_j}\,,
\end{equation}
where the sum ranges over all $n$-tuples $(m_1,m_2,\ldots, m_n)$ of nonnegative integers satisfying the constraint $\sum_{j=1}^{n} j m_j =n$.
\end{proof}
\begin{Proposition}\label{fincumul}
Let $H(w)$ denote the p.g.f. of a random variable $N \in \mathbb{Z}$ that possesses only a finite number $J = J_{+} + J_{-} < \infty$ of non-zero combinants, among which $J_{+}$ are positive and $J_{-}$ are negative. Then $H(w)$ admits the representation
\[
H(w) = \frac{H_{+}(w)}{H_{-}(w)}\,,
\]
where $H_{\pm}(w) = P(G_{\pm}(w);\lambda)$ are compound Poisson p.g.f.'s and $G_{\pm}(w)$ are polynomials of degree $J_{\pm}$.
\end{Proposition}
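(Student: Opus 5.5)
Start from the combinant representation (\ref{combdef}). By assumption only finitely many combinants are nonzero, so
\[
\log H(w)=\sum_{j\in J_+}\lambda_j(w^j-1)+\sum_{j\in J_-}\lambda_j(w^j-1),
\]
where $J_{+}$ and $J_{-}$ index the positive and negative combinants. Define
\[
\Lambda_+:=\sum_{j\in J_+}\lambda_j>0,\qquad \Lambda_-:=\sum_{j\in J_-}|\lambda_j|>0 .
\]
Factoring the exponential gives
\[
H(w)=\frac{\exp\{\sum_{j\in J_+}\lambda_j(w^j-1)\}}{\exp\{\sum_{j\in J_-}|\lambda_j|(w^j-1)\}}=:\frac{H_+(w)}{H_-(w)} .
\]
This is exactly the Poisson multiplet factorisation of Proposition \ref{Pmultiplets}, applied separately to the two groups.

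Next I will show that each factor is a compound Poisson p.g.f. Set
\[
G_\pm(w):=\Lambda_\pm^{-1}\sum_{j\in J_\pm}|\lambda_j|\,w^j .
\]
Each $G_\pm$ is a polynomial with nonnegative coefficients, $G_\pm(0)=0$ and $G_\pm(1)=1$, so it is a genuine secondary p.g.f. Its degree is the largest index in $J_\pm$. Then $H_\pm(w)=P(G_\pm(w);\Lambda_\pm)$, which by (\ref{eq: cpd}) is an infinitely divisible p.g.f.

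The statement writes a common parameter $\lambda$ for both factors. I read this as allowing the two intensities $\Lambda_\pm$ to differ. To force a single $\lambda$, one can pad the smaller one using the Remark with (\ref{zero-ext-compP}): give the secondary distribution an atom at $0$, i.e.\ take $\lambda=\max(\Lambda_+,\Lambda_-)$ and replace $G_\pm$ by $1-\Lambda_\pm/\lambda+(\Lambda_\pm/\lambda)G_\pm$. The identity is then simply the cancellation of the two exponentials, and no analysis beyond (\ref{combdef}) is needed.

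The main obstacle is the meaning of ``degree $J_\pm$''. Literally, $J_\pm$ counts the nonzero combinants, whereas the degree of $G_\pm$ is the largest index carrying a combinant of that sign. These agree only when the positive (resp.\ negative) combinants occupy consecutive indices $1,\dots,J_\pm$. I would therefore state the precise version: $G_\pm$ has exactly $J_\pm$ nonzero monomials, and its degree is the maximal index in $J_\pm$. I would also check the edge case $J_-=0$, where $H_-\equiv1$ and $H$ is itself compound Poisson, consistent with the Poisson multiplet picture. A secondary check is that $H$ is a valid p.g.f., with nonnegative coefficients. This is part of the hypothesis and is not something the proof must establish.
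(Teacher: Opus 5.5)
Your proposal is correct and follows the paper's own argument: split the finitely many combinants by sign, normalize each group into a polynomial secondary p.g.f.\ $G_\pm$, and exponentiate. Your caveats are accurate and tighten the paper's proof in three places. The paper writes the denominator as $\exp[\lambda_-(G_-(w)-1)]$ with $\lambda_-<0$, whereas it should carry $|\lambda_-|$, as yours does. It silently uses two different intensities $\lambda_\pm$ instead of the common $\lambda$ in the statement. And it implicitly assumes the indices of each sign run over $1,\dots,J_\pm$ when asserting $\deg G_\pm=J_\pm$.
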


\begin{proof}
Decompose the sum in Eq.~\ref{combdef} into its positive and negative parts:
\[
\sum_{j>0}^{J} \lambda_j (w^j - 1)
=
\sum_{j_{+}=1}^{J_{+}} \lambda_{j_{+}} (w^{j_{+}} - 1)
+
\sum_{j_{-}=1}^{J_{-}} \lambda_{j_{-}} (w^{j_{-}} - 1)
=
\lambda_{+} (G_{+}(w) - 1) + \lambda_{-} (G_{-}(w) - 1)\,,
\]
where
\[
\lambda_{+} := \sum_{j_{+}=1}^{J_{+}} \lambda_{j_{+}} > 0,
\qquad
\lambda_{-} := \sum_{j_{-}=1}^{J_{-}} \lambda_{j_{-}} < 0.
\]
Exponentiation of both sides yields
\begin{equation}\label{infdratio}
H(w;\lambda)
=
\frac{\exp\bigl[\lambda_{+}(G_{+}(w) - 1)\bigr]}
     {\exp\bigl[\lambda_{-}(G_{-}(w) - 1)\bigr]}
=
\frac{\exp\bigl[\sum_{j_{+}=1}^{J_{+}} \lambda_{j_{+}} (w^{j_{+}} - 1)\bigr]}
     {\exp\bigl[\sum_{j_{-}=1}^{J_{-}} \lambda_{j_{-}} (w^{j_{-}} - 1)\bigr]}
=
\frac{H_{+}(w;\lambda_{+})}{H_{-}(w;\lambda_{-})}\,,
\end{equation}
which establishes the asserted factorization.
\end{proof}  
\begin{Remark} Although each of the p.g.f.s $H_{+}(w;\lambda_+)$ and $H_{-}(w;\lambda_-)$ corresponds to an infinitely divisible random variable on $\mathbb{Z}$, their ratio $H(w)=H_{+}(w)/H_{-}(w)$ does not, in general, define a valid p.g.f. on all of $\mathbb{Z}$. To illustrate this phenomenon, consider 
\[
H_{+}(w;\lambda_+) = e^{\lambda_{+}(w-1)} 
\quad\text{and}\quad 
H_{-}(w;\lambda_-) = e^{\lambda_{-}(w^2-1)}
\]
with $\lambda_{+}\neq\lambda_{-}$. The associated p.m.f. $\{p_n\}_{n\ge 0}$ then satisfies the recurrence relation
\begin{equation}\label{comb12}
(n+2)p_{n+2} \;=\; \lambda_{+}p_{n+1} \,-\, 2 \lambda_{-}p_{n}, 
\qquad 
p_0=e^{-\lambda_{+}+\lambda_{-}}, 
\qquad 
p_1=\lambda_{+}p_{0}.
\end{equation}
Depending on the specific values of $\lambda_{\pm}$, this sequence $\{p_n\}$ defines a valid p.m.f. only up to a finite index $J=n+1$, beyond which $p_{n+2}<0$ and the nonnegativity condition fails. For example, when $\lambda_{+}=10$, $\lambda_{-}=2$ the first negative term occurs $n=4$; and for $\lambda_{+}=100$, $\lambda_{-}=2$ at $n=30$, and so on.
\end{Remark}

\subsection{Thinning, self-decomposability, stability and scalability} \label{sfdss}
\subsubsection{Thinning}
To preserve the discrete character of r.v. $M\in \Z$ under multiplication by a real number $0<a<1$, consider the random sum $S_M=N_1(a)\!+\!N_2(a)\!+\! \ldots\! +\!N_M(a)$, where each of the r.v. $N_i(a)$ has two-valued Bernoulli distribution $p_0=1-a$ and $p_1=a$.  The {\it a-fraction}  of $M$, equivalently the   {\it thinning} operation $\odot$, is defined as \cite{StvH79}: 
\vspace*{-.1cm}
\begin{equation}
a \odot M:=   \sum_{i=1}^M N_i(a)\,,\;\;\forall a \in (0,1)\,.
    \label{thin}
\end{equation}
If $H(w)$ is the p.g.f. of  $M$ and $\mathcal{B}_{a}(w))=1-a + aw$ the p.g.f. of $N_i(a)$, then $K=a\odot M$ has p.g.f. $H(\mathcal{B}_{a}(w)))$.  For $\E M < \infty$, we have $\E K = a \cdot \E M$ and the following proposition holds.

\begin{Proposition}\label{prop:scal}
Let $\mu_{[k]},\;k=1,2,3,\ldots $ be the factorial moments of the r.v. $M \in \mathbb{Z}_+$ with  $\E M=\mu_{[1]}< \infty$. Then the {\it scaled factorial moments}   
\begin{equation}
\label{eq: Fj}
F_k := \frac{\mu_{[k]}}{(\mu_{[1]})^k} 
\end{equation}
are thinning-invariant. 
\end{Proposition}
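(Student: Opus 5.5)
The plan is to compute the factorial moments of $K=a\odot M$ directly from its p.g.f. $H(\mathcal{B}_a(w))=H(1-a+aw)$, where $H$ is the p.g.f. of $M$. By (\ref{facmom}), the $k$-th factorial moment of $K$ is the $k$-th derivative of this composite at $w=1$. Since $\mathcal{B}_a$ is affine with derivative $a$, the chain rule gives no Faà di Bruno cross terms:
\[
\frac{d^k}{dw^k}H(1-a+aw)=a^k H^{(k)}(1-a+aw).
\]
Evaluating at $w=1$, where $\mathcal{B}_a(1)=1$, yields $\mu_{[k]}(K)=a^k\mu_{[k]}(M)$. For $k=1$ this is the identity $\E K=a\,\E M$ already noted before the statement.

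Substituting into (\ref{eq: Fj}) then gives
\[
F_k(K)=\frac{a^k\mu_{[k]}}{(a\mu_{[1]})^k}=\frac{\mu_{[k]}}{(\mu_{[1]})^k}=F_k(M),
\]
which is the asserted invariance for every $a\in(0,1)$.

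The only real obstacle is justifying derivatives at the boundary point $w=1$ when $H$ is merely known to converge on $|w|\le1$. I would handle this by interpreting $\mu_{[k]}$ as the left limit $\lim_{w\to1^-}H^{(k)}(w)$, which exists in $[0,\infty]$ by Abel's theorem because $H$ has non-negative coefficients. Since $1-a+aw\to1^-$ as $w\to1^-$, the identity $\mu_{[k]}(K)=a^k\mu_{[k]}(M)$ holds in $[0,\infty]$. In particular $\mu_{[k]}(K)$ is finite exactly when $\mu_{[k]}(M)$ is, so $F_k$ is preserved whenever it is defined; the hypothesis $\mu_{[1]}<\infty$ guarantees the denominator is finite.

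A probabilistic alternative, if preferred, is to note that conditionally on $M=m$ the variable $K$ is binomial$(m,a)$, whose $k$-th factorial moment is $(m)_k a^k$. Averaging over $M$ gives the same identity.
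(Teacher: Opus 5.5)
Your proposal is correct and is essentially the paper's own proof: the paper also differentiates $R(w)=H(\mathcal{B}_a(w))$ by the chain rule to get $R^{(k)}(w)=a^kH^{(k)}(1-a(1-w))$, and then notes that the factors of $a$ cancel in $R^{(k)}(1)/(R^{(1)}(1))^k$. Your justification of evaluating at the boundary point $w=1$ via monotone left limits, and your binomial-conditioning alternative, add rigor the paper omits but do not change the argument.
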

\begin{proof}
Let  $H(w)$ and $R(w)=H(\mathcal{B}_{a}(w))$ be the p.g.f. of r.v. $M$ and $a\odot M$, respectively.  Then
\[ R^{(k)}(w)=a^k H^{(k)}(1\!-\!a(1\!-\! w))\;,\;\;\;
\frac{R^{(k)}(1)}{(R^{(1)}(1))^k}\!=\!\frac{a^jH^{(k)}(1)}{(a H^{(1)}(1))^k}\!=\! F_k \]   
\end{proof}

\begin{Theorem}\label{PSD_thin}
Suppose that the thinning family $\{H(\mathcal{B}_a(w)),\; a\in (0,1]\}$ is a PSD family $\{Z(\theta w)/Z(\theta),\; \theta \in (0,1]\}$ with $\theta(a)$ depending on $a$. Then, for some suitable constants $A$ and $\gamma$,
\begin{equation}\label{thfam}
H(w) = (1-A(1-w))^{\gamma} \;.
\end{equation}  
\end{Theorem}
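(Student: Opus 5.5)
We prove the statement by turning the identity between the two families into a functional equation for $h:=\log Z$ and solving it via a first-order ODE in $\theta$.

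\textbf{Setting up the identity.} The hypothesis says that for every $a\in(0,1]$
\[
H(1-a+aw)=\frac{Z(\theta w)}{Z(\theta)},\qquad \theta=\theta(a).
\]
Taking logarithms and writing $h:=\log Z$ gives
\[
\log H(1-a+aw)=h(\theta w)-h(\theta).
\]
The left side depends on $w$ only through $1-a+aw$. The right side depends on $w$ only through $\theta w$.

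We will assume $\theta(a)$ is differentiable and non-constant; otherwise the thinning family would be trivial.

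\textbf{Differentiating in $w$ and in $a$.} Differentiating in $w$ gives
\[
a\,\frac{H'}{H}(1-a+aw)=\theta\, h'(\theta w).
\]
Differentiating in $a$ gives
\[
(w-1)\frac{H'}{H}(1-a+aw)=\theta'(a)\bigl[w\,h'(\theta w)-h'(\theta)\bigr].
\]
Eliminating $\frac{H'}{H}$ between the two, and putting $x=\theta w$, yields
\[
\Bigl(\frac{x}{\theta}-1\Bigr)\theta\, h'(x)=a\,\theta'(a)\Bigl[\frac{x}{\theta}\,h'(x)-h'(\theta)\Bigr].
\]
Set $c(\theta):=a\theta'(a)/\theta$. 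Rearranging,
\[
h'(x)\,\bigl[x(1-c)-\theta\bigr]=-c\,\theta\, h'(\theta).
\]
This holds for all $x$ in a neighbourhood and for all admissible $\theta$.

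\textbf{Solving the equation.} Fix $\theta$. Then
\[
h'(x)=\frac{K}{\theta-(1-c)x},\qquad K:=c\,\theta\,h'(\theta),
\]
so $h'$ is the derivative of a logarithm of a linear function. Integrating,
\[
Z(x)=C\,(\theta-(1-c)x)^{-K/(1-c)}.
\]
This is of the form $C(1-Bx)^{\gamma}$ with
\[
B=\frac{1-c}{\theta},\qquad \gamma=-\frac{K}{1-c}.
\]
Consistency across different $\theta$ forces $B$ and $\gamma$ to be constants, and it also determines $\theta(a)$. The degenerate case $c=1$ gives $h'$ constant, i.e. $Z(x)=Ce^{\lambda x}$. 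That is the Poisson case, which is the limit $A\to\infty$, $\gamma\to0$ with $A\gamma$ fixed, so it can be treated as a boundary case of (\ref{thfam}).

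\textbf{Recovering $H$.} Once $Z(x)=C(1-Bx)^{\gamma}$, we have
\[
\frac{Z(\theta w)}{Z(\theta)}=\Bigl(1-\frac{B\theta}{1-B\theta}(w-1)\Bigr)^{\gamma}.
\]
Evaluating at $a=1$, where the thinning family returns $H$ itself, gives $H(w)=(1-A(1-w))^{\gamma}$ with $A=-B\theta(1)/(1-B\theta(1))$. This is (\ref{thfam}).

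Conversely, $H(1-a+aw)=(1-aA(1-w))^{\gamma}$, which is again of PSD type. This confirms that the family is consistent.

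\textbf{Main obstacle.} The hard step is the elimination, and in particular justifying that the relation holds for $x$ ranging over an open set independently of $\theta$, so that the ODE in $x$ really has constant coefficients. A cleaner route is to use Proposition \ref{prop:scal}. Thinning leaves the scaled factorial moments $F_k$ invariant, while for a PSD Theorem \ref{PSDcum} ties all cumulants to the single function $\kappa_1(\theta)$. Imposing invariance of $F_2$, i.e. $\mu_{[2]}=F_2\,\mu_{[1]}^2$, gives the Riccati-type relation
\[
\kappa_2=\kappa_1+(F_2-1)\kappa_1^2.
\]
Combined with $\kappa_2=\theta\,d\kappa_1/d\theta$, this integrates to
\[
\kappa_1=\frac{\gamma A\theta}{1-A\theta}\qquad\text{with } \gamma=\frac{1}{1-F_2}.
\]
By Corollary \ref{Khatri}, the first two cumulants determine the PSD, so this pins down $Z(x)=(1-Ax)^{-\gamma}$ up to sign conventions for $\gamma$. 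I would fall back on this route if the direct elimination proves delicate.
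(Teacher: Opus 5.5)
Your proof is correct, but it reaches the ODE by a different route from the paper. The paper first normalises $\theta(1)=1$, so that $H=Z$. Setting $w=0$, it rewrites the hypothesis as $h(0)+h(1-a+aw)=h(1-a)+h(\theta(a)w)$ with $h=\log H$. It then differentiates once and twice in $w$ only and evaluates at $w=0$, giving $a\,h'(1-a)=\theta h'(0)$ and $a^2h''(1-a)=\theta^2h''(0)$. Dividing the second by the square of the first removes both $a$ and $\theta(a)$ and leaves the autonomous equation $h''(u)/h'(u)^2=h''(0)/h'(0)^2$ on $u\in[0,1)$, so $-1/h'$ is linear. That route needs no regularity of $\theta(a)$ at all.

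You also differentiate in $a$, so you need $\theta$ differentiable with $\theta'(a)\neq 0$ somewhere, and you should justify this rather than assume it. It is easy: from $w=0$ you get $Z(\theta(a))=Z(0)/H(1-a)$, which forces $Z(0)>0$. Since $Z'>0$ on $(0,1)$, $\theta(a)$ is smooth and strictly increasing. This gives $c>0$ wherever $\theta'(a)\neq0$, so $\theta-(1-c)x>0$ on $[0,\theta]$. In exchange, your elimination yields $(\log Z)'$ algebraically as the reciprocal of a linear function, with no second differentiation, and it avoids the normalisation $\theta(1)=1$.

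The ``main obstacle'' you flag is not real. For one fixed $a$ with $\theta'(a)\neq 0$, your relation holds for every $x\in(0,\theta(a))$. Since $\log Z$ is real-analytic on $(0,1)$, the formula extends from there, so no consistency argument across different $\theta$ is needed.

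Your fallback via invariance of $F_2$ is also sound and perhaps the most transparent. The equation $\theta\kappa_1'=\kappa_1+(F_2-1)\kappa_1^2$ is Bernoulli: $1/\kappa_1$ is affine in $1/\theta$. Integrating $\kappa_1=\theta(\log Z)'$ then gives $Z$ directly, so Corollary~\ref{Khatri} is not even needed.

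Finally, you are right that $Z(x)=Ce^{\lambda x}$ satisfies the hypothesis but is only a limit of (\ref{thfam}). This is your case $c=1$ and the paper's case $h''(0)=0$, which the paper's ``integrating yields the conclusion'' passes over.
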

\begin{proof}
For simplicity, assume $\theta(1)=1$, so $H(w)=Z(w)$. Then
\begin{equation}\label{thpow1}
H(1-a+aw)=\frac{Z(\theta w)}{Z(\theta)}= \frac{H(\theta(a)w)}{H(\theta(a))}\;.
\end{equation}
Setting in (\ref{thpow1}) $w=0$ gives
$H(\theta(a))H(1-a)=H(0)$,  and so  
\begin{equation}\label{thpow2}
H(1-a+aw)H(0) = H(\theta(a)w)H(1-a).   
\end{equation}
Letting $h(w)=\log H(w)$, equation (\ref{thpow2})  becomes
\begin{equation}\label{thpow3}
h(0) + h(1-a+aw) = h(1-a) + h(\theta(a)w).  
\end{equation}
Differentiating once and twice  in $w$ yields
\begin{equation}\label{thpow4}
a h^{\prime}(1-a+aw) = \theta(a) h^{\prime}(\theta(a)w)\;\; \text {and}\;\; a^2 h^{\prime \prime}(1-a+aw) = \theta^2(a) h^{\prime \prime}(\theta(a)w).  
\end{equation}
Setting $w=0$:
\begin{equation}\label{thpow6}
\frac{h^{\prime \prime}(1-a)}{(h^{\prime}(1-a))^2}=\frac{h^{\prime \prime}(0)}{(h^{\prime}(0))^2}
\end{equation}
Integrating  (\ref{thpow6}) yields the conclusion of the theorem.
\end{proof}

Two special cases of (\ref{thfam}) are worth noting. For $A=a$, $\gamma=m$, this gives the Binomial distributions (\ref{Binom}) with parameters $a$ and $m$. For $A=-\theta/(1-\theta)$ and $\gamma=-k$, we obtain the NBD p.g.f. (\ref{NBD}) with parameters $\theta$ and $k$. 

The following result connects the thinning family with an infinitely divisible family.
\begin{Theorem}\label{eq: infdthin}
Suppose that the thinning family $\{H(\mathcal{B}_a(w)),\; a\in (0,1]\}$ is a family of infinite divisible p.g.f. $\{H(w;\lambda),\; \lambda>0 \}$ with $\lambda(a)$ depending on $a$. Then 
\[H(w;\lambda) = \exp\{\lambda(G(w)-1)\}=
\exp\{-A (1-w)^{\gamma}\} ,\] 
where $A>0$ and $0<\gamma \leq 1$. (We assume the corresponding r.v. is non-degenerate).
\end{Theorem}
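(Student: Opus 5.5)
The plan is to turn the hypothesis into a functional equation for the secondary p.g.f. and solve it as a multiplicative Cauchy equation.

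\emph{Setting up the equation.} Write the infinitely divisible family in compound Poisson form (\ref{eq: cpd}) with a fixed secondary p.g.f. $G$. By the Remark after (\ref{eq: cpd}) (see (\ref{zero-ext-compP})), I normalize $G(0)=0$, absorbing any mass at zero into $\lambda$. At $a=1$ we have $\mathcal{B}_1(w)=w$, so $H(w)=\exp\{\lambda_0(G(w)-1)\}$ with $\lambda_0:=\lambda(1)$. The hypothesis $H(\mathcal{B}_a(w))=\exp\{\lambda(a)(G(w)-1)\}$ then becomes
\[
\lambda_0\bigl(G(1-a+aw)-1\bigr)=\lambda(a)\bigl(G(w)-1\bigr),\qquad w\in[0,1],\; a\in(0,1].
\]

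\emph{Reducing to a Cauchy equation.} Substitute $u=1-w\in[0,1]$ and set $\varphi(u):=1-G(1-u)$. Then $\varphi(0)=0$ and $\varphi(1)=1-G(0)=1$. Also $\varphi$ is continuous and nondecreasing on $[0,1]$, and it is strictly increasing because $G$ has nonnegative coefficients and is non-constant (the r.v. is non-degenerate). With $c(a):=\lambda(a)/\lambda_0$ the equation reads
\[
\varphi(au)=c(a)\,\varphi(u).
\]
Putting $u=1$ gives $c(a)=\varphi(a)$, so $c$ is continuous and strictly increasing on $(0,1]$. Applying the relation twice gives $c(ab)=c(a)c(b)$. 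Hence $\psi(t):=\log c(e^{-t})$ is a continuous additive function on $[0,\infty)$, so $\psi(t)=-\gamma t$. This yields $c(a)=a^{\gamma}$, and therefore $\varphi(u)=u^{\gamma}$. Strict monotonicity forces $\gamma>0$.

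\emph{Identifying $H$.} We get $G(w)=1-(1-w)^{\gamma}=\mathcal{S}(w;\gamma)$, and so $H(w;\lambda)=\exp\{-\lambda(1-w)^{\gamma}\}$, i.e. $A=\lambda>0$. For $G$ to be a p.g.f. we need $\gamma\le 1$. Indeed, the coefficient of $w^2$ in $1-(1-w)^\gamma$ is $-\binom{\gamma}{2}$, which is negative for $\gamma>1$ (for integer $\gamma\ge2$ this is also immediate). For $0<\gamma\le1$ all coefficients $(-1)^{j+1}\binom{\gamma}{j}$ are nonnegative, which is consistent with the Sibuya entry in Table~\ref{tab:comppois}.

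\emph{Main obstacle.} The delicate step is the passage from the functional equation to a power law. It needs regularity of $c$, which I obtain from $c=\varphi$ being continuous and monotone. It also needs a careful normalization $G(0)=0$, since otherwise $\varphi(1)\neq1$ and the parametrization by $\lambda$ is not unique. If the normalization via (\ref{zero-ext-compP}) proves awkward, I would instead work directly with $\log H(1-u)$, which satisfies the same scaling relation up to the additive constant $\log H(0)$.
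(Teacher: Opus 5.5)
Your proof is correct, but it solves the key step by a different technique than the paper. The two proofs begin the same way. Evaluating at $w=0$ (your $u=1$) identifies the parameter as $\lambda(a)=-\log H(1-a)$, which is exactly your $c(a)=\varphi(a)$.

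The paper then differentiates $h(1-a)(G(w)-1)=-h(1-a+aw)$ in $w$ and sets $w=0$. It solves the resulting linear ODE $\gamma h(b)=(b-1)h'(b)$ with $\gamma:=G'(0)$, which in real form gives $h(b)=h(0)(1-b)^{\gamma}$.

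You avoid differentiation entirely and reduce to the multiplicative Cauchy equation $\varphi(au)=\varphi(a)\varphi(u)$. This needs only continuity and monotonicity of $G$ on $[0,1]$. It also makes the semigroup law $\lambda(a)=\lambda(1)\,a^{\gamma}$ explicit, reflecting $\mathcal{B}_a\circ\mathcal{B}_b=\mathcal{B}_{ab}$.

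The paper's route has one advantage: the exponent is identified from the outset as $\gamma=G'(0)=g_1$, the singleton probability of the secondary law. So $0<\gamma\le 1$ comes for free, since $g_1=0$ would force $H\equiv 1$. You have to establish $\gamma\le 1$ separately.

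Your $w^2$-coefficient check works, but it tacitly uses that equality of $G$ and $1-(1-w)^{\gamma}$ on $[0,1]$ forces equality of power-series coefficients; that is the identity theorem in the open unit disk. A shorter way to finish is
$$\gamma=\lim_{w\to0^+}\frac{1-(1-w)^{\gamma}}{w}=\lim_{w\to0^+}\frac{G(w)}{w}=g_1\le 1,$$
using $G(0)=0$. This recovers exactly the paper's identification.
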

\begin{proof}
We have ~$\exp\{\lambda(G(w)-1)\} = H(1\!-\!a(1-\!w))$.~
Taking  logarithms with $h(w)=\log H(w)$ gives
\begin{equation}\label{eqin}
\lambda (G(w)-1) = h(1-a+a w)\;.    
\end{equation}
Since$G(0)=0$ Eq. \ref{eqin}, setting $w=0$ reduces (\ref{eqin}) to  $-\lambda=h(1-a)$, so
\begin{equation}\label{eqhg}
h(1-a) (G(w) -1) = -h(1-a+a w)\;.    
\end{equation}
Differentiating both sides of (\ref{eqhg}) with respect to $w$:
\begin{equation}\label{eqpr}
h(1-a)G^{\prime}(w) = -a h^{\prime}(1-a+a w).
\end{equation}
Denoting $b=1-a$ and $\gamma = G^{\prime}(0)$, setting $w=0$ in (\ref{eqpr}) gives $\gamma h(b)=(b-1)h^{\prime}(b)$
 Integrating over $b$ yields $h(b)= c_1(b-1)^{\gamma}$, so after substitution back into  (\ref{eqhg}):
\[
 \lambda(G(w)-1)= -h(1-a(1- w))=-c_1(-a(1-w))^{\gamma}\;.
 \]
At $w=0$: $\lambda=-c_1(-a)^{\gamma}= c_1(-1)^{\gamma+1}a^{\gamma}$. Taking $A=a^{\gamma}$ and $c_1=\lambda(-1)^{1-\gamma}$ completes the proof. 
\end{proof}

\subsubsection{Self-decomposability and stability}
The thinning operator $\odot$ allows us to define a subclass of infinitely divisible r.v. $N\in \Z $, called {\it self-decomposable} , satisfying \cite{StvH2004}
\begin{equation}\label{seldec}
  N= a \odot N + M_{a}\;,\;\;  H(w)= H(\mathcal{B}_{a}(w))H_a(w)\;. 
\end{equation} 
Here $H(w)$ and $H_a(w)$ are the p.g.f. of $N$ and $M_a$, respectively. We first show that $N$ is indeed infinitely divisible.
\begin{Theorem}{\rm \cite{StvH2004}}\label{sdinfd}
A self-decomposable distribution on $\Z$ is infinitely divisible.
\end{Theorem}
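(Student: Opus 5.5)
Differentiating the functional equation (\ref{seldec}) and turning it into a statement about the canonical sequence lets us apply Theorem \ref{HexpR}: $H$ is infinitely divisible iff $R(w)=H'(w)/H(w)$ is absolutely monotone on $[0,1)$, i.e.\ has nonnegative power series coefficients.

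First, the basic facts. Self-decomposability (\ref{seldec}) holds for every $a\in(0,1)$. Put $w=0$: this gives $H(0)=H(1-a)H_a(0)$. If $p_0=H(0)$ were zero, then $H_a(0)=0$ for all $a$, since $H(1-a)>0$ for non-degenerate $N$. I would exclude this by showing $H$ has no zeros on $[0,1]$. Iterating $N=a\odot N+M_a$ gives $H(w)=H(\mathcal B_{a}(w))H_a(w)$, and letting $a\to0$ we have $H(\mathcal B_a(w))\to1$, so $H_a(w)\to H(w)$. A standard argument then shows $p_0>0$: the decomposition forces $H(0)\ge H(1-a)\,\mathbb P(M_a=0)$, and one shows $\mathbb P(M_a=0)>0$. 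With $p_0>0$, $h=\log H$ is analytic near $[0,1)$.

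Main step: from $H(w)=H(1-a+aw)H_a(w)$,
\[
\frac{H_a(w)}{1}=\frac{H(w)}{H(1-a+aw)},\qquad \log H_a(w)=h(w)-h(1-a+aw).
\]
The right-hand side is $\int_{1-a+aw}^{1}R-\int_w^1 R$. Dividing by $-\log a$ (equivalently $1-a$) and letting $a\uparrow1$ gives the derivative in the scale variable: $h(w)-h(1-a(1-w))\approx (1-a)(1-w)R(w)$. Hence
\[
(1-w)R(w)=\lim_{a\uparrow1}\frac{1-H_a(w)}{1-a}\cdot(\text{sign adjusted}),
\]
more precisely $-(1-w)h'(w)=\lim_{a\uparrow1}\log H_a(w)/(1-a)$. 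I would write $\log H_a(w)/(1-a)$ and $(H_a(w)-1)/(1-a)$ as equivalent in the limit, since $H_a\to1$ as $a\to1$ by the previous step. The function $(H_a(w)-1)/(1-a)=-(1-H_a(w))/(1-a)$ has the form $c\,(Q_a(w)-1)$ with $Q_a$ a p.g.f. So the limit $-(1-w)R(w)$ equals $\lim c_a(Q_a(w)-1)$. Consequently $(1-w)R(w)=\lim c_a(1-Q_a(w))$, which has nonnegative coefficients except the constant term, i.e.\ $(1-w)R(w)=\rho_0-\sum_{k\ge1}\sigma_k w^k$ with $\sigma_k\ge0$ and $\sum\sigma_k\le\rho_0$.

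The main obstacle is extracting the absolute monotonicity of $R$ itself from this limit. The coefficient relation $r_n-r_{n-1}=-\sigma_n$ alone only shows that $r_n$ is \emph{nonincreasing}. I would therefore use the convergent-series identity $r_n=\rho_0-\sum_{k\le n}\sigma_k=\sum_{k>n}\sigma_k+(\rho_0-\sum_k\sigma_k)\ge0$. This requires a justification that $\sum_k\sigma_k\le\rho_0$, which comes from evaluating at $w\to1^-$ the limit $c_a(1-Q_a(w))\ge0$ (Fatou). Justifying the interchange of limit and coefficient extraction is routine, by locally uniform convergence of analytic functions on $|w|<1$. Then $r_n\ge0$, and Theorem \ref{HexpR} (or Theorem \ref{infd_suf} with $\sum r_n/(n+1)=-\log p_0<\infty$) yields infinite divisibility; in fact $r_n$ is nonincreasing, which is the familiar characterization.
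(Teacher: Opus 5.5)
Your main step is essentially the paper's argument: write $R=H'/H$ as the limit of $(1-H_a(w))/\bigl((1-a)(1-w)\bigr)$ as $a\uparrow1$, read off nonnegative coefficients, and apply Theorem \ref{HexpR}. The paper divides by $1-w$ before taking the limit, so that $(1-H_a(w))/(1-w)=\sum_n\mathbb{P}(M_a>n)w^n$ visibly has nonnegative coefficients. Your detour through $\rho_0,\sigma_k$ and Fatou is correct, but the ``main obstacle'' is illusory: finite sums commute with limits, so
\[
r_n=\rho_0-\sum_{k\le n}\sigma_k=\lim_{a\uparrow1}\frac{\mathbb{P}(M_a>n)}{1-a}\ge0
\]
directly. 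Note also a sign slip: $h(w)-h(1-a(1-w))\approx-(1-a)(1-w)R(w)$. The limit--coefficient interchange is fine once you note two facts. First, $|1-H_a(w)|\le2(1-H_a(0))$ on $|w|\le1$. Second, $(1-H_a(0))/(1-a)\to R(0)=p_1/p_0$. Together these give a uniform bound, and Vitali's theorem then applies. But this, like the rest of your argument, needs $p_0>0$.

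The genuine gap is precisely that preliminary claim $p_0>0$, which you assert but do not prove. Sending $a\to0$ only yields $H_a\to H$, hence $H_a(0)\to H(0)$, which is circular. Your inequality $H(0)\ge H(1-a)\mathbb{P}(M_a=0)$ is in fact an equality, and ``one shows $\mathbb{P}(M_a=0)>0$'' is exactly what has to be proved. Likewise, your later appeal to ``$H_a\to1$ as $a\to1$ by the previous step'' cites something that step never established.

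The missing idea is to send $a\uparrow1$ instead, as the paper does. For $w\in(0,1)$ we have $H(w)>0$, so $H_a(w)=H(w)/H(1-a+aw)\to1$. Since $1-H_a(w)=\sum_n q^{(a)}_n(1-w^n)\ge(1-w)\bigl(1-H_a(0)\bigr)$, this convergence transfers to $w=0$: $H_a(0)\to1$. Hence $p_0=H(1-a)H_a(0)>0$ for $a$ close to $1$. With this step inserted, the rest of your proof goes through.
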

\begin{proof}
Consider limit $\lim_{a\to 1^{-}}H_a(w)$, which must hold also for $w=0$ since $a \in (0,1)$, $w \in (0,1)$:
\[
1-H_a(0) \leq \{1-H_a(w) \} + \{H_a(w)-H_a(0) \} \leq
\{1-H_a(w) \} + \frac{w}{1-w}\;.
\]
It follows $H_a(0) >0$ for all $a$ sufficiently close to one. Applying (\ref{seldec}) for such $a$ shows that  $H(0) >0$ as well. Noting that the $R$-function of  $H(w)$ defined in (\ref{Rw}) can be written as a limit of completely monotone functions,
\[
R(w)=\frac{1}{H(w)}\lim_{a \to 1^{-}}\frac{H(1\!-\!a\!+\!wa)-H(w)}{(1-a)(1-w)}=\lim_{a \to 1^{-}}\frac{1}{1-a}\frac{1-H_a(w)}{1-z}\;,
\]
we conclude that $R$ is absolutely monotone. By Theorem (\ref{HexpR}), $H(w)$ is infinitely divisible.
\end{proof}

Since $H(w)$ is infinitely divisible, so is $H(\mathcal{B}_{a}(w))$ (by Proposition \ref{infdthin}). That $H_a(w)$ is also absolutely monotone and infinitely divisible is less immediate \cite{StvH2004}. Last but not least, the sum of two self-decomposable r.v. $N_1+N_2$ is also self-decomposable.

\begin{Example} Consider a gas of particles with short-range (approximately point) interactions occupying volume $V$. Assuming that $N$ is an infinitely divisible r.v. with finite mean $\E N \equiv \kappa_1<\infty$, the distribution of the particles in a sub-volume $V_1 \subset V$ is described by the r.v.$N_1=a\odot N$,  where $a=V_1/V$.  Clearly, $F_j(V_1)=F_j(V)$. Moreover, the particle distribution in the complementary sub-volume $V_2=V-V_1$ is described by the infinitely divisible r.v. $N_2=N-N_1$, too. 
\end{Example}

Consider the thinning of Sibuya p.g.f.:
\begin{equation}\label{Scaled_sib}
\mathcal{S}(\mathcal{B}_a(w);\gamma)=1-a^{\gamma}(1- w)^{\gamma}:=\mathcal{S}(w;\gamma,\nu)\;;\;\;\; \nu=a^{\gamma} \;.  
\end{equation}
Here $\mathcal{S}(w;\gamma,\nu)$ is the p.g.f. of {\it scaled Sibuya distribution} \cite{Christoph2000, Klebanov2023-xj}. By allowing $\nu \in(0,1 \rangle$, the scaled Sibuya distribution is a zero-extended version of the ordinary Sibuya $\mathcal{S}(w;\gamma,1)=\mathcal{S}(w;\gamma)$, closed under thinning $\mathcal{S}(\mathcal{B}_b(w);\gamma,\nu) = \mathcal{S}(w;\gamma,b^{\gamma}\nu)$. 
 
\begin{Proposition}\label{Sib}
 Let $\mathcal{S}(w;\gamma,\lambda)=1-\lambda(1-w)^{\gamma}$ be a scaled Sibuya p.g.f.
 
 (1) If  $a \neq b$ and $p$ are some constants such that
 $a,b \in (0,1),\; p\in\langle 0,1\rangle$, then 
 \begin{equation}\label{scalsibmix}
 p\cdot \mathcal{S}(w;\gamma,\lambda_1)+(1-p)\cdot \mathcal{S}(w;\gamma,\lambda_2)=\mathcal{S}(w;\gamma, \nu) \;,\;\; \nu=p\lambda_1+(1-p)\lambda_2\;.
 \end{equation}
 
(2) If $\lambda(x)\in (0,1)$ is a function and $f(x)$ is a p.d.f., then
\begin{equation}
 \int_{0}^{\infty}\mathcal{S}(w;\gamma,\lambda(x)) f(x)dx = \mathcal{S}(w;\gamma,\nu)\;,\;\;\;
\nu= \int_{0}^{\infty}\lambda(x) f(x)dx\;.   
\end{equation}
\end{Proposition}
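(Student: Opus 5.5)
The proof is a direct computation from the explicit form $\mathcal{S}(w;\gamma,\lambda)=1-\lambda(1-w)^{\gamma}$. The key observation is that this expression is affine in the parameter $\lambda$ for fixed $w$ and $\gamma$. Any convex combination or integral mixture of such functions over $\lambda$ therefore collapses to the same affine form, with $\lambda$ replaced by its average. After this, it only remains to check that the mixed parameter is admissible, so that the result is again a scaled Sibuya p.g.f.

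For part (1), with $p\in\langle 0,1\rangle$, I would expand
\[
p\bigl(1-\lambda_1(1-w)^{\gamma}\bigr)+(1-p)\bigl(1-\lambda_2(1-w)^{\gamma}\bigr)=1-\bigl(p\lambda_1+(1-p)\lambda_2\bigr)(1-w)^{\gamma},
\]
using $p+(1-p)=1$. This gives (\ref{scalsibmix}) with $\nu=p\lambda_1+(1-p)\lambda_2$. I read the hypotheses on $a,b$ in the statement as the constraints $\lambda_1,\lambda_2\in(0,1)$. Then $\nu$ is a convex combination of points of $(0,1)$ and lies in $(0,1)$.

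For admissibility I would argue probabilistically rather than redo coefficient checks. A mixture of two p.g.f.s with weights $p,1-p$ is the p.g.f. of the corresponding mixture of distributions, hence is automatically a p.g.f. The identity then shows that $\mathcal{S}(w;\gamma,\nu)$ is a genuine p.g.f. Alternatively, one can invoke the remark after (\ref{Scaled_sib}) that $\mathcal{S}(w;\gamma,\nu)$ is a p.g.f. for every $\nu\in(0,1\rangle$.

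For part (2) the argument is the same with the sum replaced by an integral. I would write
\[
\int_0^\infty\bigl(1-\lambda(x)(1-w)^{\gamma}\bigr)f(x)\,dx=\int_0^\infty f(x)\,dx-(1-w)^{\gamma}\int_0^\infty\lambda(x)f(x)\,dx ,
\]
and use $\int f=1$ to obtain $\mathcal{S}(w;\gamma,\nu)$ with $\nu=\int\lambda f$. The only step needing care is splitting the integral and pulling out the $x$-independent factor $(1-w)^{\gamma}$. This is justified because $|w|\le1$ gives $|1-w|^{\gamma}\le 2^{\gamma}$, and $0<\lambda(x)<1$ with $f$ a density makes $\lambda f$ integrable. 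Here I take $\lambda(\cdot)$ to be measurable, which is implicit in the statement. Finally, $0<\lambda<1$ pointwise implies $0<\nu<1$.

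There is no serious obstacle. The only points requiring attention are the interchange and integrability in (2) and confirming $\nu\in(0,1)$, so that the mixture lands back in the scaled Sibuya family rather than being a mere formal expression.
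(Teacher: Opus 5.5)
Your proof is correct and is essentially the paper's own argument. The paper gives no separate proof, relying, as in part~1 of the proof of Theorem~\ref{ThSib}, on direct substitution: $1-\lambda(1-w)^{\gamma}$ is affine in $\lambda$, so any convex or integral mixture over $\lambda$ returns the same form with $\lambda$ replaced by its average. Your additional checks go beyond what the paper states, and they are what is needed for the mixture to be a genuine scaled Sibuya p.g.f. rather than a formal identity: reading the hypotheses on $a,b$ as $\lambda_1,\lambda_2\in(0,1)$, integrability of $\lambda f$, and $\nu\in(0,1)$.
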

\begin{Remark}
With $N_{1,2}=\mathbb{P}(N_{1,2}>0)$  and $\lambda = \mathbb{P}(N>0)$, Eq. \ref{scalsibmix} yields
\[
\mathbb{P}(N>0)=p\cdot \mathbb{P}(N_{1}>0)+(1-p)\cdot \mathbb{P}(N_{2}>0)
\]
\end{Remark}

Replacing $\mathcal{S}(w;\gamma,\lambda)$ in (\ref{scalsibmix}) by an arbitrary p.g.f. $G(w)$ leads to the following characterization of the scaled Sibuya distribution.

\begin{Theorem}\label{ThSib} Let $G(w)$ be a p.g.f. The relation
\begin{equation}\label{eqSib}
p\cdot G(\mathcal{B}_a(w))+(1-p)\cdot G(\mathcal{B}_b(w)) = G(\mathcal{B}_c(w))
\end{equation}
holds for all $p,a,b \in (0,1)$ and $c=(pa^{\gamma}+(1-p)b^{\gamma})^{1/\gamma}$ for a fixed $\gamma \in (0,1)$ if and only if 
\begin{equation}\label{eqSib1}
G(w)= 1-\lambda (1-w)^{\gamma}=\mathcal{S}(w;\gamma,\lambda)\;.   
\end{equation}
\end{Theorem}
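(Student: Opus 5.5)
The plan is to prove the two directions separately. Sufficiency is a direct computation, and necessity comes from a change of variables that turns (\ref{eqSib}) into an exact convexity identity.

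\emph{Sufficiency.} If $G(w)=\mathcal{S}(w;\gamma,\lambda)$, then by (\ref{Scaled_sib}) we have $G(\mathcal{B}_a(w))=1-\lambda a^{\gamma}(1-w)^{\gamma}$. Taking the mixture with weights $p,1-p$ gives $1-\lambda\bigl(pa^{\gamma}+(1-p)b^{\gamma}\bigr)(1-w)^{\gamma}$. With $c^{\gamma}=pa^{\gamma}+(1-p)b^{\gamma}$ this is exactly $G(\mathcal{B}_c(w))$. This is essentially Proposition \ref{Sib}(1).

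\emph{Necessity.} I will work with $w\in[0,1)$ and put $x=1-w\in(0,1]$, so that $G(\mathcal{B}_a(w))=G(1-ax)$. Define on $[0,1]$ the function
\[
f(u):=1-G\bigl(1-u^{1/\gamma}\bigr).
\]
Then $G(1-ax)=1-f(a^{\gamma}x^{\gamma})$. Write $y=x^{\gamma}$, $u=a^{\gamma}y$ and $v=b^{\gamma}y$. Relation (\ref{eqSib}) becomes
\[
p f(u)+(1-p)f(v)=f\bigl(pu+(1-p)v\bigr)
\]
for all $p\in(0,1)$ and all $u,v\in(0,y)$. Taking $y=1$ ($w=0$), this holds for all $u,v\in(0,1)$.

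Because the identity holds for \emph{every} $p$, not only $p=1/2$, no Cauchy-type regularity argument is needed. I fix two points $u_0<v_0$ in $(0,1)$. Every point between them is a convex combination of $u_0$ and $v_0$, so $f$ coincides on $[u_0,v_0]$ with the chord through $(u_0,f(u_0))$ and $(v_0,f(v_0))$. Letting $u_0\downarrow 0$ and $v_0\uparrow 1$, the overlapping chords must agree, so $f(u)=\alpha+\beta u$ on $(0,1)$.

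Next, $G$ is continuous at $w=1$ with $G(1)=1$, so $f(0^+)=0$ and hence $\alpha=0$. Returning to the original variable gives $G(w)=1-\beta(1-w)^{\gamma}$ for $w\in[0,1)$, and by continuity also at $w=1$. Since $G$ is a p.g.f., it is determined by its values on $[0,1)$: its Taylor coefficients at $0$ are fixed by the function there. So $G=\mathcal{S}(\cdot;\gamma,\lambda)$ with $\lambda=\beta$.

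The admissible range of $\lambda$ follows from the p.g.f. property. The condition $G(0)=1-\lambda\ge 0$ gives $\lambda\le 1$. Nonnegativity of the coefficients $-\lambda\binom{\gamma}{j}(-1)^j$ for $j\ge1$, with $\gamma\in(0,1)$, forces $\lambda\ge 0$, and $\lambda>0$ for a nondegenerate $G$.

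The only delicate point I anticipate is the bookkeeping of domains. One must check that the substitution $u=a^{\gamma}y$, $v=b^{\gamma}y$ with $a,b\in(0,1)$ really sweeps all pairs in $(0,1)^2$; taking $y=1$ does this. One must also check that the identification at the endpoint $u\to0$ uses only continuity of $G$ at $1$, which holds for any p.g.f. by Abel's theorem. Everything else is routine.
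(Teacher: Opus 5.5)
Your proof is correct, and it takes a different route from the paper's.

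The paper subtracts the known solution. It sets $Q(w)=G(w)-\bigl(1-(1-w)^{\gamma}\bigr)$ and $R(u)=Q(1-u)$, so that $pR(au)+(1-p)R(bu)=R(cu)$. It then lets $a\to 0$, using $R(0)=Q(1)=0$ and continuity of $G$ at $1$, to reach the one-parameter scaling relation $(1-p)R(t)=R\bigl((1-p)^{1/\gamma}t\bigr)$, i.e. $R(st)=s^{\gamma}R(t)$ for $s\in(0,1)$. From this it concludes that $R(u)/u^{\gamma}$ is constant, whence $G(w)=1-(1-k)(1-w)^{\gamma}$.

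You instead linearize the whole relation by the substitution $u=(1-w)^{\gamma}$. This turns it into Jensen's equation with arbitrary weights, so affinity of $f$ follows from the elementary chord argument. Continuity at $w=1$ then enters only to kill the constant term, whereas in the paper it justifies the limit $a\to 0$.

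What your version buys:
\begin{itemize}
\item It needs only the scalar instance $w=0$ of (\ref{eqSib}).
\item It does not require guessing the solution in advance.
\item It makes explicit the domain bookkeeping, the passage from $(0,1)$ to the whole p.g.f. via analyticity, and the admissible range $\lambda\in[0,1]$.
\end{itemize}
The paper leaves these points implicit in its terse final step, which argues that $K(bu)=K\bigl((1-p)^{1/\gamma}bu\bigr)$ forces $K$ to be constant, with the ansatz misprinted as $R(u)=u^{p}K(u)$ instead of $u^{\gamma}K(u)$.

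What the paper's route buys is a stronger observation: the characterization already follows from the degenerate case $a\to 0$. There, zero-inflation of $b\odot N$ by a point mass at $0$ with weight $p$ coincides with an extra thinning by the factor $(1-p)^{1/\gamma}$. This is a self-similarity statement that fits the paper's thinning and scaling theme.
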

\begin{proof}
1. By substituting (\ref{eqSib1}) into (\ref{eqSib}) verifies that it is indeed a solution. 

2. Suppose $G(w)$ is any solution. Denote $Q(w)=G(w)-(1-(1-w)^{\gamma})$. Clearly, $Q(1)=0$ and 
\[ p Q(1-a+aw)+(1-p)Q(1-b+bw)=Q(1-c+cw). \]
Substituting $u=1-w$, $R(u)=Q(1-u)$ gives
$p R(au)+(1-p)R(bu) = R(cu)$.
As $a \to 0$ we have $c \to (1-p)^{1/\gamma}b$, $R(au) \to 0$, so $ (1-p) R(bu) = R((1-p)^{1/\gamma}bu)$.
Substituting $R(u)=u^{p}K(u)$ yields
$ K(bu)=K((1-p)^{1/\gamma}bu)$; since the right hand side depends on $p$ but the left-hand does not, $K(u)=k=const$. Returning from $K$ to $G(w)$  completes the proof.
\end{proof}

\begin{Corollary}\label{DSD_unique}
    Let $N, N_1, N_2 \in \Z$ be \iid infinitely divisible r.v.  with the common p.g.f. $H(w;\lambda)=P(G(w);\lambda)$. The equation $c\odot N = a \odot N_1 + b\odot N_2$ holds iff the secondary p.g.f. $G(w)$ represents the scaled Sibuya distribution.
\end{Corollary}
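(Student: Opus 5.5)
Translate the distributional identity $c\odot N = a\odot N_1 + b\odot N_2$ into an identity for probability generating functions and then reduce it to the functional equation of Theorem \ref{ThSib}. Since $N,N_1,N_2$ are i.i.d.\ with p.g.f.\ $H(w;\lambda)=P(G(w);\lambda)=\exp\{\lambda(G(w)-1)\}$, and thinning $a\odot N$ has p.g.f.\ $H(\mathcal{B}_a(w))$, the identity is equivalent (by independence, Eq.~(\ref{eq: prodpgf})) to
\[
H(\mathcal{B}_c(w);\lambda)=H(\mathcal{B}_a(w);\lambda)\,H(\mathcal{B}_b(w);\lambda)\;.
\]
Taking logarithms, which is legitimate because an infinitely divisible p.g.f.\ has $H(w)>0$ on $[0,1]$ (Proposition \ref{infd_nec}), and dividing by $\lambda$, gives
\[
G(\mathcal{B}_c(w))-1 = \bigl(G(\mathcal{B}_a(w))-1\bigr)+\bigl(G(\mathcal{B}_b(w))-1\bigr)\;.
\]

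For sufficiency, substitute the scaled Sibuya form $G(w)=1-\nu(1-w)^{\gamma}$. Since $1-\mathcal{B}_a(w)=a(1-w)$, the identity reduces to $c^{\gamma}=a^{\gamma}+b^{\gamma}$. So the relation holds with $c=(a^\gamma+b^\gamma)^{1/\gamma}$ whenever $a^\gamma+b^\gamma\le 1$, and this also covers the discrete stable case $\nu=1$.

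For necessity, write $u=1-w$ and $R(u)=1-G(1-u)$, so that $R(0)=0$ and the equation becomes $R(cu)=R(au)+R(bu)$. To match the form in Theorem \ref{ThSib}, one could set $a=p^{1/\gamma}a'$, $b=(1-p)^{1/\gamma}b'$. I would instead argue directly, along the lines of the proof of Theorem \ref{ThSib}:
\begin{enumerate}
\item Taking $a=b$ gives $R(c_2 a u)=2R(au)$, and by iteration $R(c_n a u)=nR(au)$.
\item Combining these iterates yields a homogeneity relation $R(tu)=t^{\gamma}R(u)$ on a dense set of scale factors $t$.
\item Continuity and monotonicity of $R$ on $[0,1]$ (it is a Bernstein-type function of $u$, since $G$ is absolutely monotone) extend this to all $t$.
\item Hence $R(u)=\nu u^{\gamma}$, i.e.\ $G(w)=1-\nu(1-w)^\gamma$.
\end{enumerate}

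Finally, the constraints on the parameters follow from requiring $G$ to be a secondary p.g.f. The coefficient of $w$ is $\nu\gamma\ge 0$, and the higher coefficients $-\nu\binom{\gamma}{k}(-1)^k$ are nonnegative only for $0<\gamma\le 1$. Also $G(0)=1-\nu\ge 0$ forces $\nu\le1$, so $G$ is exactly a scaled Sibuya p.g.f.

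The main obstacle is the necessity step: deducing that $c$ must be of the form $(a^\gamma+b^\gamma)^{1/\gamma}$ and identifying the exponent $\gamma$. The statement leaves implicit how $c$ depends on $a,b$. I would first read it as holding with $c$ determined by $a,b$ through some fixed relation, and show via the Cauchy-type equation $\phi(c)=\phi(a)+\phi(b)$, where $\phi(t)=R(tu)/R(u)$, that $\phi$ is a power $t^\gamma$. This needs a regularity argument for $\phi$, which monotonicity of $R$ supplies.
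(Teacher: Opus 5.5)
Your proposal is correct, but it takes a different route from the paper's. The paper writes $H(\mathcal{B}_c(w);\lambda)=e^{\lambda_1(G(\mathcal{B}_a(w))-1)}e^{\lambda_2(G(\mathcal{B}_b(w))-1)}$ with $\lambda_1=p\lambda$, $\lambda_2=(1-p)\lambda$. This turns the logarithmic identity into the convex-combination equation (\ref{eqSib}), and the paper then simply invokes Theorem~\ref{ThSib}. There homogeneity comes almost for free: letting $a\to0$ in (\ref{eqSib}) gives $(1-p)R(bu)=R((1-p)^{1/\gamma}bu)$, with a scale factor ranging over all of $(0,1)$.

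The price is that the two summands then carry intensities $p\lambda$ and $(1-p)\lambda$. They are therefore not i.i.d.\ copies of $N$ as the statement assumes, and the resulting constraint is $c^\gamma=pa^\gamma+(1-p)b^\gamma$. You keep equal intensities, which yields the additive equation $R(cu)=R(au)+R(bu)$. For that equation the $a\to0$ limit is vacuous ($c\to b$), so you genuinely need your Cauchy-plus-monotonicity argument. In return you prove the corollary as literally stated and obtain $c^\gamma=a^\gamma+b^\gamma$, which for $c=1$ is exactly the double-thinning identity (\ref{eq: dsdpgf}).

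Three remarks on your argument:
\begin{enumerate}
\item The substitution $a=p^{1/\gamma}a'$, $b=(1-p)^{1/\gamma}b'$ turns your equation into (\ref{eqSib}) only if $R$ is already known to be $\gamma$-homogeneous. It cannot be used for necessity, so you were right to argue directly.
\item When $\gamma$ is fixed in advance, as in Theorem~\ref{ThSib}, your steps 1--3 reduce to one line. The function $F(s):=R(s^{1/\gamma})$ satisfies $F(s+t)=F(s)+F(t)$ for $s,t\ge0$, $s+t\le1$, and is monotone, hence linear.
\item If $\gamma$ is not given, Cauchy's theorem does not apply verbatim to your $\phi(t)=R(tu)/R(u)$: it depends on $u$, and $c(a,b)$ is not ordinary addition. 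A clean fix is to use strict monotonicity of $R$ (for non-constant $G$). It forces the factors $x_n$ with $R(x_nv)=nR(v)$ to be independent of $v$. Then $x_{nm}=x_nx_m$ with $x_n$ increasing, so $x_n=n^{1/\gamma}$, and the ratios $(n/m)^{1/\gamma}$ supply your dense set.
\end{enumerate}
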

\begin{proof}
Writing 
$H(\mathcal{B}_c(w);\lambda)=e^{\lambda(G(\mathcal{B}_c(w))-1)}=e^{\lambda_1(G(\mathcal{B}_a(w))-1)}\cdot e^{\lambda_2(G(\mathcal{B}_b(w))-1)}$ and setting $\lambda_1=p\lambda$ and $\lambda_2=(1-p)\lambda$, we find that the secondary p.g.f. $G(w)$ must satisfy Eq. \ref{eqSib}. By Theorem \ref{ThSib}, the unique solution is the scaled Sibuya p.g.f.
\end{proof}
The compounding of the Poisson p.g.f. $P(w;\lambda)$ with the scaled Sibuya gives a compound Poisson-Sibuya distribution with a different scale parameter (see  Eq.\ref{zero-ext-compP}):
\[
P(\mathcal{S}(w;\gamma,\nu);\lambda) = P(\mathcal{S}(w;\gamma);\nu \lambda)\;.
\]
 The distribution is referred to in the literature as the {\it Discrete Stable} distribution (DSD) \cite{Christoph2000, StvH2004}.

Another route to the {\it stable distribution} is via the following construction.
Suppose $\{ H(\mathcal{B}_a(w); \lambda),\; a\in (0,1]\}$ is the thinning family of infinitely divisible p.g.f.  $H=P_{\lambda}\circ\mathcal{S}_{\gamma}$. Then
\begin{equation}\label{eq: dsdthin}
H(\mathcal{B}_a(w),\lambda)=P_{\lambda}\circ\mathcal{S}_{\gamma} \circ \mathcal{B}_a = P_{\lambda a^{\gamma}}\circ \mathcal{S}_{\gamma}= H(\mathcal{S}(w;\gamma);\lambda a^{\gamma})\;.  
\end{equation}
This leads to the double-thinning formulation of Eq.\ref{seldec}: 
$H(w)= H(\mathcal{B}_{a}(w)) H(\mathcal{B}_{b}(w))$,  
\begin{equation}\label{eq: dsdpgf}
P_{\lambda}\circ \mathcal{S}_{\gamma} = (P_{\lambda}\circ\mathcal{S}_{\gamma} \circ \mathcal{B}_a) \cdot (P_{\lambda}\circ\mathcal{S}_{\gamma} \circ \mathcal{B}_b)\;;\;\;\; a^{\gamma}+b^{\gamma}=1, \;\forall \gamma \in(0,1]\;.
\end{equation}
It is worth noting another unique property of the DSD: by Theorem \ref{eq: infdthin}, among all infinitely divisible p.g.f.{}s $P(G(w);\lambda)$, the DSD is the only one with a scale-dependent parameter $\lambda=a^\gamma$.

\subsubsection{Scalability}\label{scalsub}


There is a wide class of self-decomposable p.g.f. (e.g. Poisson (\ref{poispgf}), Binomial (\ref{Binom}) and NBD (\ref{NBD}) distributions) analytical around the point $w_0=1$, and  thus admit the expansion 
\begin{equation}
 H(w)=\sum_{n\geq 0} p_n w^n= 1+\sum_{n>0} b_n (1-w)^n \, , \;\;\;  \sum_{n>0} b_n > -1\, .
\end{equation}
 This category also includes the {\it Generalized Linnik distribution} (GDLD) \cite{Christoph1998} -- the compound Poisson-Logarithmic-Sibuya distribution  \cite{Christoph1998} -- with p.g.f.
 \begin{equation}\label{GDLDpgf1}
Q(w;\lambda,\beta,\gamma)=
P_{\lambda}\circ\mathcal{L}_{\theta}\circ S_{\gamma}
=(1+\lambda(1-w)^{\gamma}/\beta)^{-\beta}\;;\;\;\theta=1-e^{-\lambda/\beta}\;, \;\beta>0  \end{equation}
as well as its special cases: {\it Discrete Mittag-Leffler distribution} \cite{Pillai1995, Huillet2019} ($\beta=1$) which is a compound geometric-Sibuya distribution with the p.g.f. (\ref{DMLpgf}), the {\it Discrete Linnik distribution} \cite{Huillet2019, Dev1993} ($\lambda/\beta=c$), and DSD (\ref{eq: dsdpgf}) ($\beta=\infty$). 

Apart from the Binomial, all p.g.f. mentioned above are scalable: for any $a>0$, the function $H(a(1-w)$ is again the p.g.f.. The general construction of such p.g.f.{}s rests on the following observation. Let $\varphi (s)$ be the Laplace transform of a positive r.v. $X$. Then $H(w)=\varphi (1-w)$, and the change of variable $w \to 1 - a(1-w)$ gives $H(1-a(1-w))=\varphi (a(1-w))$. In other words, while the thinning operator $w \rightarrow \mathcal{B}_a(w)$ corresponds to a scale change from $1$ to $a$ with $a \in (0,1)$, the function $\varphi(as)$ remains a Laplace transform for any positive $a$. 

 This motivates the following problem: {\it characterize all p.g.f.{}s $H(w)$ for which $H(1-a(1-w))$ is again a p.g.f. for all $a>0$}. 


\begin{Theorem} {\rm \cite{ Klebanov2023-xj, KlSum2026} }\label{thK}
Let $H(w)$ be a p.g.f.. Then $H(1-a(1-w))$ is the p.g.f. for all $a>0$ iff there exists a Laplace transform $\varphi (s)$ of a r.v. $X\in \R$  such that $H(w)=\varphi (1-w)$. 
\end{Theorem}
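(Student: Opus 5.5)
The sufficiency direction is a mixing argument; necessity requires a limit, for which I will use the L\'evy continuity theorem for Laplace transforms.

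\emph{Sufficiency.} Suppose $H(w)=\varphi(1-w)$ with $\varphi(s)=\E e^{-sX}$, $X\ge 0$. Then for every $a>0$,
\[
H(1-a(1-w))=\varphi(a(1-w))=\E e^{-aX(1-w)}.
\]
Conditionally on $X$, the factor $e^{aX(w-1)}$ is the p.g.f. (\ref{poispgf}) of a Poisson r.v. with intensity $aX$. Averaging over $X$ therefore gives the p.g.f. of a mixed Poisson r.v. with random intensity $aX$. Its coefficients $\E[(aX)^n e^{-aX}]/n!$ are non-negative and sum to $\varphi(0)=1$. Hence $H(1-a(1-w))$ is a p.g.f. for all $a>0$.

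\emph{Necessity.} Assume $H_a(w):=H(1-a(1-w))$ is a p.g.f. for every $a>0$, and let $N_a$ be the corresponding r.v. The plan is to let $a\to\infty$ and rescale. Put $w=1-s/a$ with $s\ge 0$ fixed. For $a\ge s$ we have $w\in[0,1]$, so
\[
\E\,(1-s/a)^{N_a}=H_a(1-s/a)=H(1-s).
\]
This identity holds for $s\in[0,1]$; for larger $s$ it is exactly the content of the hypothesis. Indeed, $H_a$ is a p.g.f., so it is defined and absolutely monotone on $[0,1]$, and this analytically continues $H$ to the half-line $(-\infty,1]$, with $H(1-s)=H_a(1-s/a)$ for any $a\ge s$.

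Next set $Y_a:=N_a/a$. Since $(1-s/a)^{aY}\to e^{-sY}$ uniformly in $Y\ge 0$ as $a\to\infty$, we get
\[
\E e^{-sY_a}-\E(1-s/a)^{N_a}\to 0.
\]
Thus the Laplace transforms of the non-negative r.v.s $Y_a$ converge pointwise on $s\ge 0$ to $\varphi(s):=H(1-s)$. This $\varphi$ is continuous at $0$ with $\varphi(0)=H(1)=1$. By the continuity theorem for Laplace transforms, $\varphi$ is then the Laplace transform of a r.v. $X\ge 0$, and $H(w)=\varphi(1-w)$.

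The main obstacle is making the extension step and the uniform approximation rigorous; tightness itself comes free from $\varphi(0^+)=1$. For the uniform approximation, I would bound $\bigl|(1-s/a)^{n}-e^{-sn/a}\bigr|$ using $0\le e^{-x}-(1-x/m)^m\le x^2e^{-x}/m$ or a similar elementary estimate, uniformly in $n$.

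An alternative I would keep in reserve bypasses limits entirely. From absolute monotonicity of $H_a$ on $[0,1]$ for all $a$, one shows that $s\mapsto H(1-s)$ is completely monotone on $(0,\infty)$: its derivatives are $(-1)^k H^{(k)}(1-s)$, and these are non-negative because each $H^{(k)}(1-s)=a^{-k}H_a^{(k)}(1-s/a)\ge 0$. Bernstein's theorem then gives the Laplace representation directly. I would present this Bernstein route as the main proof, since it is shorter, and mention the limit argument as its probabilistic interpretation.
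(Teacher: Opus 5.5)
Your proof is correct, and it proves more than the paper does. The paper's justification covers only the ``if'' direction: $\varphi(as)$ is again a Laplace transform, hence $\varphi(a(1-w))=H(1-a(1-w))$ is a p.g.f. It defers the ``only if'' direction to the cited references. Your sufficiency argument is the same idea. You also make explicit the step the paper takes for granted, namely that $\varphi(1-w)$ is a p.g.f., by writing it as a Poisson mixture with intensity $aX$; this is exactly the representation of Eq.~(\ref{fluctlamb}).

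For necessity you supply a genuine argument that the paper does not contain. You read the hypothesis as continuing $H$ to $(-\infty,1]$. The identity $H^{(k)}(1-s)=a^{-k}H_a^{(k)}(1-s/a)\ge 0$ for $a\ge s$ then says precisely that $\varphi(s)=H(1-s)$ satisfies $(-1)^k\varphi^{(k)}\ge 0$ on $(0,\infty)$. Since $\varphi(0^+)=H(1^-)=1$, Bernstein's theorem yields a probability measure.

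You should add one sentence on why the continuation is unambiguous. For $1\le a<b$, both $s\mapsto H_a(1-s/a)$ and $s\mapsto H_b(1-s/b)$ are real-analytic on $(0,2a)$. Both equal $H(1-s)$ on $(0,1]$, so they agree on all of $(0,2a)$ by the identity theorem.

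Your two necessity routes trade off as follows:
\begin{itemize}
\item The Bernstein route is shorter and purely analytic.
\item The limit route costs an estimate plus the continuity theorem; your bound gives $\sup_n|(1-x)^n-e^{-nx}|\le x/e$ with $x=s/a$. In return, it identifies the mixing variable $X$ as the distributional limit of $N_a/a$, the ``zoomed-out'' particle number. This ties in with Remark~\ref{reKl} and with the identity $\E(N)_n=\E X^n$ of Proposition~\ref{cor:g}.
\end{itemize}
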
 
The proof exploits the fact that if $\varphi(s)$ is the Laplace transform of a distribution function, then so is $\varphi(as)$ for any $a>0$. Therefore, $\varphi(a(1-w))=H(1-a(1-w))$ is a p.g.f. for any $a>0$. We will call all r.v.s $M \in \Z$ with such a p.g.f. {\it scalable}.

The Theorem \ref{thK} raises the question which p.g.f.{}s are non-scalable, i.e., for which the inverse Laplace transform does not exist. This occurs when $H(w)$ fails to be exponentially restricted: $sup_{w>0}H(w)e^{-bw}<\infty$ for some $b>0$. 

Another statement is the following. Let $X$ be a nondegenerate random variable having p.g.f. $H(w)$ and $p_k=\Pr\{X=k\}$. If $H(w)$ is scalable then $p_k>0$ for all $k=0,1,2,\ldots$. For the proof, it is sufficient to express probabilities $p_k$ through series coefficients of $\varphi(w)$.

\begin{Proposition}\label{nonscal}
An infinitely divisible r.v. $N\in \Z$ with a finite number of combinants $1<j<\infty$ is not scalable. 
\end{Proposition}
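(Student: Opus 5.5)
The plan is to reduce the statement, via Theorem~\ref{thK}, to a statement about Laplace transforms, and then use Marcinkiewicz's theorem on characteristic functions of the form $\exp(\text{polynomial})$.

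\emph{Setting up the Laplace transform.} Let $N$ have p.g.f. $H(w)=\exp\{\sum_{j=1}^{J}\lambda_j(w^j-1)\}$ with $\lambda_j\ge 0$, $\lambda_J>0$ and $J\ge 2$. Here $J\ge 2$ is how I read the hypothesis ``$1<j<\infty$''; it excludes the Poisson case, which is scalable. Suppose, for contradiction, that $N$ is scalable. By Theorem~\ref{thK} there is a random variable $X\ge 0$ with Laplace transform $\varphi$ such that $H(w)=\varphi(1-w)$. Hence
\[
\varphi(s)=\exp\{Q(s)\},\qquad Q(s):=\sum_{j=1}^{J}\lambda_j\bigl((1-s)^j-1\bigr),
\]
for $s\ge 0$. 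Here $Q$ is a polynomial of exact degree $J$ with leading coefficient $(-1)^J\lambda_J$, and $Q(0)=0$.

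\emph{The elementary case, $J$ even.} If $J$ is even, then $Q(s)\to+\infty$ as $s\to+\infty$. But a Laplace transform satisfies $0<\varphi(s)\le 1$ on $s\ge 0$, so this is an immediate contradiction. This already covers $J=2$.

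\emph{The general case, $J$ odd with $J\ge 3$.} The main step is to pass from the Laplace transform to the characteristic function. Since $\varphi(s)=\mathbb{E}e^{-sX}$ agrees on $[0,\infty)$ with the entire function $e^{Q(s)}$, a standard argument applies. The Laplace--Stieltjes integral $\mathbb{E}e^{-sX}$ is analytic in its half-plane of convergence. Its abscissa of convergence is a singularity of the function it defines, because $X\ge 0$ makes the transform of a positive measure, which is Pringsheim's/Widder's theorem. Since $e^{Q}$ is entire, there is no such singularity, so $\mathbb{E}e^{tX}<\infty$ for every real $t$, and $\mathbb{E}e^{-zX}=e^{Q(z)}$ on all of $\mathbb{C}$.

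In particular the characteristic function is $\mathbb{E}e^{itX}=\exp\{Q(-it)\}$, which is the exponential of a polynomial of degree $J$. By Marcinkiewicz's theorem, such a characteristic function forces $J\le 2$, which contradicts $J\ge 3$.

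This same route also handles $J=2$ independently of the parity argument. Marcinkiewicz's theorem would make $X$ Gaussian with variance $\kappa_2=Q''(0)=2\lambda_2>0$, and a non-degenerate Gaussian cannot be supported on $\mathbb{R}_+$.

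\emph{Expected obstacle.} The only delicate point is the analytic continuation: justifying that the identity $\varphi=e^{Q}$ on the half-line extends to the imaginary axis. I would cite the Pringsheim-type theorem for Laplace transforms of positive measures as above. The rest is the classical Marcinkiewicz theorem, which I would quote rather than reprove.
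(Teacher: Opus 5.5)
Your proof is correct, and it uses the same reduction as the paper. Both arguments go through Theorem~\ref{thK}, so both must show that $s\mapsto H(1-s)=e^{Q(s)}$ is not a Laplace transform.

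The difference is that the paper stops at that point. After rewriting $H$ via Proposition~\ref{fincumul}, it only asserts that for $1<J<\infty$ the p.g.f. \emph{is no longer of exponential form, so its inverse Laplace transform does not exist}. As stated this is not an argument: the DSD, $e^{-\lambda s^{\gamma}}$, and the NBD, $(1+s/\beta)^{-\alpha}$, are Laplace transforms without being exponential in $s$. Your parity split supplies the missing justification: unboundedness of $e^{Q}$ for even $J$, and Marcinkiewicz's theorem for odd $J\ge 3$. This is what turns the paper's claim into a proof.

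Two simplifications are available.

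\begin{enumerate}
\item The step you flag as delicate is not delicate. The map $z\mapsto\mathbb{E}e^{-zX}$ is continuous on $\{\operatorname{Re}z\ge 0\}$ by dominated convergence and analytic in the interior. By the identity theorem it equals $e^{Q(z)}$ on $\{\operatorname{Re}z>0\}$. Letting $\operatorname{Re}z\downarrow 0$ gives $\mathbb{E}e^{itX}=e^{Q(-it)}$ directly. You therefore need neither the Pringsheim--Landau theorem nor finiteness of $\mathbb{E}e^{tX}$ for $t>0$.
\item Marcinkiewicz's theorem can be replaced by log-convexity. By H\"older's inequality, $\log\varphi$ is convex on $(0,\infty)$, so $Q''\ge 0$ there. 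But for odd $J\ge 3$,
\[
Q''(s)=\lambda_J J(J-1)(1-s)^{J-2}+\sum_{j<J}\lambda_j j(j-1)(1-s)^{j-2}\to-\infty \quad (s\to\infty).
\]
Combined with your boundedness argument for even $J$, this gives a fully elementary proof.
\end{enumerate}
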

\begin{proof}
By Proposition \ref{fincumul},  $N$ has p.g.f. $G(w;\lambda)=e^{-\lambda(1-\sum_j^J q_j w^j)}$, with $\sum_{j>0}^J q_j=1$ and $\lambda=\sum_{j>1}^J\lambda_j$. For $1<J<\infty$, this p.g.f. is no longer of exponential form, so its inverse Laplace transform does not exist.
\end{proof}

\begin{Remark}\label{reKl}
Any r. v. concentrated on the non-negative semi-axis may be approximated by a scalable r.v. on an arithmetic progression with a small step. Let $Y$ be a non-negative r.v. with p.g.f. $H(w)$, and let $\xi(s)=H(e^{-s})$ be the Laplace transform of its distribution. Consider another r.v. $Z_A$, depending on an integer parameter $A>0$, with p.g.f. $\zeta(w,A) = \xi(1-(1-A)-A w^{1/A}))$. By Theorem \ref{thK}, $\zeta(w, A)$ is a p.g.f of a r.v. with values on the arithmetic progression $\{0,1/A,2/A, \ldots \}$. 
\end{Remark}

\begin{Proposition}\label{cor:infd2}
If r.v. $X\in \R$ with the p.d.f. $f(x)$ is infinitely divisible, then the r.v. $N \in \Z$ with the p.g.f. 
 \begin{equation}
\label{eq: lapdens}   
H(w)=\int_0^\infty e^{-(1-w)x}f(x) dx \,,
\end{equation}
It is also infinitely divisible. Conversely, if $N$ is an infinitely divisible r.v., then $X$ is an infinitely divisible r.v.
\end{Proposition}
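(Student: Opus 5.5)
The direct implication is a statement about $n$-th roots and needs only Theorem \ref{thK}. The converse needs a limit argument. Throughout, write $\varphi(s)=\int_0^\infty e^{-sx}f(x)\,dx$, so that $H(w)=\varphi(1-w)$; in other words $N$ is a Poisson mixture with random intensity $X$.

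\emph{Direct part.} Assume $X$ is infinitely divisible. Then for every $n\in\N$ there is a non-negative r.v. $X_n$ with Laplace transform $\varphi_n(s)=\varphi(s)^{1/n}$. Because $X_n\geq 0$, the function $\varphi_n(1-w)=\E e^{-(1-w)X_n}$ is a genuine p.g.f.: it is the Poisson mixture with random intensity $X_n$, and its coefficients $\E[e^{-X_n}X_n^k]/k!$ are non-negative. This is exactly the ``if'' direction of Theorem \ref{thK}. Hence $H(w)^{1/n}=\varphi_n(1-w)$ is a p.g.f. for every $n$, which is the definition of infinite divisibility. No further computation is needed.

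\emph{Converse.} The plan is to reduce to Bernstein-type criteria. Write $\varphi=e^{-\psi}$. Infinite divisibility of $X$ is equivalent to $\psi'$ being completely monotone on $(0,\infty)$. On the $N$ side, Theorem \ref{HexpR} states that $N$ is infinitely divisible iff $R(w)=H'(w)/H(w)$ is absolutely monotone on $[0,1)$. Since $\psi'(s)=R(1-s)$, the hypothesis gives complete monotonicity of $\psi'$ only on $s\in(0,1]$. The task is to propagate this to all $s>0$.

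For that I would use the scaled family $N_t$ with p.g.f. $H_t(w)=\varphi(t(1-w))$, which is a p.g.f. for every $t>0$ by Theorem \ref{thK}. For $t\le 1$, $H_t$ is a thinning of $H$, so it is infinitely divisible by Proposition \ref{infdthin}. For $t>1$ I would first try to show that $H_t^{1/n}$ stays a p.g.f. Then I would pass to the limit $N_t/t\to X$ in distribution as $t\to\infty$, which holds because $N_t$ is Poisson with intensity $tX$. The final step invokes closedness of infinite divisibility on $\R$ under weak limits.

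\emph{Main obstacle.} The hard step is exactly the case $t>1$. Thinning only shrinks the scale, so Proposition \ref{infdthin} gives nothing there. Analyticity of $\psi'$ combined with complete monotonicity on $(0,1]$ does not by itself force complete monotonicity on $(1,\infty)$. I would first try to show that the canonical sequence of $H_t$, namely $R_t(w)=t\,\psi'(t(1-w))$, stays non-negative by expanding around $w=1$ using the analytic continuation of $R$. If that fails, the converse should be read as assuming that $N$ is infinitely divisible together with all of its scalings $H_t$. That is the natural hypothesis in the scalable setting of Subsection \ref{scalsub}, and under it the limit argument above closes the proof.
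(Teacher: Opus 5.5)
Your direct part is exactly the paper's argument. For the converse the paper says only that it is ``immediate'', presumably via $\varphi_n(s):=H(1-s)^{1/n}$. But by Theorem~\ref{thK} this is a Laplace transform only if $H^{1/n}$ is scalable, i.e.\ only if $N_t$ (p.g.f.\ $\varphi(t(1-w))$) is $n$-divisible for every $t>0$, not just for $t=1$. That is precisely the obstacle you isolated, and it cannot be removed: the converse as stated is false. So your first plan, propagating complete monotonicity of $\psi'$ from $(0,1]$ to $(0,\infty)$ via analytic continuation, must fail.

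\emph{Counterexample.} Let $X$ have density $f(x)=\tfrac14 e^{-x}\bigl(1-2x+\tfrac52x^2\bigr)$, $x>0$; the quadratic has negative discriminant, so $f\ge 0$. Then $\varphi(s)=\frac{s^2+4}{4(1+s)^3}$ and $H(w)=\frac{(1-w)^2+4}{4(2-w)^3}$. Using $(1-w)^2+4=5\bigl(1-\frac{w}{1+2i}\bigr)\bigl(1-\frac{w}{1-2i}\bigr)$ gives
\[
\log H(w)=\log\frac{5}{32}+\sum_{k\geq1}\frac{w^k}{k}\Bigl(3\cdot 2^{-k}-2\operatorname{Re}\,(1+2i)^{-k}\Bigr).
\]
Every coefficient is positive since $|1+2i|=\sqrt5>2$, so $N$ is compound Poisson and hence infinitely divisible. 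On the other hand,
\[
\psi'(s)=\frac{3}{1+s}-\frac{2s}{s^2+4}=\int_0^\infty e^{-sx}\bigl(3e^{-x}-2\cos 2x\bigr)\,dx
\]
is not completely monotone on $(0,\infty)$. By uniqueness of the Laplace transform its representing measure would have to be $(3e^{-x}-2\cos 2x)\,dx$, which is negative near $x=\pi$. So $X$ is not infinitely divisible. In your notation, the combinants of $N_t$ are $\frac{t^k}{k}\bigl(3(1+t)^{-k}-2\operatorname{Re}\,(t-2i)^{-k}\bigr)$. They are all positive for $t\leq 3/2$, but some become negative once $t$ is large (e.g.\ $t=10$, $k=32$). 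This is exactly the phenomenon you anticipated.

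Your fallback is therefore the correct statement: $X$ is infinitely divisible iff $N_t$ is infinitely divisible for all $t>0$. Equivalently, it suffices to have this for arbitrarily large $t$, since thinning preserves infinite divisibility. Your weak-limit argument $N_t/t\Rightarrow X$ proves it. It can even be done without a limit. The combinants of $N_t$ are $(-1)^{k+1}t^k\psi^{(k)}(t)/k!$, so their nonnegativity for all $t>0$ and $k\geq1$ is literally complete monotonicity of $\psi'$ on $(0,\infty)$.
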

\begin{proof}
 $X$ is infinite divisible iff $\forall n\in \N$  there exists a Laplace transform $\varphi_n(s)$ such that $\varphi(s)=(\varphi_n(s))^n$. Since $(H(w))^{1/n}=\varphi_n(1-w)$, for every $n$, the p.g.f. $H(w)$ is infinitely divisible. The converse is immediate.
\end{proof}
\begin{Proposition}{\rm \cite{StvH2004}}
An integer-valued, bounded, non-degenerate r.v. $X$ is neither infinitely divisible nor scalable.
\end{Proposition}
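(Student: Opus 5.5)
We treat the two claims separately. Let $X$ take values in $\{0,1,\ldots,m\}$ with $\mathbb{P}(X=m)>0$ and with at least two support points. Its p.g.f. $H(w)=\sum_{k=0}^m p_k w^k$ is then a polynomial of exact degree $m\ge 1$.

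\emph{Non-infinite divisibility.} We argue by contradiction. Suppose $H=(H_n)^n$ for every $n$, where $H_n$ is a p.g.f. Since $H_n^n$ is a polynomial of degree $m$, $H_n$ must itself be a polynomial: a power series with nonnegative coefficients whose $n$-th power has finitely many nonzero coefficients can have no nonzero coefficient beyond index $m/n$. Indeed, if $q_j>0$ with $j>m/n$, then the coefficient of $w^{nj}$ in $H_n^n$ is at least $q_j^n>0$, yet $nj>m$. Moreover $\deg H_n=m/n$ must be an integer. Taking $n=m+1$ forces $\deg H_n=0$, so $H_n$ is constant and hence $H\equiv 1$. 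This means $X$ is degenerate at $0$, a contradiction. An alternative route uses Proposition \ref{infd_nec}(III): $\log H$ would have to be analytic in $|w|\le1$, but a nonconstant polynomial with nonnegative coefficients and $p_0>0$ has complex zeros, and one must check that some zero lies in the closed unit disk. That check is harder in general, so the degree argument above is the one to use.

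\emph{Non-scalability.} By the remark following Theorem \ref{thK}, a scalable non-degenerate r.v. has $p_k>0$ for all $k$. Since $p_k=0$ for $k>m$, $X$ cannot be scalable. For a self-contained argument, note that $H(1-a(1-w))$ is a polynomial of degree $m$ whose coefficient of $w^k$ is
\[
\frac{a^k}{k!}H^{(k)}(1-a).
\]
For large $a$ the point $1-a$ is very negative, and the leading behaviour of $H^{(k)}(1-a)$ is governed by $p_m(1-a)^{m-k}$, whose sign alternates with $m-k$. Hence the coefficient of $w^{m-1}$ equals
\[
a^{m-1}\bigl(p_{m-1}+m p_m(1-a)\bigr),
\]
which is negative for $a$ large. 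So $H(1-a(1-w))$ fails to be a p.g.f. for some $a>0$.

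The main obstacle is the polynomial-root step in the infinite divisibility part. Using the degree-divisibility argument avoids it. We also need care with the convention that $H_n$ may have $H_n(0)>0$; this is harmless, because the argument relies only on nonnegativity of the coefficients.
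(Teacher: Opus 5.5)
Your proof is correct, but it takes a different route from the paper on both halves.

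\emph{Infinite divisibility.} The paper appeals (via Linnik) to the fact that the characteristic function of a bounded non-degenerate r.v. has zeros, which an infinitely divisible one cannot. You instead count degrees: $\deg H=n\deg H_n$ forces $n\mid m$, and this rules out every $n>m$. Your argument is elementary and stays entirely inside the p.g.f. definition of infinite divisibility that the paper actually uses. It also gives slightly more, namely that $X$ is not $n$-divisible for any $n\nmid m$.

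The paper's argument covers divisibility into arbitrary real-valued summands. Stated as tersely as it is, though, it needs care. The zeros in question must be complex zeros of the entire extension, since a Bernoulli characteristic function with $a\neq 1/2$ has no real zeros. The argument also needs two further facts: the i.i.d.\ summands of a bounded r.v.\ are themselves bounded, and an entire characteristic function whose zeros have multiplicity divisible by every $n$ has no zeros at all, hence is degenerate by Hadamard.

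\emph{Scalability.} The paper uses the nontrivial direction of Theorem~\ref{thK}, which it only cites, to say that $H(1-s)$ would be a Laplace transform. That is impossible for a nonconstant polynomial, since Laplace transforms are bounded by $1$ on $[0,\infty)$. Your explicit coefficient $a^{m-1}\bigl(p_{m-1}+m p_m(1-a)\bigr)$ of $w^{m-1}$ avoids Theorem~\ref{thK} altogether. It also gives the concrete threshold $a>1+p_{m-1}/(m p_m)$ beyond which $H(1-a(1-w))$ is not a p.g.f.

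\emph{Your discarded aside.} The route via Proposition~\ref{infd_nec}(III) would not go through as you describe, because a zero of $H$ need not lie in the closed unit disk. For $H(w)=1-a+aw$ with $a<1/2$, the only zero is at $-(1-a)/a$. What fails there is instead the nonnegativity of the coefficients of $\log H$; for example, the $w^2$ coefficient is $-\tfrac12\bigl(a/(1-a)\bigr)^2$.
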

\begin{proof}
Assume $X$ is bounded by some $m>0$.
Then its p.g.f. $\mathcal{P}_X(z)$ is a polynomial of degree at most $m$. If $X$ were scalable, $\mathcal{P}_X(1-w)$ would have to be a Laplace transform; but no polynomial of degree greater than zero is a Laplace transform. Furthermore, the characteristic function of any bounded r.v. has zeros, so it cannot be infinitely divisible \cite{Linnik}. 
\end{proof}

\begin{Proposition} \label{cor:g} {\rm \cite{Klebanov2023-xj}} 
Let $N \in \Z$ be a scalable r.v. with p.g.f. (\ref{eq: lapdens}). Then 
\begin{equation}\label{frac2mom}
 \E (N)_n =\mu_{[n]}=H^{(n)}(1) = \int_0^{\infty}x^n f(x) dx=\E X^n  \end{equation}
 and it's p.m.f.  
\begin{equation} \label{eq: lapder}
 p_n=\frac{H^{(n)}(0)}{n!},\;\;\;   H^{(n)}(w)= \varphi^{(n)}(1-w) =\int_0^{\infty } e^{-(1-w) x} x^n f(x) \, dx
\end{equation} 
satisfies the one-step recurrence 
\begin{equation}
\label{eq: gnkl}
p_{n+1}=\frac{p_n}{n+1}g(n);\;\;\;\; 
g(n) = \frac{H^{(n+1)}(0)}{H^{(n)}(0)}=\frac{\int_{0}^{\infty}e^{-x}x^{n+1}f(x) dx}{\int_{0}^{\infty}e^{-x}x^{n} f(x) dx}\,.
\end{equation}
\end{Proposition}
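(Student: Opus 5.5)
The plan is to work directly from the Laplace representation in (\ref{eq: lapdens}). By Theorem \ref{thK}, a scalable $N$ has p.g.f. $H(w)=\varphi(1-w)$, where $\varphi(s)=\int_0^\infty e^{-sx}f(x)\,dx$ is the Laplace transform of some $X\ge 0$ with density $f$. The whole statement then comes from differentiating under the integral sign in $w$.

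First I justify the differentiation. For $w\in[0,1)$ we have $1-w>0$, and $x^n e^{-(1-w)x}\le C_{n,w}e^{-(1-w)x/2}$, so every derivative of the integrand is dominated by an integrable function. Dominated convergence then gives
\[
H^{(n)}(w)=\int_0^\infty x^n e^{-(1-w)x}f(x)\,dx=\varphi^{(n)}(1-w)\,(-1)^n .
\]
The sign factor $(-1)^n$ from the chain rule cancels the $(-1)^n$ in $\varphi^{(n)}(s)=(-1)^n\int x^ne^{-sx}f\,dx$, so the two expressions in (\ref{eq: lapder}) agree up to the sign convention the paper writes as $\varphi^{(n)}(1-w)$. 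Setting $w=0$ and using $p_n=H^{(n)}(0)/n!$ from (\ref{epgf}) gives the formula for $p_n$ in (\ref{eq: lapder}). This also shows $p_n>0$ for every $n$ whenever $X$ is nondegenerate at $0$.

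Second, for the factorial moments I let $w\uparrow 1$. The integrand $x^ne^{-(1-w)x}f(x)$ is nonnegative and increases monotonically to $x^nf(x)$. By monotone convergence, $\lim_{w\to1^-}H^{(n)}(w)=\int_0^\infty x^n f(x)\,dx=\E X^n$, with both sides possibly infinite. By Abel's theorem for power series with nonnegative coefficients, the same limit equals $\sum_k (k)_n p_k=\E(N)_n=\mu_{[n]}$ as in (\ref{facmom}). This yields (\ref{frac2mom}).

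Third, the recurrence (\ref{eq: gnkl}) is algebra. We have $p_{n+1}/p_n=\frac{n!}{(n+1)!}\,H^{(n+1)}(0)/H^{(n)}(0)$, and $H^{(n)}(0)>0$ by the first step, so the ratio is well defined. Substituting the integral representation at $w=0$ gives $g(n)$.

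The main obstacle is the boundary point $w=1$. There the integrals may diverge, and the interchange of limit and integral is needed. Monotone convergence together with Abel's theorem handles this cleanly, interpreting both sides in $[0,\infty]$. The other delicate point is the sign convention in $\varphi^{(n)}(1-w)$, which I will state explicitly.
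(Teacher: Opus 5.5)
Your proposal is correct and matches the argument the paper intends. The paper gives no separate proof (it cites \cite{Klebanov2023-xj}), and the statement follows as you do it: differentiate the Laplace representation (\ref{eq: lapdens}) under the integral sign, set $w=0$ to get $p_n$ and $g(n)$, and let $w\to1^-$ (monotone convergence plus Abel) to get the factorial moments. Your sign remark is also right: literally $H^{(n)}(w)=(-1)^n\varphi^{(n)}(1-w)$, so the middle expression in (\ref{eq: lapder}) must be read as $\frac{d^n}{dw^n}\varphi(1-w)$.
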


We remark in passing that the invariance of the scaled factorial moments $F_n=\mu_{[n]}/(\mu_{[1]})^n$ under $N\to a\odot N$ is equivalent to the invariance of the ordinary scaled moments $\E X^n/(\E X)^n$ under the scale transformation $X\to aX$.

\begin{Remark}
Expanding $e^{-x}$ in (\ref{eq: gnkl}) together with Eq.(\ref{frac2mom}) allows us to express  (\ref{eq: gnkl}) as a ratio of two sums of factorial moments:
\begin{equation}\label{facrat}
\int_{0}^{\infty}e^{-x}x^{n} f(x) dx=\sum_{j=0}^{\infty}\frac{(-1)^j}{j!}\E X^{n+j}=
\sum_{j=0}^{\infty}\frac{(-1)^j\mu_{[n+j]}}{j!}   
\;.
\end{equation}
If the factorial moments satisfy a recurrence relation $\mu_{[n+1]}=f(\mu_{[n]})$ ( as in the NBD, where $\mu_{[n+1]}=\theta(n+k)/(1-\theta)\mu_{[n]}$), then the infinite sum (\ref{facrat}) reduces to a finite polynomial in $n$. 
\end{Remark}

The examples of scalable infinitely divisible p.g.f.{}s are presented in Appendix \ref{Exscal}

\begin{Remark}

By Proposition \ref{Sib}, every scaled Sibuya p.g.f. is an infinite mixture of scaled Sibuya p.g.f. $\mathcal{S}(w;\gamma,\lambda(x))$ with parameter $\lambda(x)$ depending on $x$  and mixing p.d.f. $f(x)$.  Taking $\lambda(x)=e^{-(1-w) x}$ we obtain 
\[
 \nu(w)=\int_0^{\infty}e^{-(1-w) x}f(x)dx\;,\;\;\; \int_{0}^{\infty}\mathcal{S}(w;\gamma,e^{-(1-w) x}) f(x)dx =     1-\nu(w)(1-w)^{\gamma}\;.
 \] 
 Thus $\nu(w)$ represents a p.g.f. of some scalable distribution on $\Z$. The p.g.f.
 \[
 \mathcal{S}(w;\gamma,e^{-(1-w) x})=1-e^{-(1-w) x}(1-w)^{\gamma})\;,\;\;  \mathcal{S}(0;\gamma,e^{-(1-w) x})=1+\sinh (x)-\cosh (x)
 \]
 corresponds to the p.m.f. with the first four terms 
 \[
 p_0=1-e^{-x}\;,\; 
 p_1=e^{-x}(\gamma-x)\;,\;\;
 p_2=\frac{1}{2} e^{-x} (\gamma-(x-\gamma)^2)\;,\;\; p_3=
 \frac{e^{-x}}{6}(2\gamma-((x-\gamma )^2-3 \gamma) (x-\gamma ))\;.
 \]
     
\end{Remark}

\newpage
\section{Kinetic models of particle-number distributions} \label{kinmodels}
\subsection{Stationary Solutions of Birth-Death Equations}
Consider a continuous-time birth-and-death process (B-D) consisting of random creation and annihilation events (e.g., creations and disintegrations of particles \cite{Feller1, Anderson}). The process is a Markov chain with a discrete state space $n\in \N$, in which only transitions to nearest neighbors $n\to n \pm 1$ are allowed. It is described by stochastic process $\{N(t): t \geq 0\}$
\[\mathbb P(N(s+t)=j)|N(s)=i)= \mathbb P(N(t)=j)|N(0)=i)=P_{i,j}(t); \;\; \forall i,j, \;t>0,s>0
\]
with transition probabilities $P_{i,j}(t)$ satisfying the Chapman-Kolmogorov equation.
\[P_{i,j}(s+t)=\sum_{k}P_{i,k}(s)P_{k,j}(t).
\]
The time evolution of $p_j(t)=\mathbb P (N(t)\!=\!j)$ is governed by the differential-difference equations 
\begin{eqnarray}\label{dd}
\frac{dp_{j}(t)}{dt}&=&-(\lambda_j +\mu_j) p_{j}(t) + \lambda_{j-1} p_{j-1}(t) +\mu_{j+1}p_{j+1}(t), \;\; j > i\,,
\\ \nonumber 
 \frac{dp_i(t)}{dt}&=&-\lambda_i  p_i(t) + \mu_{i+1} p_{i+1}(t)\;, \;\; p_j(0)=\delta_{ij}\;,
\end{eqnarray}
where 
\[    \lambda_j=\frac{dP_{j,j+1}(t)}{dt}\Big |_{t=0^{+}}\,, \;\;\;\;
    \mu_j=\frac{dP_{j,j-1}(t)}{dt}\Big |_{t=0^{+}}\;.
\]

One commonly used model considers the transition $j\to j\pm 1$ to result from several underlying microscopic processes at the particle level, $k \to k\pm 1$, $k=0,\ldots,j$, with elementary transition probabilities $\alpha_k$ and $\beta_k$. A particle enters the system of $j$ particles  at rate $\alpha_0$, is annihilated at rate $\beta_0 j$, each particle splits into two at rate $\alpha_1 j$, and two particles fuse at rate $\beta_1 j(j-1)$ {\it etc.} 
\begin{equation}
\label{albe}
\lambda_j  =\sum_{k=0}^{\ell} \alpha_k (j)_k, \; \;\;\;
\mu_j  =  j \sum_{k=0}^{m} \beta_k (j-1)_k; \; \;\;\; \ell \le j ;\; m \le j ;\;\;\;\;\; (j)_k \!:=\! \prod_{i=0}^{k-1} (j\!-\!i)
\end{equation}

It is well known that for sufficiently large times $t$, the solution of Eq.~ \ref{dd} can be well approximated by the time-independent limiting probabilities $p_j=\lim_{t\to\infty}p_j(t)$ \cite{Feller1, Anderson}.  In this case, Eqs.~ (\ref{dd}) become two functional equations for the unknown function $F(j)$
\begin{eqnarray}\label{Fn}
F(j+1) &=& \mu_{j+1} p_{j+1} - \lambda_j p_{j} = \mu_jp_{j}-\lambda_{j-1}p_{j-1} = F(j)\;,\;\;j > i\\ \nonumber  
F(i+1) &= & \mu_{i+1}p_{i+1} - \lambda_i p_{i} \!=0
\end{eqnarray}
 with solution: 
\begin{equation}
\frac{p_{j+1}}{p_j}  = \frac{\lambda_j}{\mu_{j+1}} = \frac{\sum_{k=0}^{\ell} \alpha_k (j)_k}{(j+1)\sum_{k=0}^{m} \beta_k (j)_{k}}\;.
\end{equation}
We have thus proven the following theorem.
\begin{Theorem} {\rm  \cite{Klebanov2023-xj}}
The solution of the stationary B-D equation  (\ref{Fn})  with  $\lambda_j $  and $\mu_j$  given by ~(\ref{albe}) is:
\begin{equation}\label{eq:prodgn}
p_j=p_i\prod_{n=i}^{j-1}\frac{g(n)}{n+1} ; \; \;\;\;\; 
g(n) := \frac{(n\!+\!1)p_{n+1}}{p_n} = \frac{\sum_{k=i}^{\ell} \alpha_k (n)_k}{\sum_{k=i}^{m} \beta_k (n)_{k}}\,, \; \ell \leq j\,, \; m \leq j\;,
\end{equation}
where $p_i$ is the first non-zero value of the p.m.f. $p_j$. 
\end{Theorem}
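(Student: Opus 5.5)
The plan is to derive the product formula directly from the stationary balance equations (\ref{Fn}) by a telescoping/detailed-balance argument, and then to insert the explicit rates (\ref{albe}).

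\emph{Step 1: the probability flux vanishes.} Setting $dp_j/dt=0$ in (\ref{dd}) gives $F(j+1)=F(j)$ for all $j>i$, where $F(j+1)=\mu_{j+1}p_{j+1}-\lambda_j p_j$ is the net flux between levels $j$ and $j+1$. The boundary equation at $j=i$ gives $F(i+1)=0$. Induction on $j$ then shows $F(j)=0$ for every $j>i$, that is,
\[
\mu_{j+1}p_{j+1}=\lambda_j p_j ,\qquad j\ge i .
\]
This is detailed balance.

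\emph{Step 2: iterate the ratio.} Wherever $\mu_{n+1}>0$ we have $p_{n+1}/p_n=\lambda_n/\mu_{n+1}$. Multiplying these ratios for $n=i,\dots,j-1$ telescopes to
\[
p_j=p_i\prod_{n=i}^{j-1}\frac{\lambda_n}{\mu_{n+1}} .
\]
Here $p_i$ is the first nonzero probability, because the chain starts at $i$ and the states below $i$ are never visited.

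\emph{Step 3: insert the rates.} From (\ref{albe}),
\[
\mu_{n+1}=(n+1)\sum_k\beta_k (n)_k ,
\]
using $(n+1-1)_k=(n)_k$. Each factor therefore equals $g(n)/(n+1)$ with $g(n)=\sum_k\alpha_k(n)_k\big/\sum_k\beta_k(n)_k$. The identity $g(n)=(n+1)p_{n+1}/p_n$ is simply Step 2 rewritten.

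\emph{Main obstacle: well-definedness.} The algebra is routine; the delicate point is that the construction must make sense. Two things need attention:
\begin{itemize}
\item the denominators $\sum_k\beta_k(n)_k$ must be positive for $n\ge i$, so that the ratios exist;
\item the product must be summable, so that normalization fixes $p_i=\bigl(\sum_{j\ge i}\prod_{n=i}^{j-1} g(n)/(n+1)\bigr)^{-1}$ and the stationary limit exists. The standard birth--death criterion gives this.
\end{itemize}
I would state these as implicit assumptions rather than prove them. I would also note that the lower summation index $k=i$ in the displayed formula should be read as $k=0$, consistent with (\ref{albe}).
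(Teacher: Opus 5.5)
Your argument is correct and is essentially the paper's own. The paper likewise treats the flux $F(j)$ as constant and uses the boundary equation to get $F(i+1)=0$. It then reads off $p_{j+1}/p_j=\lambda_j/\mu_{j+1}$, substitutes (\ref{albe}) with $\mu_{j+1}=(j+1)\sum_k\beta_k (j)_k$, and telescopes. Your remark that the sums should start at $k=0$ matches the paper's own intermediate formula, and the well-definedness conditions you flag (positive denominators, summability for normalization) are also left implicit there.
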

Starting from $p_i\!>\!0$ where $i\!=\!\min_{p_n>0}(n)$, all subsequent values $p_j$ with $j>i$ are determined by a single  $g(n)$. Since distributions with the p.m.f.{}s satisfying (\ref{eq:prodgn}) represent a discrete Markov chain, we call them {\it Markovian distributions}, and  $g(n)$ their {\it transition function}.
\subsection{Markovian distributions} 
In Subsection \ref{psdsub}, Proposition \ref{G2PSD}, we established that any r.v. $K$ belonging to the PSD family admits a factorization of the form $K = N \& M$, where $N$ and $M$ are independent r.v.s such that $K$ attains the value $k$ if and only if $N = k$ and $M = k$. In this representation, $N$ takes values in $\Z$, while $M$ is geometrically distributed, and, conversely, any such pair $(N, M)$ induces a PSD-distributed random variable $K$. The following proposition elucidates the relationship between the PSD family and the Markovian family.
\begin{Proposition}\label{g_PSD}
Let $g_p(n)$ be the transition function of a Markovian distribution with the p.m.f. $p_n$. Then $g_r(n)=g_p(n)\theta$ is the transition function of the p.m.f. $r_n=p_n \theta^n/Z(\theta)$, where $Z(\theta)=\sum_n p_n \theta^n$.
\end{Proposition}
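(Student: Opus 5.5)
The plan is to compute the transition function of $r_n$ directly from its definition in (\ref{eq:prodgn}), $g(n)=(n+1)p_{n+1}/p_n$, and compare it with $g_p(n)$. First, note that $r_n$ is well defined. By the discussion in Subsection \ref{psdsub}, $Z(\theta)=\sum_n p_n\theta^n$ is the p.g.f. of $N$ evaluated at $\theta\in(0,1]$. It is therefore finite and strictly positive as soon as some $p_n>0$. Hence $r_n=p_n\theta^n/Z(\theta)$ is a p.m.f., namely the PSD (\ref{PSDdef}) associated with $Z$.

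Second, I will check that the support is preserved. Since $\theta>0$, we have $r_n>0$ exactly when $p_n>0$. So the first nonzero index $i=\min\{n:p_n>0\}$ is the same for both sequences, and the ratios $r_{n+1}/r_n$ are defined for exactly the same $n\ge i$ as the ratios $p_{n+1}/p_n$.

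Third, the computation itself. The normalizing constant $Z(\theta)$ cancels, and one factor of $\theta$ survives:
\[
g_r(n)=\frac{(n+1)r_{n+1}}{r_n}=\frac{(n+1)p_{n+1}\theta^{n+1}/Z(\theta)}{p_n\theta^n/Z(\theta)}=\theta\,\frac{(n+1)p_{n+1}}{p_n}=\theta\, g_p(n).
\]
Equivalently, in the product form (\ref{eq:prodgn}) this reads
\[
r_j=r_i\prod_{n=i}^{j-1}\frac{\theta g_p(n)}{n+1},
\qquad r_i=p_i\theta^i/Z(\theta).
\]
So $r_n$ is Markovian with transition function $\theta g_p(n)$.

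Finally, I would add a remark connecting this to the birth–death model. If $g_p(n)$ has the rational form $\sum\alpha_k(n)_k/\sum\beta_k(n)_k$, then multiplying by $\theta$ amounts to replacing every birth coefficient $\alpha_k$ by $\theta\alpha_k$. In other words, the PSD tilt corresponds to a uniform rescaling of the creation rates.

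There is no real obstacle here; the argument is the one-line computation above. The only point needing care is the support and positivity bookkeeping in the second step, which ensures that the recursion for $r_n$ starts at the same index $i$ as the one for $p_n$.
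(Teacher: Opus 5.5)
Your proof is correct and is essentially the paper's argument. The paper writes $r_n=p_nq_n/\sum_n p_nq_n$ with the geometric p.m.f.\ $q_n=(1-\theta)\theta^n$ (tying it to Proposition~\ref{G2PSD}); the factor $1-\theta$ cancels, and the same one-line ratio $g_r(n)=(n+1)r_{n+1}/r_n=\theta g_p(n)$ finishes it, while your support bookkeeping and the remark that the tilt rescales the birth rates $\alpha_k\to\theta\alpha_k$ are correct additions not in the paper.
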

\begin{proof}
Let $q_n=(1-\theta)\theta^n$ be the p.m.f. of the geometric distribution. Then
\[
r_n=\frac{p_n q_n}{\sum_n p_n q_n}=
\frac{p_n\theta^n}{\sum_n p_n \theta^n}\;,\;\;g_r=\frac{(n+1)r_{n+1}}{r_n}
=\theta g_p(n)
\]
\end{proof}
The class of Markovian distributions encompasses many well-known distributions. Let us consider the simplest case: $g(n)=a n + b$ describing the Katz family of distributions \cite{JKK2005}. Writing    
\begin{equation} \label{gndef}
p_n=\frac{Z_n}{Z}\;,\;\; Z_n(i)=\prod_{j=i}^{n-1}\frac{g(j)}{j+1}\;,\;\;\;Z=\sum_{n\geq 0}Z_n\;,
\end{equation}
one obtains for $i=0$  Poisson, binomial, and negative binomial distributions: 
\begin{equation}\label{gnsimp1}
g_P(n)=\theta,\;\;\; g_N(n)=\frac{a}{1-a}(m-n),\;\;\; g_{NB}(n)=\theta (n+k)\;,
\end{equation}
and for $i=1$ the transition functions of the Logarithmic and Sibuya distributions:
\begin{equation}\label{gnsimp2}
g_L(n)=\theta n,\; g_{SB}(n)=n-\gamma\;.
\end{equation}

More complicated is the case of {\it Generalized Sibuya distribution} with parameters $\nu\geq 0$ and $0< \gamma<\nu +1$ \cite{Kozu2017}. Its transition function and p.g.f., respectively, read:
\begin{equation} \label{gsbd1}
  g_{GS}(n)= \frac{(n+1)(n+\nu-\gamma)}{n+\nu +1} \;,\;\; 
  Q(w;\nu,\gamma)  =\frac{w \gamma}{\nu + 1}\cdot\,
_2F_1(1, \nu-\gamma + 1;  \nu+ 2; w)\,,  
\end{equation} 
where $_2F_1(a_1,a_2;b_1;z)$ is the Gauss hypergeometric function. For $\nu=0$, this reduces to the Sibuya distribution \footnote{Another connection with the classical Sibuya variable $N$ is through the probability of the excess $\mathbb P(N-m=n|N>m)$ which has the generalized Sibuya distribution with $\nu=m$. This extends to the generalized Sibuya variable $N$ for which $\mathbb P(N-m=n|N>m)$ has also the generalized Sibuya distribution with $\nu+m$ as its parameter.}. Its PSD counterpart -- {\it Extended Generalized Sibuya distribution} with $g_{EGS}(n)=g_{GS}(n)\theta$ solves the stationary B-D equations (\ref{Fn}) with coefficients: 
\begin{equation}
\label{eq:abgensbd}
    \frac{\alpha_2}{\beta_1}=\theta,\;\; \frac{\alpha_1}{\beta_1}=(2+\nu-\gamma)\theta,\;\; \frac{\alpha_0}{\beta_1}=(\nu-\gamma)\theta,\;\; \frac{\beta_0}{\beta_1}=\nu+1\,.
\end{equation}

\renewcommand{\arraystretch}{1.5}
\setlength\tabcolsep{1. pt}
\begin{table}[hbt] 
    \centering
    \begin{tabular}{|l|l|c|c|c|}
    \hline
    {\bf No.}  &  {\bf Distribution} & {{\boldmath $Z_n=Z\cdot p_n$}} & $\mathbf {g(n)}$ & {\boldmath $ {\alpha_i,\; \beta_i}$} \\
        \hline  \hline  
      1& Poisson &$\ffrac{\theta^n}{n!}$ & $\theta$ & $\ffrac{\alpha_0}{\beta_0}=\theta$  \\
       \hline \hline
       2& Hyper-Poisson \cite{JKK2005} &$\ffrac{\theta^{n+\nu+1}}{(n+\nu+1)!}$ & $ \ffrac{n+1}{n+\nu}\theta$  & $\ffrac{\alpha_0}{\beta_0}=\ffrac{\alpha_1}{\beta_0}=\ffrac{\theta}{\nu}, \ffrac{\beta_1}{\beta_0}=\ffrac{1}{\nu} $   \\
        \hline \hline
       3& CMP \cite{Shm2005} \;\& $r=2$ &$\ffrac{\theta^n}{((n+1)!)^2}$ & $ \ffrac{\theta}{n+1}$  & $\ffrac{\alpha_0}{\beta_0}=\theta, \ffrac{\beta_1}{\beta_0}=1 $   \\
        \hline \hline
       4& Logarithmic &$\ffrac{\theta^n}{n}$ & $\theta n$  & $\theta=\ffrac{\alpha_1}{\beta_0}$   \\
         \hline \hline
       5& Shifted Logarithmic &$\ffrac{\theta^{n+1}}{n+1}$ & $\ffrac{(n+1)^2}{n+2}\theta $  & $\theta\!=\!\ffrac{2\alpha_2}{\beta_0}\!=\!\ffrac{2\alpha_1}{3\beta_0},\; \ffrac{\alpha_0}{\beta_0}\!=\!\ffrac{\beta_1}{\beta_0}\!=\!\ffrac{1}{2}$   \\
         \hline \hline
    6&  NBD & $(-\theta )^n \binom{-k}{n}$ & $(n+k)\theta$ & $k=\ffrac{\alpha_0}{\alpha_1},\;\theta=\ffrac{\alpha_1}{\beta_0}$ \\
         \hline \hline
    7&    Ext. Sibuya/NBD1& $(-\theta )^{n+1} \binom{\gamma }{n}$ & $(n-\gamma)\theta$  & $\gamma\!=\!-k=1-\ffrac{\alpha_1}{\alpha_2}$,\; $\theta=\ffrac{\alpha_2}{\beta_1}$   \\ 
         \hline \hline
    8&    Shifted Ext. Sibuya & $(-\theta )^{n+1} \binom{\gamma }{n+1}$ & $(n\!+\!1\!-\!\gamma)\ffrac{n+1}{n+2}\theta$  & $\gamma\!=\!\ffrac{\alpha_1\!-\!3\alpha_0}{\alpha_1\!-\!\alpha_0}, \theta
        \!=\!\ffrac{\alpha_2}{\beta_1}, \ffrac{\beta_1}{\beta_0}\!=\!\frac{1}{2}$   \\ 
        \hline \hline
      9&  Hurwitz (\!\ref{HurPSD}\!)\;\&  $s\!=\!1$ & $\ffrac{\theta^n}{a\! +\!n}$ & $\ffrac{(n\!+\!1)(n+a)}{n+a+1} \theta $  & $\theta\!=\!\ffrac{\alpha_2}{\beta_1},\;\ffrac{\alpha_1}{\alpha_0}\!=\!3,\; \ffrac{\beta_0}{\beta_1}\!=\!a\!+\!1$   \\
         \hline \hline
        10& Eq.\ref{jj1} & $\ffrac{\theta^n}{(n+2)(n+1)}$ & $\ffrac{(n+1)^2}{n+3}\theta$  & $\theta\!=\!\ffrac{3\alpha_2}{\beta_0}\!=\!\ffrac{\alpha_1}{\beta_0}\!=\!\ffrac{3\alpha_0}{\beta_0},\; \ffrac{\beta_1}{\beta_0}\!=\!\ffrac{1}{3}$   \\
        \hline 
    \end{tabular}
    \vspace{.2cm}
    \caption{Transition functions $g(n)$ and elementary probabilities $\alpha_i,\beta_i$ of selected PSD p.m.f. $p_n=Z_n/Z$, $\sum_n Z_n=Z$. }\label{tab:mark}
    \end{table}

The Table \ref{tab:mark} collects the results on transition functions $g(n)$ and elementary probabilities $\alpha_i,\beta_i$ of selected PSD.  It is worth mentioning that all shifted distributions shown here are infinitely divisible. The p.m.f. $\# 7$ with $g(n)=(n-\gamma)\theta$ describes the Extended Sibuya distribution when $\alpha_1 < \alpha_2$, and the zero-truncated NBD (NBD1) with $k=-\gamma=\alpha_1/\alpha_2-1$ when $\alpha_1>\alpha_2$. In the limit $\alpha_1 \to \alpha_2$:
\begin{equation}
\lim_{k \to 0}\frac{1 -(1- \theta w)^{-k}}{1- (1 -\theta)^{-k}}=\lim_{\gamma \to 0}\frac{1 -(1- \theta w)^{\gamma}}{1- (1 -\theta)^{\gamma}} = \frac{\log(1-\theta w)}{\log(1-\theta)} 
 \label{logpgf}   
\end{equation}
 it gives the Logarithmic p.g.f., and for $\gamma \to 1^{-}$ ($\alpha_1\to 0$), it collapses to a single-point distribution  $H(w) \to w$.

As a counterexample, we present two infinitely divisible distributions.
\begin{enumerate}[label=(\roman*), leftmargin=*]
\item The {\it Lagrangian Poisson (Poisson Consul}) distribution \cite{JKK2005}  satisfying the functional equation
\begin{equation}\label{lagpoispgf}
Q(w)=e^{\theta(G(w)-1)}\;,\;\;\; G(w)=wQ(w)
\end{equation}
is the compound Poisson, and therefore infinitely divisible. Its p.m.f. and $g(n)$ function read:
\begin{equation}\label{lagpois}
 p_n=\frac{\theta^n (n+1)^{n-1}}{n!}e^{-\theta(n+1)}\;,\;\;\; g(n)=e^{-\theta } \theta (n+1) \Big(\frac{n+2}{n+1}\Big)^n\;,\;0<\theta<1\;.
\end{equation}
This distribution is Markovian, but its function $g(n)$ is not a simple ratio of finite-order polynomials in $n$, and therefore is not a solution of a stationary B-D equations. 

\item  The {\it Hermite distribution} \cite{JKK2005} with p.m.f.
\begin{equation}\label{Hermite_pmf}
  p_n=e^{-\theta_1-\theta_2}\sum_{i=0}^{\left \lfloor{n}\right \rfloor}\sum_{j=0}^{\left \lfloor{i/2}\right \rfloor}\frac{\theta_1^{n-2j}\theta_2^j}{(i-2j)!j!},\;\;\;(n+1)p_{n+1}=\theta_1p_n+2\theta_2p_{n-1}
\end{equation}
is an example of a non-Markovian distribution with a two-step recurrence.
\end{enumerate}

\newpage
\section{Discrete Probability Distributions in Statistical Mechanics} \label{statmech}
Statistical mechanics concerns itself with a collection (an ensemble) of a very large number of systems, each characterized by a set of state variables \(X=\{X_1, X_2,\ldots\}\). Extensive and intensive variables appear in conjugate pairs, such as temperature and entropy, pressure and volume, or chemical potential and particle number. Within each pair, one variable assumes a fixed (sharp) value, denoted \(X_f\), whereas its conjugate partner, \(X_r\), exhibits fluctuations around its mean value \(\bar{X}_r := \mathbb{E} X_r\), which is identified with the ensemble average. This identification is commonly justified by the ergodic hypothesis \cite{Hill}, which postulates that, in equilibrium, the time average of a state variable in a single system coincides with its ensemble average. In what follows, whenever the conjugate variable \(X_f\) is taken to be fixed, we shall frequently omit the bar in \(\bar{X}_r\), implicitly referring to its ensemble-averaged value.

As will become evident below, in direct analogy with the cumulants of the power series distribution (PSD) in Eq.~(\ref{reccum}), all moments \(\mathbb{E}(X_r)^j\) of the fluctuating variable \(X_r\), forming a conjugate pair \((X_f, X_r)\), can be obtained from derivatives of a suitably defined function -- the partition function --  with respect to the fixed variable \(X_f\).

In the subsequent sections, we apply probabilistic concepts and results developed in the first part of this work to problems in statistical mechanics, with particular emphasis on small systems. In contrast to nanothermodynamics, which is typically formulated in terms of the grand canonical isothermal–isobaric ensemble (i.e., at fixed chemical potential, temperature, and pressure; see, for example, \cite{chamberlin2020}, we assume here that the system possesses a well-defined volume \(V\). Furthermore, our definition of the thermodynamic limit replaces the conventional condition \(V \to \infty\) with a limit more suitable for small systems, namely \(\mathcal{N}(V) \to \infty\), where \(\mathcal{N}(V)\) denotes the maximal number of (point-like) particles that can occupy the volume \(V\). This generalized notion of the thermodynamic limit allows us to investigate systems with unconventional properties, such as those exhibiting an infinite mean particle number \(\bar{N}\), which are not usually addressed in standard treatments.

Our objective is to exploit the infinite divisibility and scalability of particle-number distributions \(\mathbb{P}(N = n)\), which naturally emerge in ensembles with interaction range comparable to particle sizes. For systems with long-range interactions, more appropriately described within the framework of Tsallis thermodynamics (see Section~\ref{TT}), only scalability is anticipated to hold. Consequently, our analysis concentrates predominantly on Markovian probability mass functions that satisfy the one-step recurrence relation
\[
(n+1)p_{n+1} = g(n)p_n
\]
[cf. Eq.~(\ref{eq: gnkl})], which also appear as stationary solutions of birth–death (B–D) processes (see Section~\ref{kinmodels}).

Throughout this work, we employ natural units, setting \(\hbar = c = k_B = 1\).

\subsection{Particle-number distributions in grand canonical ensemble}

Consider an open system with fixed inverse temperature $\beta:=1/T$, volume $V$ and fugacity  $\xi=e^{\beta\mu}$, where $\mu$ is the chemical potential. In statistical mechanics (see e.g. \cite{Hill, Lanlif, BlanTh}), a system with fixed $V$, $\beta$, and $\mu$ is completely described by the Grand Canonical Ensemble (GCE). Its partition function $\mathcal{Z}(\beta, V,\xi)$ is a power series in  $\xi$ with coefficients given by the Canonical Ensemble (CE) partition function $\mathcal{Q}(\beta, V, n)$:
\begin{equation} \label{gcz}
 \mathcal{Z}(\beta,V,\xi):= \sum_{n\geq 0}^{\mathcal{N}(V)}\mathcal{Q}(\beta, V,n) \xi^n;\;\;\; \xi:=e^{\beta\mu} \,.
\end{equation}
The upper limit $\mathcal{N}(V)$ in the sum (\ref{gcz}) is the maximum number of particles in a volume $V$. For $\mu\leq \mu_0<0$, the function $\mathcal{Z}(\beta, V,\xi)$ is analytic in closed disk $|\xi|\leq\xi_0 =\exp(\beta\mu_0)<1$. Let us note that, except for  quantum gases obeying Fermi–Dirac statistics, for which the chemical potential $\mu$ may take either positive or negative values, the inequality $\xi < 1$ is always fulfilled in all other cases, including quantum gases obeying Bose–Einstein statistics as well as classical gases \cite{Lanlif}. Consequently, in what follows, we confine our analysis to this parameter regime.

\begin{Remark}
 The special case $\mu = 0$, and consequently $\xi = 1$, corresponds to a regime in which the system can exchange particles with its surroundings without any energetic cost. Imposing $\mu = 0$ removes the particle number $N$ as an independent thermodynamic variable; the thermodynamic potentials are then fully characterized by the inverse temperature $\beta$ and the volume $V$:
\[
\mathcal{Z}(\beta, V, 1) = \mathcal{Q}(\beta, V).
\]
Representative physical realizations include photons, phonons, and other quasiparticles that can be created or annihilated as the temperature or volume varies. For an accessible and pedagogically oriented introduction to this topic, see \cite{Blaschke2026}.
\end{Remark}

In the language of discrete probability distributions, the partition function (\ref{gcz}) with $\xi<1$ defines a {\it power series distribution} (PSD) of the r.v. $N$ -- the number of particles at fixed  $\beta, V$ and $\xi$. Its p.m.f. and p.g.f.{} $H(w;\beta,V,\xi)$ read (cf. (\ref{PSDdef})):
\begin{equation} \label{gcpart} 
 r_n(\beta, V, \xi)\!=\!\frac{\xi^n \mathcal{Q}(\beta,V,n)}{\mathcal{Z}(\beta,V,\xi)},\;\;\;H(w;\beta, V,\xi)= 
     \frac{\mathcal{Z}(\beta, V, w\xi)}{\mathcal{Z}(\beta, V, \xi)}\;.
\end{equation}
The function $H(w;\beta,V,\xi)$ is an analytic inside the disk $|w|\leq 1/\xi$, but the moments $\E N^k, k\in \N$ exist only for $\xi<1$ (see Section \ref{psdsub}). Using the formulas (\ref{reccum}) and (\ref{PSDcomb}) for cumulants $\kappa_{\ell}$ and combinants $\lambda_{\ell}$ of the PSD, we have:
\begin{equation}\label{gccomb}
\kappa_{\ell}=\xi^{\ell}\frac{\partial^{\ell}\log\mathcal{Z}(\beta, V, \xi)}{\partial\xi^{\ell}}=\frac{\lambda_{\ell}\xi^{\ell}}{\ell!}\;.
\end{equation}
In particular, the average number of particles occupying the volume $V$ is $\bar{N}=\kappa_1$. We remark in passing that both the p.g.f. $H(w;\beta,V,\xi)$  and the cumulants $\kappa_{\ell},\;\ell\in \N$  are invariant under the rescaling of partition function $\mathcal{Z}(\beta, V, \xi)\to \delta \cdot \mathcal{Z}(\beta, V, \xi)$, cf. Eqs. (\ref{gcpart}) and (\ref{gccomb}). 

\subsubsection{Lee-Yang zeroes}
At fixed $\beta$, the function $\mathcal{Z}(\beta, V,\xi)$ (\ref{gcpart}) is a polynomial of degree $\mathcal{N}(V)$ with real positive coefficients and zeroes located in the complex $\xi$-plane off the positive real axis. Assuming that particles can be treated to good approximation as a point-like objects, we define the {\it small system thermodynamic limit} as\footnote{For small systems of finite volume, this definition of thermodynamic limit is more appropriate \cite{Hagedorn:1985js} than the standard limit of infinite volume $\lim_{V\to\infty} \log \mathcal{Z}(\beta, V,\xi)/V
$.}:
\begin{equation}\label{thlim}
\lim_{\mathcal{N}(V)\to\infty}\mathcal{Z}(\beta, V,\xi)\;.
\end{equation}
If, in this limit, the sequence of zeroes approaches the positive real axis at some point $\xi_0$, then a {\it phase transition} occurs  \cite{Yang:1952be}. The curve $\xi_0(\beta)>0$ in the $\beta-\xi$ plane separates two phases. 

To proceed further, we need to introduce the {\it single-particle partition function}:  
\begin{equation}\label{sce}
\mathcal{Q}(\beta, V, n=1)=\mathcal{Q}_1(\beta, V):=\sum_{j\in \Omega} e^{-\beta E_j}\;.  \end{equation}
 The sum in (\ref{sce}) runs over the set $\Omega$ of all available micro-states $j$ in which a single particle with energy $E_j$  can be found at temperature $\beta$. The Boltzmann factor $e^{-\beta E_j}$ arises from maximizing the BGS entropy $S^{BGS}:=-\sum_j p(E_j)\log p(E_j)$ of the energy distribution $p(E_j)$ at fixed average energy $\bar{E_j}$, with $\beta$  playing a role of the Lagrange multiplier. 

 \begin{Example}
 Let us consider the classical ideal relativistic gas of particles with mass $m$, momentum ${\bf p}$, and energy $E=\sqrt{p^2+m^2}$, and the single-particle partition function:
\begin{equation}\label{irg}
\mathcal{Q}_1(\beta,V)= V\int \frac{d^3 p}{(2\pi)^3}e^{-\beta E}=V\frac{m^2}{2 \pi^2\beta }K_2\left(m \beta \right)\;,
\end{equation}
where $K_2(x)$ is the modified Bessel function. In the thermodynamic limit (\ref{thlim}), the GCE partition function has no singularities on the positive real $\xi$-axis. 
\begin{equation}\label{GCEidgas}
\mathcal{Z}(\beta,V,\xi)=\sum_{n=0}^{\infty} \frac{(\mathcal{Q}_1)^n}{n!}\xi^n=e^{\xi V z}\;.
\end{equation}
From (\ref{gcpart}) follows that the p.g.f. $e^{\xi V z(w-1)}$ represents the Poisson distribution.
The factorial in the denominator of (\ref{GCEidgas}) reflects the fact that there are $n!$ ways to count $n$ independent, {\it indistinguishable} classical particles. Dropping this assumption, let's replace in (\ref{GCEidgas}) $n! \to (n+1)^{\alpha}$.  This leads to a partition function with a singularity at $\xi_0=1/\mathcal{Q}_1(\beta)$ as $\mathcal{N}(V)\to\infty$: 
\[
\lim_{\mathcal{N}(V)\to\infty} \mathcal{Z}(\beta, \xi,V)=\frac{\text{Li}_{\alpha }(\xi \mathcal{Q}_1(\beta,V))}{\xi \mathcal{Q}_1(\beta,V)}\;;\;\; \text{Li}_{\alpha }(z)=\sum_{k=0}^{\infty}z^k k^{\alpha}
\]
For for $\alpha\!=\!1$ (\ref{gcpart}) yields the p.g.f. of the shifted logarithmic distribution, for  $\alpha=0$ we obtain the geometric p.g.f.
An analogous structure arises in the analysis of the one-dimensional quantum harmonic oscillator, which may be regarded as an illustrative toy model of the Bose gas\footnote{For a comprehensive treatment of the three-dimensional Bose gas, see \cite{Lanlif}}. In this setting, 
\begin{equation}\label{hosc}
\mathcal{Q}_1(\beta)=\sum_{j=0}^{\infty} e^{-\beta E_j} = \sum_{j=0}^{\infty} e^{-\beta (j+\frac{1}{2})\omega_0}
 =\frac{e^{\beta\omega_0/2}}{e^{\beta\omega_0}-1}
 = \frac{1}{2\sinh(\beta\omega_0/2))}\;.     
 \end{equation}
In the thermodynamic limit, the GCE partition function takes the form:
\begin{equation}\label{Zqharm} 
\mathcal{Z}(\beta,\xi)=\sum_{n\geq0}^{\infty}(\mathcal{Q}_1(\beta))^n\xi^n=
(1-\mathcal{Q}_1(\beta)\xi)^{-1}=1-\frac{\xi }{\xi -2 \sinh \left(\frac{\beta \omega_0}{2}\right)}\;,
\end{equation}
with the singularity at 
$\xi_0(\beta)=e^{\beta \mu_0}=1/\mathcal{Q}_1(\beta)$. 
Solving this equation for $\mu_0$ we obtain
\[
\mu_0
=\frac{\omega_0}{2} +\frac{1}{\beta} \log(1 - e^{-\beta \omega_0})\;.
\] 
Note that as $\beta\to \infty$ the chemical potential tends to the value of the oscillator's zero-point energy: $\mu_0\to E_0=\omega/2$. 
The latter is inaccessible due to the restriction $\mu\leq 0$. The largest attainable value where the singularity occurs  is thus $\mu_0=0$, i.e., $\xi_0=1$, with the critical temperature
$T_c=\beta_c^{-1}=\omega_0/(2 \sinh ^{-1}(1/2))\approx 1.04\omega_0$. In this case, an increase of the chemical potential from $\mu<0$ to $\mu=0$ is enforced by decreasing the gas temperature $T$ at constant $\rho$ \cite{Lanlif}.  At $T_c$, the mean number of particles $N$ and all higher moments of particle-number distribution diverge -- the gas transforms into the lowest quantum state -- the Bose-Einstein condensate.  
 
For completeness, consider an interesting solvable model: an ensemble of an infinite number of harmonic oscillators in the infinite heat bath with fixed $\beta$ and $\xi$. The average number of particles $N_j$ emitted by the $j^{\rm th}$ oscillator is:
\[
N_j =\frac{N}{(2j-1)^{2}}\;,\;\; j=1,2,\ldots\;;\;\;\;N=
\xi\frac{\partial\log\mathcal{Z}(\beta, \xi)}{\partial\xi}=
\frac{\xi }{2 \sinh \left(\frac{\beta \omega_0}{2}\right)-\xi}
\]
The distribution of the total number of particles emitted by all oscillators has the p.g.f. (see  Appendix \ref{Exscal}):
\begin{equation}\label{inf_many_osc}
 H(w) = \prod_{j=1}^{\infty} \frac{1}{1+N_j(1-w)}
=\text{sech}\left(\frac{\pi}{2}\sqrt{N(1-w)}\right)\;,   
\end{equation}
The corresponding particle-number distribution is infinitely divisible, scalable, and has the mean  $N\pi^2/8\approx 1.23N$.

Assume now that each emitted particle represents a cluster of many distributed according to the Sibuya distribution. In this case, each harmonic oscillator emits the particles according to the compound geometric Sibuya distribution with the p.g.f. $[1+\lambda(1-w)^{\gamma}]$, (see (\ref{DMLpgf})). The total number of quanta emitted from all oscillators has the p.g.f. (see  Appendix \ref{Exscal}):
\[
H(w) = \text{sech}\left(\frac{\pi}{2}(N(1-w))^{\gamma/2}\right)\;.
\]
The distribution is again infinitely divisible and scalable, but has an infinite mean. In the limit $\gamma\to1^{-}$ we obtain the p.g.f. (\ref{inf_many_osc}) with a finite mean.
 \end{Example}

\subsubsection{The limiting temperature}
Another type of singularity arises when representing the grand-canonical ensemble (GCE) partition function as an integral over the density of states $\sigma(E)$:
\begin{equation}\label{eq:sigma}
    \mathcal{Z}(\beta, V,\xi)= \int_0^{\infty}\sigma(E,V, \xi)\, e^{-\beta E}\, dE \;.
\end{equation}
If 
\[
\lim_{E \to \infty}\frac{S(E)}{E} = \lim_{E \to \infty}\frac{\log\sigma(E)}{E} = 0 \,,
\]
where $S(E) := \log \sigma(E)$ denotes the microcanonical entropy, the integral in Eq.~\eqref{eq:sigma} defines an analytic function in the right half-plane $\mathrm{Re}(\beta) > 0$. If, in addition, the energy spectrum is bounded from above (i.e., $\sigma(E, V, \xi) = 0$ for $E > E_{\max}$,\footnote{For fixed $V$ this corresponds to a maximal attainable energy density $\epsilon_{\max} = E_{\max}/V$.} ), then the partition function $\mathcal{Z}(\beta, V, \xi)$ is an entire function of $\beta$.

In contrast, if the density of states exhibits the asymptotic behavior
\[
\sigma(E, V, \xi) = f(E, V, \xi)\, e^{\beta_0 E}, \qquad \beta_0 > 0,
\]
with $f(E, V, \xi)$ growing at most polynomially in $E$, the partition function $\mathcal{Z}(\beta, V, \xi)$ develops a singularity at $\beta = \beta_0$. The parameter $\beta_0$ then defines the {\it limiting temperature} of the corresponding phase \cite{Hagedorn:1985js, Rumer:1960}. In particular, for 
\[
f(E, V,\xi)\sim (E+E_0)^{-a},
\]
with constants $E_0>0$ and $a>0$, the GCE partition function $\mathcal{Z}(\beta, V, \xi)$ is well-defined only for $\beta > \beta_0$. 

Such an exponential growth of the density of states is realized in hadronic systems, where $\sigma(E)$ exhibits a singularity at the {\it Hagedorn temperature} $T_H = 1/\beta_0 \simeq 160$--170 MeV \cite{Hagedorn:1985js}. At this temperature, hadronic matter becomes thermodynamically unstable: hadrons dissolve into their elementary constituents, initiating a phase transition \cite{Cabibbo:1975ig} to a new state of strongly interacting matter, the quark–gluon plasma \cite{Rafelski:2015xej, Pasechnik:2016wkt}.
\subsubsection{Other Thermodynamic Variables}\label{OtTh}
From  (\ref{eq:sigma}) follows the relation between the GCE partition function and the average energy:
\begin{equation}
    \bar{E}  = -\left [\frac{\partial \log \mathcal{Z}(\beta, V,\xi)}{\partial \beta} \right]_{V,\mu} \;.
\end{equation} 
The latter allows us to introduce the {\it Helmholtz free energy} $F$ as the Legendre transform between dimensionless quantities $\beta  F$ and $\beta  E$: 
\begin{equation}\label{Free}
\beta F=\frac{d S}{d E} E -S =\beta E -S= -\log \mathcal{Z}(\beta, V, \xi),
\end{equation}
as well as the pressure $p$ of the fluid  \footnote{For small systems, we need not define the pressure as the limit $\beta p:=\lim_{V\to \infty}\log \mathcal{Z}(\beta, V, \xi)/V $. }
\begin{equation}\label{pressure}
-\beta F = \beta p V = \log \mathcal{Z}(\beta, V,\xi)\;.
\end{equation}
Let us note in passing that the condition $p\geq0$ implies $\mathcal{Z}(\beta, V,\xi)\geq 1$.

The extensive variable $E$ is homogeneous of degree 1 in $S, V, N$ \footnote{Let us recall that in the GCE the state variables $E, S, N$ represent the ensemble averages.}:
$E=V \varepsilon (S/V, N/V)$, where $ E/V = 
\varepsilon$ is the energy density.
Consequently, by Euler's theorem, we  obtain: 
\begin{equation}\label{flaw}
dE=TdS-pdV+\mu dN    
\end{equation}
which represents the {\it first law of thermodynamics}. Subtracting it from the differential $d(E -T S+ pV - \mu N)$  yields the equations: 
\begin{equation}\label{SVN}
-SdT+Vdp=Nd\mu\;,\;\;\;dp=\frac{N}{V}d\mu + \frac{S}{V}dT=\rho d\mu+sdT\;.    
\end{equation}
The latter allow us to express particle-number density $\rho$ and entropy density $s$ as derivatives of the pressure $p$:
\begin{equation}\label{SfromP}
 \rho=\left( \frac{\partial p}{\partial \mu}\right)_{V,T} \;\; \text{and}\;\; s=\left( \frac{\partial p}{\partial T}\right)_{V,\mu}=-\beta^2\left(\frac{\partial p}{\partial \beta}\right)_{V,\mu}\;.
\end{equation}

Let us now discuss the relation between energy density $\varepsilon$ and $\rho$. By dimensional arguments, the ratio 
\begin{equation}\label{phix}
 \beta\frac{E}{N}=\beta\frac{\varepsilon}{\rho}:=\phi(x) 
\end{equation}
is a function of the dimensionless argument $x=m\beta$, where $m$ is the rest mass of a single particle.  
Let us consider two examples. 
\begin{Example}\label{phi_ex}
 \begin{enumerate}[label=(\roman*), leftmargin=*]
 \item[]
 \item $\phi(x)=\phi_0$ yields  $E/N= \phi_0 T$ which looks similar to classical ideal gas. According to the equipartition theorem\cite{Lanlif}, in non-relativistic systems, each independent degree of freedom contributes an average energy per particle $E/N$ of $T/2$. Thus, for a system with $n_{DoF}$ number of degrees of freedom $\phi_0=n_{DoF}/2$. For ultra-relativistic systems, this value is doubled and $\phi_0=n_{DoF}$.
 
 \item For {\it collision-dominated gas of particles} (or plasma)
 \footnote{A state of matter where the behavior of the gas is governed primarily by frequent particle collisions rather than external forces or boundaries.}, where the energy-momentum conservation takes place at a microscopic level, relativistic kinetic theory gives \cite{Csernai, DeGroot}:  
\begin{equation} \label{eq: collgas}
    \phi(x) =x\left(\frac{K_1 (x)}{K_2(x)} + \frac{3}{x} \right)  \;.
\end{equation}
The high- and low-$x$ expansion of the function $\phi(x)$ yield:

\begin{equation} \label {eq: collgas1}
\varepsilon   \underset{m\beta \ll 1}{\approx}  \left(\frac{3}{\beta }+\frac{\beta  m^2}{2}\right)\rho \rightarrow 3\frac{\rho}{\beta}\;\;\;\; \text{and} \;\;\; 
\varepsilon   \underset{m\beta \gg 1}{\approx} \left(m+\frac{3}{2 \beta}\right)\rho 
\rightarrow m \rho\;.
\end{equation}
These approximations allow us to express the speed of sound squared $c_s^2$ as a function of particle-number density $\rho$:
\begin{equation}\label{c_s}
c_s^2=\frac{dp}{d\varepsilon}=\frac{d\rho}{d\varepsilon} \frac{dp}{d\rho}=\frac{\beta}{\phi(m\beta)}\frac{dp}{d\rho}\;;\;\;\:
c_s^2\underset{m\beta \ll 1}{\approx}\frac{\beta}{3}\frac{dp}{d\rho}\;,\;\; c_s^2 \underset{m\beta \gg 1}{\approx}\frac{2\beta}{m+3}\frac{dp}{d\rho} \;.
\end{equation}
For $m\beta\ll1$ (ultra-relativistic gas), the inequality $c^2_s\leq 1$ leads to $dp/d\rho\leq 3T$.
Using the above variables, we can obtain the specific heat at fixed $V$:
\[
C_V=\frac{d\varepsilon}{dT}=\frac{d \varepsilon}{dp} \frac{dp}{dT} =\frac{s}{c^2_s}
\]
\end{enumerate}
\end{Example}
\subsection{Infinite divisibility and Scalability} \label{infdscal}

General theoretical considerations within the grand canonical ensemble, when applied to systems characterized by short-range interactions, i.e., with a range comparable to particle sizes, imply that the particle-number distribution must satisfy the properties of infinite divisibility and scalability. By contrast, for systems dominated by long-range interactions and formulated within the Tsallis non-extensive thermodynamic framework (see Section \ref{TT}), only the scalability condition remains valid.

Formally, infinite divisibility requires that an arbitrary partition of the total volume $V$ into $j$ equal sub-volumes produces, in each sub-volume, a particle-number distribution that belongs to the same family as the distribution defined on the original volume. Scalability, in turn, denotes invariance of the functional form of the distribution under an arbitrary rescaling of the particle number by a factor $a>0$, thereby encompassing both increases and decreases relative to the original particle count.
\subsubsection{Infinite divisibility of particle-number distribution}
Recall that according to Proposition \ref{infdthin} for every infinitely divisible p.g.f. of the form:
\[
H(w;\lambda)=\exp\bigl\{\lambda\bigl(G(w)-1\bigr)\bigr\},
\]
there exists a corresponding PSD p.g.f.
\[
R(w;\theta,\lambda)=\frac{H(\theta w;\lambda)}{H(\theta;\lambda)},
\]
which represents a compound Poisson distribution with parameter $\nu$, where $\nu$ is a function of $\theta$ and $\lambda$.
\begin{equation}\label{PSDfromCPD1}
 R(w;\theta,\lambda)
 =\frac{e^{\lambda (G(\theta  w)-1)}}{e^{\lambda (G(\theta )-1)}}=e^{\nu ( G_1(w)-1)};\;\;\;
 \nu:=\lambda G(\theta),\;\; G_1(w):=\frac{G(\theta w)}{G(\theta)}\;.
\end{equation}
After substitutions  $\theta\to\xi$, $G(\theta w) \to \log \mathcal{Z}(\xi w)$ (see (\ref{reccum_m}))  and $H(\theta w;\lambda) \to \mathcal{Z}(\cdot,\xi w)$ \footnote{The placeholder $\cdot$ stands for the suppressed arguments  $\beta$ and $V$.}, the logarithm becomes:
\begin{equation}
\log\left(\frac{\mathcal{Z}(\cdot,\xi w)}{\mathcal{Z}(\cdot ,\xi)}\right)=\nu(G_1(w;\xi)-1)
\end{equation}
The condition  $G_1(0,\xi)=0$ implies $\mathcal{Z}(\cdot,0)=1$. For partition functions not satisfying this condition, we apply the transformation $\log \mathcal{Z}(\cdot,\xi)\to \log \mathcal{Z}(\cdot,\xi)-\log \mathcal{Z}(\cdot, 0)$, corresponding to the additive shift  $F(\xi)\to F(\xi)-F(0)$ of free energy (\ref{Free}) at fixed $\beta$, which reflects the familiar non-uniqueness of the energy minimum.

By Theorem \ref{infd2PSD}, if an equilibrium thermodynamic system with the GCE partition function $\mathcal{Z}(\cdot,\xi)$  has an infinitely divisible p.g.f. with fugacity $\xi$ depending only on $\nu=- \log \mathcal{Z}(\cdot,0) /\log \mathcal{Z}(\cdot,\xi)$, then its p.g.f. has only one nonzero combinant $\lambda_j>0$. Such a system therefore describes a single-component ideal gas of $j$ particle clusters with the p.g.f. $e^{\lambda_j(w^j -1)}$.  

On the other hand, by Theorem \ref{PSDisCPD}, the analyticity of $\log \mathcal{Z}(\cdot, \xi w)$ in the disk $|\xi w|\leq 1$ with non-negative power series coefficients at $w=0$ implies an infinitely divisible particle-number distribution with parameter $\nu$ depending on $\xi$. 

We can now frame these observations as a single proposition that establishes the quasiparticle picture of an interacting gas.
\begin{Proposition}\label{cor:mixt}
A gas of interacting particles with $\log \mathcal{Z}(\cdot,\xi w)$ analytic in $|\xi w|\leq 1$ and non-negative power series coefficients at $w=0$ represents a mixture of different species of quasiparticles, each consisting of $j$ particles, in mutual equilibrium. Each component satisfies the {\it Equation of State} (EoS) of a perfect gas.   
\end{Proposition}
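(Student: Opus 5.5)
The plan is to combine Theorem \ref{PSDisCPD} with the combinant decomposition of Proposition \ref{Pmultiplets}, and then read off the equation of state from (\ref{pressure}).

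\textbf{Step 1 (infinite divisibility).} By (\ref{gcpart}) the particle-number p.g.f. is the PSD $H(w;\beta,V,\xi)=\mathcal{Z}(\cdot,\xi w)/\mathcal{Z}(\cdot,\xi)$. After the normalization $\mathcal{Z}(\cdot,0)=1$ discussed above, Theorem \ref{PSDisCPD} applies. Together with the analyticity and non-negativity hypotheses on $\log\mathcal{Z}(\cdot,\xi w)$, it gives $H=\exp\{\nu(G_1(w)-1)\}$ with $\nu=\log\mathcal{Z}(\cdot,\xi)$.

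\textbf{Step 2 (quasiparticle decomposition).} Expand
\[
\log\mathcal{Z}(\cdot,\xi w)=\sum_{j>0}b_j(\beta,V)\,\xi^j w^j,\qquad b_j\ge 0 .
\]
Following (\ref{combdef}) and (\ref{PSDcomb}), the combinants are $\lambda_j=b_j\xi^j\ge0$, and
\[
H(w)=\prod_{j>0}\exp\{\lambda_j(w^j-1)\}.
\]
By Proposition \ref{Pmultiplets}, extended to $J=\infty$ by a limiting argument, $N=\sum_j jN_j$. The $N_j$ are independent Poisson variables with means $\lambda_j$, and $N_j$ counts clusters (quasiparticles) built from $j$ particles. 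Independence of the $N_j$ is the statement that the species are non-interacting. Equilibrium among species means that they share $\beta$ and $V$ and that the fugacity of species $j$ is $\xi^j$, i.e.\ its chemical potential is $j\mu$. This is exactly chemical equilibrium for the reactions $j\cdot A\rightleftharpoons A_j$.

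\textbf{Step 3 (EoS).} From (\ref{pressure}),
\[
\beta pV=\log\mathcal{Z}=\sum_j b_j\xi^j=\sum_j\lambda_j=\sum_j\bar N_j .
\]
Hence $p=\sum_j p_j$ with $p_j=\bar N_j T/V=\rho_jT$, which is the perfect-gas law for each component, with Dalton's law for the total.

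\textbf{Main obstacle.} The delicate point is Step 2 in the infinite-combinant case. One must justify the infinite product, and hence the convergence of $\sum_j\lambda_j$, on $|w|\le1$. For this I would invoke the assumed analyticity of $\log\mathcal{Z}$ on $|\xi w|\le1$, which gives absolute convergence of $\sum b_j\xi^j$. Equally delicate is interpreting $\lambda_j$ as the mean cluster count; here I would rely on the remark following Proposition \ref{Pmultiplets}. The remaining steps are identifications rather than estimates.
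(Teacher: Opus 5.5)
Your proposal is correct and follows essentially the same route as the paper: Theorem \ref{PSDisCPD} gives infinite divisibility, the p.g.f.\ is written as a product of Poisson factors $\exp\{\lambda_j(w^j-1)\}$ as in Proposition \ref{Pmultiplets}, and $\beta pV=\log\mathcal{Z}$ is split into partial pressures with $\beta p_jV=\bar N_j$. Beyond the paper, you make the normalization $\mathcal{Z}(\cdot,0)=1$ explicit and control the infinite product via $\sum_j b_j\xi^j=\log\mathcal{Z}(\cdot,\xi)<\infty$ with non-negative coefficients, both of which the paper leaves implicit; your equilibrium argument with $\mu_j=j\mu$ is the one the paper gives right after its proof.
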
 

\begin{proof}
The PSD character of the particle-number distribution (Eq.\ref{gcpart}) and the analyticity of $\log \mathcal{Z}(\xi w)$ with non-negative coefficients imply the infinite divisibility by Theorem \ref{PSDisCPD}. Let $G(w)=\sum_{j>0}g_j w^j$ be the p.g.f. of compound Poisson secondary distribution. Then
\[
 H(w;\lambda, \xi)=
 \frac{\mathcal{Z}(\cdot,\xi w)}{\mathcal{Z}(\cdot,\xi)}
 = \exp\left[\lambda \sum_{j>0}g_j \xi^j (w^j-1)\right] =\prod_{j>0}\exp\left[\lambda_j\xi^j(w^j-1)\right]\;,
\]
where $\lambda_j=\lambda g_j$ is the $j^{\rm th}$ combinant. Substituting  $\mathcal{Z}(\cdot,\xi)$ into (\ref{pressure}) and writing the pressure as $p=\sum_{j>0} p_j$, where $p_j$ is the partial pressure due to clusters consisting of exactly $j$ particles, gives
\begin{equation} \label{eq: clust}
\log \mathcal{Z}(\beta, V,\xi) =   \beta pV =  
\sum_{j>0}\beta p_jV  =\sum_{j>0} \lambda_j \xi^j =\sum_{j>0}N_j\;.
\end{equation}
 Thus, each component satisfies the ideal gas EoS: $\beta p_j V = N_j$.
\end{proof}

This result can also be obtained from a physical argument \cite{Hill}: the average number of $j$-particle clusters $N_j$ remains fixed  when their chemical potential $\mu_j$  satisfies the equilibrium condition  $\mu_j=j\mu$, i.e. $\xi^j=\xi_j, ~j\in\N$:
 \begin{eqnarray}
  \beta p = \sum_{j\geq1}b_j(\beta)\xi^j =  \sum_{j\geq1}b_j(\beta)\xi_j = \sum_{j\geq1}\beta p_j\, \nonumber \\ 
 \rho =  \sum_{j\geq1} j b_j(\beta) \xi^j =  \sum_{j\geq1} j b_j(\beta) \xi_j = \sum_{j\geq1}j \rho_j 
 \label{multideal}
 \end{eqnarray}

\subsubsection{Scalability of particle-number distribution.}
Consider a representation of a nonideal gas as a statistical mixture of ideal gases, each component characterized by the equation of state (EoS) $\beta p_j = \lambda_j / V = \rho_j$. Let $r_j$ and $R(w)$ denote, respectively, the p.m.f. and the p.g.f. of the mixing distribution. Then the p.m.f. $p_n$ and p.g.f. $H(w)$ of the resulting mixture distribution are given by:
\[
p_n \;=\; \sum_{j} p_{n\mid j}\cdot r_j ,
\quad\text{and}\quad
H(w) \;=\; R\bigl(G(w)\bigr),
\]
where $p_{n\mid j}$ is the conditional p.m.f. of the number of particles given the $j$-th ideal-gas component, and $G(w)$ is the corresponding conditional p.g.f.\begin{equation}\label{mix}
p_n=\sum_j e^{-\lambda_j} \frac{\lambda_j^n}{n!}\cdot r_j \;,\;\; H(w)=\sum_n p_n w^n=\sum_j  e^{\lambda_j(w-1)} \cdot r_j \;.
\end{equation}
For $\lambda_j=2j$ \footnote{The choice of even values $2j$ prevents the appearance of terms with a negative sign.}, the p.g.f. $H(w)$ can be expressed as a power mixture\cite{StvH2004}  representing a compound distribution with  mean 
$\lambda =  2G^{(1)}(1)$:

\begin{equation}\label{mixpoipgf}
 H(w)=\sum_j e^{-2j(1-w)}\cdot r_j = \sum_j (P(w,2))^j \cdot r_j=R(P(w,2)) \;,\;\; P(w;1)= e^{(w-1)} \;.
\end{equation}
The scale transform $w\to 1-a(1-w)$ of $H(w)$ sends $R(P(w,2)) \to R(P(w,2a))$, which is a valid p.g.f. for both $0<a\leq 1$, (thinning), and $a>1$ (zooming out). Hence $H(w)$ is scalable.

In the continuous case, a p.d.f. $f(\lambda)$ replaces the mixing p.m.f. $r_j$ and  Eq.\ref{mixpoipgf} becomes:
\begin{equation}\label{fluctlamb}
p_n= \int_0^{\infty} e^{-\lambda}\frac{\lambda^n}{n!}f(\lambda) d\lambda\;,\;\; H(w)=\int_0^{\infty}e^{\lambda(w-1)} f(\lambda) d\lambda\;.    
\end{equation}
The p.g.f. $H(w)$ of the mixture is scalable (see Eq.\ref{eq: lapdens}), and its scaled factorial moments $F_j=H^{(j)}(1)/(H^{(1)}(1))^j$ are invariant under $H(w)\to H(1-a(w-1)),\;\forall a\in \R$. 

The integral representation (\ref{fluctlamb}) is valid for a broad class of infinitely divisible p.g.f.s; see, for example, (\ref{eq: lapdens}). In many instances, this representation is a direct consequence of the fact that a scalable p.g.f. is infinitely divisible if and only if the probability density function (p.d.f.) $f(\lambda)$ of the associated mixing distribution is itself infinitely divisible \cite{StvH2004}. In the complementary case -- such as when $f(\lambda)$ is specified by the Tsallis p.d.f. (\ref{Tcontex}) -- a corresponding Laplace-type integral representation can still be constructed; see Eqs. (\ref{Tw}) for a detailed exposition.

The superposition of Poisson distributions characterized by distinct mean values induces fluctuations in the mean particle number, $\langle n \rangle = \lambda$. For a fixed system volume $V$, these fluctuations are equivalently manifested as variations in the particle-number density $\rho$, and, consequently, in the energy density $\varepsilon$.
\begin{Example}
For the NBD, $F_j=(k)_j/k^j$, so all multi-particle correlations are determined by the two-particle correlation function $\langle n(n-1)\rangle/ \langle n \rangle^2 -1=F_2-1 =1/k$. As $k\to \infty$, the correlations weaken, approaching the Poissonian limit $F_2=1$. For an ideal gas of quasiparticles each consisting of $n$ particles, with GCE partition function $\mathcal{Z}(\xi)=\exp(-\lambda \xi^{n})$:
\begin{equation}\label{facmom_n-tup}
 F_2=\frac{(n-1) \xi ^{-n}}{n \lambda}+1\geq \frac{2n-1}{n \lambda}\;,\;\; 
F_3\geq \frac{n (n \lambda  (\lambda +3)+n-3 (\lambda +1))+2}{n^2 \lambda ^2}\;.   
\end{equation}
The Poissonian limit $F_j=1\;, \forall j\in\N$  corresponds to $n=1$.

\end{Example}

The p.m.f. $r_n(\xi)$ of a scalable, infinitely divisible particle-number distribution is Markovian (see  (\ref{eq: gnkl})). Since the same is true for the p.m.f. arising as the stationary solution of B-D processes (\ref{eq:prodgn}), we have in both cases (see (\ref{gcpart})) :
\[
g_r(n)=\frac{(n+1) r_{n+1}}{r_n}=\xi\frac{(n+1)\mathcal{Q}(\beta,n+1)}{\mathcal{Q}(\beta,n)}=\xi g_p(n)\;.
\]
Thus, up to the normalization constant, the CE partition function coincides with the p.m.f. $p_n$ of a Markovian distribution with transition function $g_p(n)=(n+1)p_{n+1}/p_n$. In this case, the CE partition function $\mathcal{Q}(\beta, n)$ is expressible as a product of the single-particle partition functions  $\mathcal{Q}_1(\beta)$ of Eq. \ref{sce} and $g_p(n)$:
\begin{equation}\label{gcpartgn}
\mathcal{Q}(\beta, n) = \mathcal{Q}_1(\beta)\left(\frac{\delta_{n0}}{g_p(0)}+(1-\delta_{n0})\prod_{j=1}^{n-1}\frac{g_p(j)}{j+1}\right)\;.
\end{equation}
\subsection{Virial Equation of State} \label{EOS}
Knowledge of the logarithm of the GCE partition function $\log \mathcal{Z}(\beta, V,\xi)=-\beta F$  and its first derivative -- the particle-number density $\rho(\xi)$
\begin{equation} \label{virexp}
    \beta  pV =\log \mathcal{Z}(\beta, V,\xi),\;\;\;\rho(\xi)=\frac{N}{V}
    =\frac{\xi}{V}\frac{\partial\log \mathcal{Z}(\beta, V,\xi)}{\partial\xi}\;.
\end{equation}
allows us to obtain the EoS of the fluid, $\beta p V=f(\rho)$.

The unknown function \(f(\rho)\) can, in principle, be determined by inverting the relation \(\rho(\xi)\). This inversion can always be performed at least perturbatively, for example by employing the method of series reversion (see, e.g., \cite{AbSteg1972}), which yields the corresponding virial expansion.\begin{equation}\label{virial}
\beta p V = f(\rho)= V\sum_{j\geq 0}^J B_{j}(\beta) \rho^{j}= \sum_{j\geq 0}^{J}\lambda_j(\beta)\xi^j,\;\;\;V\rho(\xi)=\sum_{j\geq 0}^{J}j\lambda_j(\beta)\xi^j \;.
\end{equation}
Note that each power $\rho^j$ contributes one combinant $\lambda_j$. The upper cutoff $J$ in the sum reflects our incomplete knowledge of the GCE partition function $\mathcal{Z}(\beta, V, \xi)$\footnote{Let us note that  (\ref{virial}) with $J<\infty$ and $\lambda_j>0,\;\forall j\in \N$ defines the EoS even for a secondary distribution with a non-existent mean.}. 
Alternatively, $J<\infty$ corresponds to a particle-number distribution with a finite support, or can be interpreted in terms of the characteristic inter-particle distance $r_0$, with $J\approx V/r^3_0$, ensuring that the coefficients $B_j( \beta)$ are independent of $V$ \cite{Hill}.  

To illustrate the last point, consider the EoS\footnote{Case with $k=1$ describes a gas with maximum entropy at fixed mean particle-number $V\rho$.}:
\begin{equation}
 \beta pV =\log \mathcal{Z}(\beta, V,\xi)= \log \left(1\!+\!\frac{V\rho}{k}\right)^k =
V\sum_{j>0}B_j\rho^j;\;\;\; B_j= \frac{1}{j} \left(-\frac{V}{k}\right)^{j-1}
\end{equation}
corresponding to the NBD p.g.f.(\ref{NBD}):
\[
H(w)=\frac{\mathcal{Z}(\xi w)}{\mathcal{Z}(\xi)}\;,\;\; \mathcal{Z}(\xi) = (1-\xi)^{-k}= \left(1\!+\!\frac{V\rho}{k}\right)^k\;,\;\; \xi=\left(1\!+\!\frac{V\rho}{k}\right)^{-1}\;.
\]
The requirement that the virial coefficients $B_j(\beta)$ and the fugacity $\xi$ be independent of the volume $V$ is fulfilled only in the regime where $k \sim V$. In this regime, the conventional thermodynamic limit $V \to \infty$ reproduces the ideal-gas equation of state and leads to a Poissonian particle-number distribution. By contrast, for finite (small) systems, the thermodynamic limit $V \to \infty$ is not applicable, and one must retain $V$-dependent virial coefficients $B_j(V)$. Under these conditions, the two-particle correlation function, characterized by $F_2 - 1 = 1/k$, may exhibit a more general functional dependence on the volume. In particular, for short-range interactions, it can become effectively independent of $V$.

\subsubsection{The vacuum pressure} \label{vacpres}
From the normalization condition of the probability generating function, namely $\log H(1) = 0$, it follows that $\lambda_0 = -\sum_{j>0}^{J} \lambda_j$. Accordingly, the zeroth coefficient $B_0(\beta) < 0$ in the virial expansion can be interpreted as generating a confining pressure, whereas $B_0(\beta) > 0$ corresponds to a vacuum pressure.  


In the standard thermodynamic limit, $V \to \infty$, this contribution becomes negligible and effectively vanishes. For finite systems of point-like particles, however, the absolute term
\[
\frac{1}{V} \lim_{\rho \to 0^+} \log \mathcal{Z}(\beta, V, \xi) 
= \beta p_0 
= B_0(\beta) \neq 0
\]
cannot be neglected, as it yields a nontrivial vacuum (or reference) pressure even in the dilute limit.
Since $\kappa_1 = \xi \partial\log {\mathcal{Z}(\beta,V,\xi)}/\partial \xi >0$, the density $\rho(\xi)$ is a monotonically increasing function of $\xi$. The condition for the appearance of an absolute term in the virial expansion is $\lim_{\rho\to 0^+} \mathcal{Z}(\beta, V,\xi)  \neq 1$, as seen from   (\ref{gcz}).

\subsubsection{The EoS and the particle-number  distribution}\label{virEx}
Retaining in (\ref{virial}) only the term linear  in $\xi$ and noting that $\lambda_0=-\lambda_1$\footnote{Note that  $\rho>0$ implies $\lambda_1>0$.} we obtain:
\[
\beta pV =-\lambda_1+\lambda_1 \xi,\;\; V\rho=\lambda_1 \xi,\;\; \beta p =-\frac{\lambda_1}{V}+\rho\;.
\]
 In the standard thermodynamical limit of $V\to \infty $, this reduces to the ideal gas EoS $\beta p =\rho$. To  next order, $J\!=\!2$, Eq. \ref{virial} yields:
\[  \beta p = -\frac{\lambda_1+\lambda_2}{V}+\rho  
-\frac{\lambda_2 V}{\lambda_1^2}\rho ^2.
\] 
The corresponding partition function $\mathcal{Z}(\beta,V,\xi)\!=\!\exp[\beta p V]
\!=\!\exp\left[\lambda_2\xi^2+\lambda_1\xi\right]$ yields the p.g.f. of Hermite distribution \cite{JKK2005}: 
\begin{equation}\label{Hermite_pgf}
H(w;\lambda_1\xi,\lambda_2\xi)=\frac{\mathcal{Z}(w\xi)}{\mathcal{Z}(\xi)}=
\exp\left[\lambda_2\xi^2(w^2-1)+\lambda_1\xi(w-1)\right]
\end{equation}
with the p.m.f. satisfying the (non-Markovian) recurrence
\[
(n+2)p_{n+2}=\rho \left(p_{n+1}-\frac{4\lambda_2 \rho}{\lambda_1^2} p_n\right);\;\; \; p_0=e^{-\rho+\frac{2\lambda_2}{\lambda_1^2}\rho^2},\; p_1=\rho p_0\;.
\]
For $\lambda_2>0$ the distribution always exists and is infinitely divisible, for $\lambda_2<0$  the inequality $p_0+p_1\leq 1$ limits its validity only to 
\[
\lambda_2\leq \frac{\lambda_1^2 \left(\rho -\log (\rho +1)\right)}{2 \rho ^2}\leq 0\;.
\]

Several illustrative examples of equations of state derived via the virial expansion from selected probability generating functions are presented in Appendix \ref{ExEoS}.

\subsubsection{Reconstructing the particle-number distribution from a given  EoS} 
Given the virial expansion (\ref{virial}) truncated at order $J$, the combinants $\lambda_j$, $j = 1, \ldots, J$, of the particle-number distribution can be determined by substituting
\[
\rho \;=\; \sum_{j=1}^{J} j\,\lambda_j\,\xi^j
\]
into the function $\beta p(\rho, J)$ and re-expanding the result in powers of $\xi$ up to order $J$. By matching the resulting coefficients with those of
\[
\beta p(\xi) \;=\; \sum_{j=1}^{J} \lambda_j\,\xi^j,
\]
one obtains explicit expressions for the combinants $\lambda_j$.  

Next, decompose the set $\{\lambda_j\}_{j=1}^J$ into two sets of $J_{+}$ positive $\{\lambda_j^{+}\}_{j=1}^{J_{+}}$ and $J_{-}$ negative $\{\lambda_j^{-}\}_{j=1}^{J_{-}}$ elements and apply Proposition \ref{fincumul}. This yields the representation
\[
H(w) \;=\; \frac{H_{+}(w)}{H_{-}(w)},
\]
where
\[
H_{\pm}(w) \;=\; P\bigl(G_{\pm}(w); \lambda\bigr), 
\quad
G_{\pm}(w) \;=\; \sum_{j=1}^{J_{\pm}} \lambda_j^{\pm}\,w^j.
\]
In the special case $J_{-}=0$, the particle-number distribution is infinitely divisible. Otherwise, it can be expressed as the distribution of the difference of two infinitely divisible random variables. Several equations of state reconstructed by means of this procedure are presented in Appendix \ref{RecEoS}.

Let us note that in some instances, finding an appropriate p.g.f. can be greatly simplified by the Corollary \ref{Khatri} 
saying that every PSD is completely determined by the functional dependence of its first two moments (cumulants) on some parameter $\omega$.
Thus, for instance, studying the temperature dependence of the first two cumulants $\kappa_{1,2}=y_{1,2}(\beta)$ we can in principle reconstruct the whole GCE partition function $\mathcal{Z}(\beta, V, \xi)$.

As another example, consider the simple EoS given by $\beta p = C\rho$ with $C>0$. We investigate under which conditions this relation defines a p.g.f. Using (\ref{virexp}) and (\ref{virial}), we obtain the differential equation for the function $\log \mathcal{Z}(\xi)$ and its solution:
\begin{equation}\label{bPCr}
\log \mathcal{Z}(\xi)= C \xi\frac{\partial\log \mathcal{Z}(\xi)}{\partial \xi} = V\rho\;,\;\;  \log \mathcal{Z}(\xi) = c_1 \xi^{1/C}\;.
\end{equation}
The function
\begin{equation}\label{PoisC}
H(w)=\frac{\mathcal{Z}(w\xi)}{\mathcal{Z}(\xi)}=\exp\left[c_1 \xi^{1/C}(w^{1/C}-1)\right]
\end{equation}
defines a p.g.f. if and only if the coefficients in its power-series expansion in $w$ are nonnegative. This requirement is satisfied exclusively when $1/C = j \in \mathbb{N}$. Comparison with (\ref{eq: clust}) then shows that, in this case, the EoS (\ref{bPCr}) corresponds to an ideal gas composed of $j$-particle clusters, each obeying $\beta p_j = \rho_j$:
\begin{equation}\label{EoSj}
 \log \mathcal{Z}(\beta, V,\xi) =   
\beta p_jV  = \lambda_j \xi^j = N_j = \frac{N}{j} = \frac{V
\rho}{j} = V\rho_j\;.
\end{equation}
In the last relation, we have used the identity $N_j = N/j$, which expresses that the mean number of $j$-particle clusters, $N_j$, is equal to the mean total particle number $N$ divided by the number of particles per cluster, $j$. Consequently, the corresponding number densities must satisfy $\rho_j = \rho/j$.

\subsubsection{Perfect fluids} 
  From the first law of thermodynamics (\ref{flaw}) follows the expression for the pressure $p$  at fixed $S$ and $N$:
\[
p=-\left(\frac{\partial E}{\partial V}\right)_{S,N}=
-\varepsilon - V\left(\frac{\partial \varepsilon}{\partial V}\right)_{S,N}
\]
Assuming that the energy density depends on $S$ and $N$ through particle number density $\varepsilon=\varepsilon(\rho)$ only \footnote{Note that this assumption, which is generally not true, defines a class of EoS.}, we obtain the identity:
\begin{equation}\label{p_rho_eps}
p=
-\varepsilon - V \frac{\partial \varepsilon}{\partial\rho}\cdot\frac{\partial (N/V)}{\partial V}=
\rho \frac{\partial\varepsilon}{\partial\rho}-\varepsilon\;.
\end{equation}
The last expression, valid independently of the assumption of the constancy of $S$ and $N$, must also be true in GCE.  

Let us consider the model with $\varepsilon(\rho)=a \rho^{\omega +1}$, where $a>0$ is a constant. In this case (\ref{p_rho_eps}) leads to the EoS of a {\it perfect fluid}\footnote{The reverse equation $\varepsilon=\alpha p$ is used in \cite{Blaschke2026} to define a {\it constant gas} as a simple model of gas with zero chemical potential.}.
\begin{equation}\label{perfleos}
 p= \omega \varepsilon ,\;\omega\leq 1      
\end{equation}
Since the EoS parameter $\omega$ is directly related to the speed of sound: $c_s^2=dp/d\varepsilon=\omega$, its upper limit $\omega=1$ corresponds to a medium with the speed of sound equal to the speed of light  -- a {\it stiff fluid} \cite{Zeldovich:1961sbr}. The case of $\omega=-1$ describes the system that cannot exchange heat or perform mechanical work -- an attribute of a uniform vacuum energy in cosmology, often referred to as {\it dark energy}\cite{BlanTh}.

The perfect fluid equation of state (\ref{perfleos}) is widely employed in astrophysics and condensed matter physics to characterize fluids that lack both viscosity and heat conduction. Exemplary realizations of nearly perfect fluids that can be investigated under controlled laboratory conditions include the quark–gluon plasma \cite{Schafer:2009dj}, hadronic matter in the vicinity of the QCD crossover \cite{Rafelski:2015xej, Schafer:2009dj}, ultracold Fermi gases at unitarity \cite{Giorgini:2008zz}, and the electron fluid in graphene \cite{mueller2009}.

Let us give examples.
\begin{enumerate}[label=(\roman*), leftmargin=*]
\item A perfect fluid characterized by an equation-of-state parameter \(\omega = 0\) corresponds, in three spatial dimensions, to an ideal classical gas composed of non-interacting particles, for which the pressure is negligible compared to the energy density $\varepsilon=\frac{3}{2}T\rho$ for the non-relativistic and   $\varepsilon=3T\rho$ for the ultra-relativistic case.
\item The EoS parameter $\omega=1/3$ describes the photon gas \cite{Lanlif, BlanTh}.
\item  The case $\omega = 1$ with $\varepsilon \sim \rho^2$ corresponds to a system characterized by pairwise interactions among the particles. In this regime, each particle experiences an effective mean-field potential that is proportional to the density $\rho$, generated collectively by the remaining $N=\rho V$ particles.
 \end{enumerate}

Power-law $\varepsilon\sim \rho^{\omega+1}$ arises naturally when virial expansion (\ref{virial}) is dominated  by a single power in $\rho$. For $\omega=0,1$ analyticity of $\log \mathcal{Z}$ is guaranteed. When $\omega$ is a non-negative integer, the function 
\[
\rho^{\omega+1}=e^{(\omega+1)\log \rho}
\]
is not expandable in a power series around $\rho=0$ due to a branch point. This nonanalyticity would manifest as the GCE partition function $\mathcal{Z}(\xi)$  failing to have a power series representation in the fugacity $\xi$ at the boundary of its radius of convergence. In this case, the p.g.f. {} of particle-number distribution (\ref{gcpart}) does not exist. 
This behavior is encountered in the quantum case when treating the  $\mu \to 0$ limit of the Bose-Einstein distribution. The fact that $\log\mathcal{Z} \sim T^3 \sim \rho$
is itself analytic in $T$ but becomes nonanalytic when re-expressed in terms of $\rho$ -- the inversion $T\sim \rho^{1/3}$ is the source of all fractional powers.

Given their physical relevance, it is natural to investigate the possibility of representing perfect fluids as an ideal gas composed of $j$-particle clusters. More precisely, we ask under which conditions the equations of state (\ref{perfleos}) and (\ref{EoSj}) are mutually consistent. Using (\ref{phix}), we multiply (\ref{perfleos}) by $\beta$ and obtain
\begin{equation}\label{perfl}
 \beta p  = \beta \omega \varepsilon = \omega \phi(\beta m)\rho = \frac{\rho}{j}\;.
\end{equation}
Eliminating $\rho$ from the last two expressions on the right-hand side yields
\begin{equation}\label{perfl1}
\omega = \frac{1}{j \,\phi(\beta m)},\;\;\; j\in\mathbb{N}\;.
\end{equation}
The relations $\varepsilon = \phi(\beta m)\rho/\beta = \varepsilon(\rho,\beta)$ and $\varepsilon=\varepsilon(\rho)$ can be made compatible if $\phi(\beta m)=\phi_0 \cdot \beta m$. Under this condition, the EoS parameter $\omega$ becomes a linear function of the temperature:
\begin{equation}\label{omT}
 \omega(T) = \frac{T}{j \phi_0 m}\;.
\end{equation}
Let us now consider the model specified by (\ref{omT}). For $T \leq \phi_0 \equiv T_c$, the ideal gas of particles behaves as a perfect fluid and reaches the stiff-matter regime at $T = T_c$. In the temperature interval $T_c < T \leq 2T_c$, the system can be interpreted as a condensate of particle doublets ($j=2$), again approaching stiff matter at $T = 2T_c$. At higher temperatures, depending on the singularity structure of the partition function, the system may be realized as an ideal gas of triplets, quadruplets, and, in general, $j$-particle clusters.

\newpage
\section{Particle Distribution in the Vicinity of the Critical Point}\label{CPsec}
\subsection{Kinetics of a first-order phase transition}
We begin by quoting from K.G. Wilson's Nobel lecture, in which he describes liquid-vapor behavior near the endpoint of the coexistence curve \cite{Wilson:1983xri}. 
{\it  As  the pressure $P$ and temperature $T$  approach their critical values at the critical point (CP) 
of the phase diagram, the difference in density between water and steam approaches zero. 
At  $(P_{CP}, T_{CP})$ bubbles of steam and drops of water are intermixed at all size scales.    
Away from that point, surface tension makes small drops or bubbles unstable. Still, as water and steam become indistinguishable at the critical point, the surface tension between the two phases vanishes. 
}

The results presented in the preceding sections allow us to formulate the following kinetic model of the first-order phase transition. The assumption of infinite divisibility implies that the particle-number distribution is a compound Poisson one. We can therefore focus on the evolution of separate Poisson-distributed, and hence {\it uncorrelated}, single-particle states - the seeds of particle production. For example, instead of studying the kinetics of the process leading to the NBD (i.e., a compound Poisson Logarithmic distribution), we study only the evolution of its Logarithmic component. 

Starting from the initial Poisson-distributed seeds with the p.g.f. $P(w;\nu)$ representing density fluctuations arising independently in different parts of the volume $V$, we follow their evolution with increasing temperature $T$ in small discrete time steps $\tau$. Assuming $\tau$ to be sufficiently small that equilibrium is reestablished rapidly after each step, the creation and annihilation of new bubbles can be described in terms of a stationary birth-death process. 

In our model, we exploit three elementary transition probabilities $\alpha_1, \beta_1, \alpha_2$, describing splitting ($1\to2$), fusion ($2\to1$), and the creation of a third particle in a two-particle collision($2\to3$), respectively. The transition function of the distribution of the particles originating from a single seed is:
\begin{equation}\label{gn_bub}
g(n)=\frac{(n+1)p_{n+1}}{p_n}=\frac{\alpha_1+\alpha_2(n-1)}{\beta_1}=(n-\gamma)\xi \;;\;\; \gamma=1-\frac{\alpha_1}{\alpha_2}\;,\;
\xi=\frac{\alpha_2}{\beta_1}\;,
\end{equation}
with the corresponding p.g.f.: 
\begin{equation}\label{pgf_bub}
G_1(w;\xi,\gamma)=\frac{1-(1-\xi w)^{\gamma}}{1-(1-\xi)^{\gamma}}= \frac{G(\xi w;\gamma)}{G(\xi;\gamma)}\;;\;\; 
\gamma\leq 1\;,\; \xi\leq 1\;.
\end{equation}

Initially, the system exists in the region $k = \alpha_1/\alpha_2 - 1 = -\gamma\gg 0$ and $\xi= \alpha_2/\beta_1\ll1$, where the process $2\to3$ is suppressed due to low thermal energy and low particle number density, and the p.g.f. (\ref{pgf_bub}) with $\gamma<0$ describes the zero-truncated negative binomial distribution. As $\alpha_2\to \alpha_1$, this regime persists up to temperatures when  $\alpha_2 = \alpha_1$ -- the particle distribution morphs into the Logarithmic one. A further rise in $T$ leads to the prevalence of inelastic collisions over fragmentation,  $\alpha_2>\alpha_1$, and to the emergence of the Extended Sibuya distribution with $0<\gamma=1-\alpha_1/\alpha_2<1$. Finally, as $\alpha_2\to \beta_1$, and hence $\xi\to 1^{-}$, the particle number turns into the Sibuya distribution. 
 
 In this way, initially Poisson-distributed single-particle seeds - the uncorrelated precursors of a new phase within the old one - evolve into a stochastic ensemble of bubbles described by the compound Poisson-Sibuya, i.e., the Discrete Stable distribution. Assuming that the limit $\xi=1$ can be reached at the finite temperature ($\beta\neq 0$), the DSD then describes the state of matter with $\mu=0$ where the particle number $N$ is no longer an independent thermodynamic variable. Although the power-series character of its probability distribution is broken, and its moments diverge, its p.g.f. is well defined. 

 We now turn to the analysis of the grand-canonical ensemble (GCE) partition function. Using Eqs.~(\ref{PSDfromCPD}) and (\ref{pgf_bub}), the probability generating function (p.g.f.) of the total particle-number distribution can be expressed as
\begin{equation}\label{ZZ}
P(G_1(w);\xi, \gamma, \nu)=
 e^{\nu \left( G_1(w;\xi,\gamma)-1\right)}=
 \frac{e^{\lambda \left(G(\xi w,\gamma)-1\right)}}{e^{\lambda \left(G(\xi,\gamma)-1\right)}}=
 \frac{\mathcal{Z}(\xi w; \gamma, \lambda)}{\mathcal{Z}(\xi, \gamma, \lambda)}\;,\qquad \nu=\lambda G(\xi,\gamma)\;.    
\end{equation}
 
The behavior in the vicinity of the critical point (CP) can be obtained by computing the cumulants via Eq.~\ref{gccomb}:
\begin{equation}\label{cumDSD}
  \kappa_{\ell}
  =\xi^{\ell}\frac{\partial^{\ell}\log \mathcal{Z}(\lambda, \xi, \gamma)}{\partial\xi^{\ell}}
  = \frac{\gamma \lambda \xi^{\ell}}{(1-\xi )^{\ell-\gamma}}
 \prod_{i=1}^{\ell -1}(i-\gamma)
  \;.
\end{equation}
In particular, as $\xi \to 1^{-}$, the first three cumulants exhibit power-law divergences:
\begin{equation}
\kappa_1= \frac{\gamma \lambda \xi}{ (1-\xi )^{1-\gamma }}\;,\qquad 
\kappa_2=\frac{\gamma\lambda (1-\gamma)   \xi ^2}{  (1-\xi )^{2-\gamma}}\;,\qquad 
\kappa_3=\frac{\gamma\lambda (1-\gamma) (2-\gamma)   \xi ^3}{ (1-\xi )^{3-\gamma}}\;.
\end{equation}

Concurrently, the negative pressure exerted on the emerging phase by the surrounding medium is progressively reduced:
\[
\beta p V =\log \mathcal{Z}(\xi, \gamma,\lambda)
=\lambda \bigl(G(\xi,\gamma)-1\bigr) =
-\lambda(1-\xi)^{\gamma} \xrightarrow[\xi \to 1^{-}]{} 0^{-}\;,
\]
where the parameter $\gamma$ serves as the critical exponent characterizing the phase transition. This behavior becomes particularly evident in the equation of state in the vicinity of the critical point. By employing Eqs. (\ref{PSDcomb}) and (\ref{cumDSD}), we derive, up to fourth order in $\rho$, the following expression:
\begin{equation}
\beta p \approx 
-\frac{\lambda}{V} +\rho -\frac{(1\!-\!\gamma) \rho^2 V}{2 \gamma  \lambda }
+\frac{(1\!-\!\gamma) (1\!-\!2 \gamma) \rho^3 V^2}{3 \gamma ^2 \lambda ^2}
-\frac{(1\!-\!\gamma) (2\!-\!3 \gamma ) (1\!-\!3 \gamma) \rho^4 V^3}{8 \gamma ^3 \lambda ^3}\;.
\end{equation}
Independent of the value of $\gamma$, the surrounding medium exerts a constant negative pressure $-\lambda/V$ on the emergent phase. While the effective two-particle interaction is always attractive, the three- and four-bubble interactions become attractive only for $\gamma>1/2$ and $2/3>\gamma>1/3$, respectively. Consequently, up to fourth order in the density $\rho$, for $2/3>\gamma>1/2$, i.e., when $5/3>\alpha_2/\alpha_1>3/2$, all bubbles experience mutual attraction.

 It is instructive to highlight the nontrivial connection between the power-series representation and the infinite-divisibility property of the full compound Poisson particle-number probability generating function  $P(\nu;G(w; \xi,\gamma))$. When the function $G(w;\xi,\gamma)$ is given by (\ref{pgf_bub}), the requirement that the corresponding p.g.f. simultaneously (i) belongs to the PSD class and (ii) is infinitely divisible implies the following relation:
\begin{equation}\label{part_bub}
\nu=\lambda G(\xi,\gamma)= 
\lambda\left (1-\xi\right)^{\gamma}=
\lambda\left (1-\frac{\alpha_2}{\beta_1}\right)^{1-\frac{\alpha_2}{\alpha_1}}
\;,
\end{equation}
where \(\lambda>0\) is a multiplicative constant that characterizes the mean number of seeds at temperature \(T\) in the reference case \(\alpha_2=\alpha_1\), i.e., for \(\gamma=0\).
Consequently, for $\alpha_2<\alpha_1$, i.e., $\gamma<0$, the mean number $\nu(\xi,\gamma)$ increases with $T$, whereas for $\alpha_2>\alpha_1$, i.e., $\gamma>0$, it decreases with $T$. Approaching the critical point results in the complete extinction of the original seeds, in the sense that
\[
\lim_{\xi\to 1^{-}}\nu(\xi,\gamma) = 0.
\]
In the terminology of liquid–vapor critical phenomena, the density contrast between the two coexisting phases vanishes in this limit. Therefore, as $\mu = T \log (\alpha_2/\beta_1)\to 0^{-}$, the presence of seeds ceases to be necessary for the generation of particles.

Another dramatic effect follows from the self-decomposability of the Sibuya p.g.f.: $\mathcal{S}(w;\gamma_1 \cdot \gamma_2) = \mathcal{S}(\mathcal{S}(w;\gamma_2);\gamma_1)$. Once $T=T_{CP}$, the notion of a single bubble loses meaning; each bubble represents an infinite tower of smaller ones. At the CP, the cost of producing new bubbles is zero.

Finally, we consider a (rather idealized) scenario in which bubble production via the $2\to 3$ process increases with temperature more rapidly than their fusion via $2\to 1$, while remaining subdominant to fragmentation via $1\to 2$. This situation corresponds to a stationary stochastic process with parameters $\alpha_1>\alpha_2\geq \beta_1$. As the critical point is approached, the original zero-truncated negative binomial distribution, specified by $k=1 -\alpha_1/\alpha_2>0$, is deformed into a heavy-tailed distribution that exists for $0<k\leq 1$: 
\begin{equation}\label{ztrNBD_lim}
\lim_{\xi\to 1^{-}}G(w; \xi, k)=1-(1-w)^k\;,\;\; p_n=(-1)^{j+1} \binom{k}{j}\;;\; n\geq 1.
\end{equation}
Consequently, the parameters must satisfy the constraint $\alpha_2<\alpha_1\leq 2\alpha_2$.

\subsection{Discrete stable distribution as a fixed point}
We now show how the particle-number distribution at the critical point arises from a self-similar stochastic process. Consider the r.v. $M \in \Z$ with the p.g.f. $H(w,t)$, representing the number of particles (e.g., bubbles of a new phase within the old one) at time $t$. We follow its evolution with increasing temperature $T$ in discrete dimensionless time steps $\tau$. The increase in the number of particles in a single step $\tau$ is described by a r.v. $N$ with p.g.f. $Q(w)$. After two successive steps $t\to t^{\prime} = t+\tau,~ t^{\prime} \to t^{\prime\prime} =t+2\tau$, the p.g.f.{}s  satisfy\footnote{Here we adapt to our discrete case the analysis \cite{Calvo:2010jz} performed in a continuous variable.}:
\begin{equation}
H(w, t+ \tau) = Q(w) H(w,t)\;,\;\;\; H(w, t+ 2\tau) =  Q^2(w) H(w,t)\;.
\label{rwdis}
\end{equation}

We assume the system is self-similar: the evolution of the number of particles in volume $V$ over time steps $\tau$ is the same as in the sub-volume $V^{\prime}= aV, ~a<1$ over time steps $\tau/2$. The evolution in  $V^{\prime}$ is governed by Eq. \ref{rwdis} with the  replacement $H(w,t) \to H^{\prime}(w,t^{\prime}=2t)$ and $Q(w) \to Q_a(w)=Q(\mathcal{B}_a)$: 
\begin{equation}
H^{\prime}(w,t^{\prime}+ \tau)= Q^2_{a}(w) H^{\prime}(w,t^{\prime})= T_{a}Q(w)H^{\prime}(w,t^{\prime})\;, \;\;\; T_{a}Q(w):= Q^2_a(w)\;.
\label{RGdis}
\end{equation}
Here, $T_a$ is an operator transforming the r.v. $N$ into the rescaled sum $a\odot (N_1\!+\!N_2)$ of two \iid copies $N_{1,2}= N$ (see (\ref{thin})). Applying $T_a$ iteratively $j$ times gives:
\begin{equation}
(T_a)^j Q(w)H^{\prime}(w,t^{\prime}) = (Q_{a^j}(w))^j H^{\prime}(w, 2^j t)\;.
\label{rgdism}
\end{equation}
 In the limit $j \to \infty$:
\begin{equation}\label{inflim}
H(x, t+ \tau) = Q_{\infty}(w)H(w,t)\;,\;\;\; Q_{\infty}(w):= \lim_{j \to\infty}(T_a)^{j} Q(w)\;,
\end{equation}
where $Q_{\infty}(w)$  is the fixed point of the functional equation: 
\begin{equation}
\label{fxpt}
T_{a}Q_{\infty}(w)= Q_{\infty}^2(\mathcal{B}_a) = Q_{\infty}(w)\;.
\end{equation}
Solving (\ref{fxpt}) gives  $Q_{\infty}(w)=\exp[-\lambda(1-w)^{\gamma}]$ with $2a^\gamma=1$. This p.g.f. exists for $\lambda>0$ and $\gamma \in (0, 1]$, and belongs to the Discrete Stable distribution (DSD)(\ref{eq: dsdpgf}). As a check, substituting into (\ref{fxpt}):
\begin{equation}
T_a Q_{DSD}(w) = e^{-2\lambda(1- (1-a(1- w)))^{\gamma}} = 
e^{-2\lambda a^{\gamma}(1-w)^{\gamma}}= e^{-\lambda(1-w)^{\gamma}} = Q_{DSD}(w)\;.
\end{equation}

The stability of the DSD is confirmed by Theorem \ref{eq: infdthin}: among all infinitely divisible p.g.f.{}s $H(w;\lambda)=\exp\{ \lambda (G(w)-1)\}$ with $\lambda$ depending on $a$, only the DSD remains infinitely divisible under the thinning for all $a\in (0,1]$.  By Theorem \ref{PSD_thin}, among all GCE partition functions with $\xi(a)$ depending on $a$, only 
$\mathcal{Z}(\beta,V,\xi)=(1-\xi)^{\gamma}$ is form-invariant under the thinning. The corresponding p.g.f. $\mathcal{Z}(\xi w)/\mathcal{Z}(\xi)$ is the NBD with $\theta=\xi$ and $k=-\gamma>0$. Thus, had we prioritized the existence of the GCE partition function throughout the heating process at the expense of infinite divisibility,  we would obtain the NBD.

In writing  (\ref{inflim}), we have exploited the fact that the limit is the same for all $Q(w)$ in the {\it domain of attraction} of the fixed point. To illustrate this property, consider the GDLD p.g.f. $Q(w;\lambda,\beta,\gamma)=(1+\lambda(1-w)^{\gamma}/\beta)^{-\beta}\;,\;\;\beta>0$ (Eq.\ref{GDLDpgf1}).
Since GDLD is a compound NBD-Sibuya distribution and the Sibuya distribution itself is self-similar under the {\it renormalization semi-group} transform (\ref{RGdis}), the same holds for GDLD:
\[
Q_a(w;\lambda,\beta,\gamma ) = Q(w;a^{\gamma}\lambda,\beta, \gamma) = \left[1+a^{\gamma}\lambda(1-w)^{\gamma}/\beta\right]^{-\beta}\;.
\]
Assuming the parameter $a(j)=j^{-1/(j \gamma)}$ decreases over $j$ time steps, the limit  (\ref{inflim}) gives:
\[ \lim_{j \to\infty} (Q(w;a^j\lambda,\gamma))^j=
\lim_{j \to\infty}\left[1+(\lambda/(\beta j^{\gamma})(1-w)^{\gamma}\right]^{-j \beta} = e^{-\lambda(1-w)^{\gamma}}\;.
\]

While the above procedure is valid for $a<1$, it is natural to ask whether it can be extended to $a>1$. In other words, we ask if the DSD p.g.f. is scalable. The answer is given by Theorem \ref{thK}: the function $Q(1-a(1-w))$ is a p.g.f. $\forall a>0$ iff there exists a Laplace transform $\phi(s)$ of a r.v. $X\in \R$  such that $Q(w)=\phi(1-w)$. For $Q_{DSD}(w;\lambda,\gamma)$ this is indeed satisfied:
\begin{equation}
\label{eq: Pollard1}
e^{-\lambda(1-w)^{\gamma}}= \int_0^\infty e^{-(1-w)x}p(x;\lambda,\gamma) dx\;,\;\;\; p(x;\lambda,\gamma)=\lambda^{-1/\gamma}f_{\gamma}(\lambda^{-1/\gamma}x)\, ,
\end{equation}
where 
$$f_{\gamma}(x)=\frac{1}{\pi} \sum^{\infty}_{j=0}\frac{(-1)^{j+1}}{j!x^{1+\gamma j}}\Gamma(1\!+\!\gamma j)\sin(\pi\gamma j)$$
is the p.d.f. of the one-sided continuous stable distribution (see e.g., \cite{Penson2010}). The case $\gamma=1/2$ corresponds to the p.d.f. of the L\'{e}vy distribution.
\begin{equation}\label{levy}
    f_{1/2}(x)= \frac{\lambda e^{-\frac{\lambda^2}{4 x}}}{2\sqrt{\pi } x^{3/2}};\;\;\;
\int_0^\infty e^{-(1-w)x}f_{1/2}(x) dx = e^{-\lambda \sqrt{1-w}}=Q_{DSD}(w;1/2)\;.
\end{equation}

Thus, the evolution of the original process in volume $V$ over time steps $\tau$ is the same as in volume $V'=aV>V,\;a=2^{1/\gamma}$  over steps of size $2\tau$. Consequently, the domain of attraction of the fixed point $Q_{DSD}(w;\lambda,\gamma)$ consists of all scalable compound p.g.f. of the form $H(\mathcal{S}(w;\gamma))$. 

The particle-number distribution at the critical point is obtained by expanding $Q_{DSD}(w;\lambda,\gamma)$ in powers of $w$:
\begin{equation}\label{eq: DSDpmf}
p_n =(-1)^n \sum_{j=0}^{\infty} \binom{\gamma j}{n}\frac{(-\lambda)^j}{j!} 
\xrightarrow[n\to\infty]{}\frac{1}{n}\sum_{j=0}^{\infty} \frac{(-\lambda)^j n^{-\gamma j}}{\Gamma(-\gamma j)j!}\;, 0<\gamma<1\;.
\end{equation} 
This distribution is completely monotone for $\lambda<1/\gamma$; ~for $1/\gamma\leq\lambda<1/\gamma+1$ it has a maximum at $p_1$; and for $1/\gamma+1\leq \lambda<1/(2\gamma) \sqrt{5 \gamma ^2+4}+3/2$ a maximum at $p_2$. The situation for $n>2$ is more complex.  
The parameters $\lambda\!=\!-\log p_0$ and $\gamma\!=\!-p_1/(p_0 \log p_0)$ can be extracted from $p_0$ and $p_1$ alone.

Two finite-sum expansions for $p_n$ are also available. One was found in  \cite{Christoph1998a}:
\begin{equation}
    p_n=(-1)^n e^{-\lambda}\sum_{m=0}^{n}r_m\;,\;\;\; r_m=\sum_{j=0}^{m} \binom{m}{j}\binom{\gamma j}{n} (-1)^j \frac{\lambda^m}{m!}f
\end{equation}
The other is based on the combinants $\lambda_j$: 
\begin{equation}\label{DSDcum}
 \log Q_{DSD}(w;\lambda,\gamma)
 =\sum_{j=1}^{\infty}\lambda_j (w^j-1)
 \;,\;\; \lambda_j=  (-1)^{j} (-\lambda)  \binom{\gamma }{j}\;. 
\end{equation}
Applying Fa\`a di Bruno's formula:
\begin{equation}\label{FaaDSD}
p_n = \frac{Q_{DSD}^{(n)}(0)}{n!} = e^{-\lambda} \sum  \prod_{j=1}^{n}\frac{\lambda_j^{m_j}}{m_j!}=
(-1)^n e^{-\lambda}\sum \prod_{j=1}^{n}\frac{(-\lambda)^{m_j}}{m_j!}\binom{\gamma}{j}\;.
\end{equation}
where the sum runs over all $n$-tuples $\{m_1,m_2,\ldots, m_n\}$ of non-negative integers $m_j$ satisfying $\sum_{j=1}^{n} j m_j =n$.

From the scalability of the DSD and Corollary \ref{cor:g}, it follows that the p.m.f. (\ref{eq: DSDpmf}) is Markovian -  it satisfies the one-step relation $(n+1)p_{n+1}=p_ng(n)$. However, differentiating $Q_{DSD}^{\prime}(w;\lambda,\gamma)=\lambda Q_{DSD}(w;\lambda,\gamma)\mathcal{S}^{\prime}(w;\gamma)$ and expanding in series leads to a more complex recursion \cite{Christoph1998a}:
\begin{equation}\label{DSDrecur}
    (n+1)p_{n+1}=\lambda \sum_{m=0}^{n}p_{n-m}(m+1)(-1)^m \binom{\gamma}{m+1}
\end{equation}
Thus, $g(n)$ is not a simple ratio of finite-order polynomials in $n$ but depends explicitly on $p_n$. 
\[
g(n)=\gamma\lambda+\frac{\sum_{m=1}^{n}p_{n-m}(m+1)(-1)^m \binom{\gamma}{m+1}}{p_n}\;.
\]
This holds even in the simplest case $\gamma=1/2$, where the transition function is a ratio of two modified Bessel functions of the second kind: 
\[ g(n)= \frac{\lambda}{\sqrt{2}} \frac{K_{n+\frac{1}{2}}\left(\sqrt{2} \lambda \right)}{ K_{n-\frac{1}{2}}\left(\sqrt{2} \lambda \right)}.
\]     

We conclude by providing an additional argument that underscores the distinguished role of the DSD, based on Corollary \ref{DSD_unique}. 

Let $N, N_1, N_2 \in \mathbb{Z}$ be \iid{} infinitely divisible random variables with a common probability generating function (p.g.f.) given by $H(w;\lambda) = P(G(w);\lambda)$. Then the distributional identity
\[
c \odot N \;=\; a \odot N_1 \;+\; b \odot N_2
\]
holds if and only if the secondary p.g.f. $G(w)$ corresponds to a scaled Sibuya distribution.

Moreover, every DSD can be represented either as a compound Poisson–Sibuya distribution with parameters $\{\gamma,\lambda\}$ or as a compound scaled Sibuya distribution with parameters $\{\gamma,\nu\lambda\}$. In particular, we have
\begin{equation}
P(\mathcal{S}(w;\gamma), \nu\lambda)\;;\;\;\; P(w;\lambda) = e^{(w-1)\lambda}\;,
\end{equation}
where $P(w;\lambda)$ denotes the p.g.f. of a Poisson distribution with intensity parameter $\lambda$.

\newpage
\section{Tsallis Thermodynamics}\label{TT}
\subsection{Statistical mechanics of systems with long-range interactions }
In this section, which is partially independent of the preceding discussion, we investigate the statistical mechanics of systems characterized by a non-extensive generalization of the Boltzmann–Gibbs–Shannon (BGS) entropy \cite{Tsallis:1987eu}. 
We first develop the formal framework of deformed probability distributions and subsequently examine their implementation within the context of statistical mechanics.
\subsection{Deformed discrete distributions} \label{Escd}
\subsubsection{Escort probability distributions}

Let $q\in \mathbb{R}$ be a parameter and $p_n$ the p.m.f. of the r.v. $N$. The {\it escort} p.m.f. $P_n(q)$ and the {\it q-moment} $\E_q(N^k)$  are defined as:
\begin{equation}\label{Escmom}
  P_n(q):=\frac{p_n^q}{\sum_n p_n^q},\;\; \E_q(N^k):=\sum_n n^k P_n(q)\equiv \langle n^k \rangle _q\;.
\end{equation}
The parameter $q$ acts as a microscope exploring different regions of $p_n$: for $q > 1$, the more singular (large) values are amplified; for $q < 1$, the less singular (small) regions are accentuated. Originally, escort probability distributions with $q>1$ were introduced in \cite{Beck1993} to assign suitable weight to the tails of heavy-tailed distributions. 
\begin{Example}
    Consider the r.v. $N$ with p.m.f. (\ref{jj1}) and its escort distribution:
    \[
    p_n=((n+2)(n+1))^{-1}\;,\;\;P_n(q)=((n+2)(n+1))^{-q}\;.
    \]
    Although the ordinary moments $\E N^k $ exist only for $k<1$, the $q$-moments $\E_q N^k$  are finite for $q>(k+1)/2$. 
    
\end{Example}
Another interpretation of the escort probability (\ref{Escmom}) comes from the thermodynamic analogy:  \begin{equation}\label{analogy}
 P_n(q)=  \frac{p_n^q}{\sum_n p_n^q}= \frac{\exp(-\beta E_n)}{\mathcal{Z}(\beta)}\;, \;\;\; \mathcal{Z}(\beta)=\sum_n \exp(-\beta E_n)
\end{equation}
Here $P_n(q)$ is the probability of finding the system at temperature $T=\beta^{-1}$ with energy $E_n$.

\begin{Example}
Energy levels of a quantum harmonic oscillator, $E_n \!=\! (n\!+\!\tfrac{1}{2})\hbar\omega$, lead to the Bose-Einstein distribution, here ``rediscovered'' via the thermodynamic analogy: 
\begin{equation}
P_n(q)=\frac{\exp(-\beta E_n)}{\mathcal{Q}(\beta)} =
(1\!-\!\theta)\theta^n\;,\;\;\theta = \exp(-\hbar \omega) \;.
\end{equation}
\end{Example}
\subsubsection{$q$-deformation}
To gain further intuition about the parameter $q$,  we follow \cite{Tsallis:2009zex}  and consider the first-order ODE
\[
\frac{dy}{dx}=x^q,\;\; q\in\mathbb{R},\;\; y(0)=1\;.
\]
Its solution $y(x)$ and inverse $x(y)$ define the {\it q-exponential} and {\it q-logarithm}:
\begin{equation}\label{e_qlog_q}
    y(x)=[1+(1-q)x]_{+}^{1/(1-q)}:=e_q^{x},\; x(y)=\frac{y^{1-q}-1}{1-q}:=\log_q(y)\;;\;[x]_{+}=\max\{0,x\}
\end{equation}
These are the {\it $q$-deformation} of the ordinary exponential $\exp_1(x)\!=\!\exp(x)$ and logarithm $\log_1(y)\!=\!\log(y)$.
From the first equation in (\ref{e_qlog_q}):
\begin{equation}\label{e_qprod}
e^x_q e^y_q=e^{x+y+(1-q)x y}_q:=e^{x \oplus_q y}_q,\;\;e^x_q e^{-x}_{2-q}= (e^x_q)^qe^{-qx}_{1/q}=1; \;\; \forall q\;.    
\end{equation}
The second one yields the non-additivity of the $q$-deformed logarithm:
\begin{equation}\label{log_qprod}
\log_q(x_A\cdot x_B)=\log_q(x_A)+\log_q(x_B)+(1-q)\log_q(x_A)\cdot \log(x_B)\;.    
\end{equation}
Finally:
\begin{equation}\label{de_qdlog_q}
\frac{d e_q^x}{dx}=(e_q^x)^q\;,\;\;\; \frac{d \log_q(y)}{dy}=y^{2-q}\;,\;\;\; \int e_q^x dx=\frac{e_q^x}{q-2}\;,\;\;\;\int\log_q(y)dy=\frac{\frac{y^{2-q}}{2-q}-y}{1-q} \;.
\end{equation}

\begin{Example}
Using $\exp_q(x)$ and $\log_q(y)$ to generalize some p.g.f.{}s:
\begin{enumerate} [label=(\roman*), leftmargin=*]
\item The $q$-deformed Poisson p.g.f. is:
    \begin{equation}\label{qPoiss}
 e_q^{\lambda(w-1)}=
\left[1+(1-q)\lambda(w-1)\right]_{+}^{\frac{1}{1-q}}\;.   
\end{equation}
For $q>1$, this is the NBD p.g.f. with $\langle n \rangle =\lambda$ and $k=1/(q-1)$ \cite{Tsallis:2009zex}; for $q<1$ with $m=1/(1-q)\in \N$ and $(1-q)\lambda=a<1$, it gives the Binomial distribution with success probability $a$ and the mean $ma$.

\item  The $q$-deformed Logarithmic p.g.f. is:
\begin{equation}
\frac{\log_q(1-\theta w)}{\log_q(1-\theta)}=
\frac{1-((1-\theta ) w)^{q-1}}{1-(1-\theta )^{q-1}}\;.
\end{equation}
For $1<q<2$, this reduces to the Extended Sibuya p.g.f. (\ref{exsbd_pgf}) with $\gamma=q-1$; for $q<1$, to the zero-truncated NBD (\ref{eq:nbdtrunc}).

\item The $q$-deformation of the p.g.f. $H(w) = \text{sech}(\sqrt{1-w})$ is:
\begin{equation}\label{qsechpgf}
H_q(w) = \frac{2}{e_q^{\sqrt{1-w}}+e_q^{-\sqrt{1-w}}}\;;\;\; \E N =\frac{q}{2}\;,\;\;F_2 =
\frac{\langle N (N-1) \rangle}{\langle N \rangle^2}=\frac{1}{24} (2-q) (6 q-1)\;.
\end{equation}
This p.g.f. exists for $0\leq q <2$, with 
$p_0(q)=2(q^{\frac{1}{1-q}}+(2-q)^{\frac{1}{1-q}})^{-1}$
and $p_n(1/2)=4\cdot 5^{-j-1}$.
Note that for $q<1/6$, the second normalized factorial moment $F_2$ becomes negative.
\end{enumerate}
\end{Example}

\subsection{Havrda-Charv\'at-Tsallis non-extensive entropy}
The $q$-deformed logarithm (\ref{e_qlog_q}) allows us to generalize the BGS entropy $S^{BGS}$  in the following way \cite{Tsallis:1987eu, HCh:1967}:
\begin{equation}\label{Tsalent}
S_q=-\sum_n p_n\log_q p_n=\frac{1-\sum_n p_n^q}{q-1},\;\; S_1=S^{BGS}  \;.
\end{equation}

To maximize $S_q$ subject to constraints specifying the values of the $q-$moments $\langle f_i(n) \rangle_q =F_i$, we apply the Lagrange multiplier method:
\[
\hat{S}=S_q - \lambda \sum_n (p_n-1) -  \sum_{i=1}^m \lambda_i \sum_n [f_i(n) P_n (q)  - F_i]\;.
\]
Setting $\delta \hat{S}/\delta p_n=0$ and solving for the unknown function $p_n$ gives:
\begin{equation} \label{meppmfq}
p_n=\frac{e_q^{-\sum_{i=1}^m \lambda_i f_i(n)}}{Z(\lambda_1,\ldots,\lambda_m)},\;\;\;\;
Z(\lambda_1,\ldots,\lambda_m)=  \sum_m e_q^{ -\sum_{i=1}^m \lambda_i f_i(n)}\;.
\end{equation}

For $m=1$ and $\langle f_1(n) \rangle_q =\langle n \rangle_q$, Eqs.(\ref{meppmfq}) and (\ref{e_qlog_q}) give:
\begin{equation}\label{Hurwpmf}
p_n
=\frac{[1+(q-1)\lambda_1 n]_{+}^{1/(1-q)} }{Z(\lambda_1)}
=\frac{(1 +  n/a)^{-s}}{Z(s/a)}
;\;\; s=\frac{1}{q-1},\; a=\frac{s}{\lambda_1}\;.
\end{equation}
For $a>0$  and $s\geq 2$ (i.e. for $q\leq3/2$) with 
$Z(s/a)=a^{-s}\zeta (s,a)$ - where $\zeta (s,a)=\sum _{k=0}^{\infty } (a+k)^{-s}$ is the Hurwitz zeta function - this p.m.f. represents the zero-inflated {\it Hurwitz distribution} \cite{JKK2005} \footnote{The original Hurwitz distribution is defined only for $n\in \N$.}.
This distribution exists on $\Z$ for $1<q\leq 3/2$ with finite moments $\langle n^j \rangle$ for $s>j+1$, i.e., for $1/(q-1)>j+1$. It is heavy-tailed for $q \geq 3/2$ with support on $\Z \backslash~ \infty$.
For $a<0$, the p.m.f. (\ref{Hurwpmf}) exists for arbitrary $s$ and $n=0,1,\ldots, [-a]$. 

Although the Hurwitz p.m.f. (\ref{Hurwpmf}) is Markovian for all $s$, its interpretation as a solution of the stationary B-D processes (Section \ref{kinmodels}) is possible only for  $s\in \N$:
\begin{equation} \label{Hurwgn}
    g(n)=(n+1)\frac{p_{n+1}}{p_n}
    =(n+1)\left(\frac{n+a}{n+a+1}\right)^s\;.
\end{equation}
In particular, for $s=1$ we obtain $\alpha_0/\beta_0=a/(a+1)$, $\alpha_1/\beta_0=(a+2)/(a+1)$ and $\alpha_2/\beta_0=\beta_1/\beta_0=1/(a+1)$. 
For $a=1$, the transition function $g(n)=(n+1-\gamma)(n+1)/(n+2)$ with $\gamma=0$ coincides with that of the Shifted Extended Sibuya  (see Table \ref{tab:mark}); for $\alpha_2 = \beta_1 = 3\alpha_1 = \alpha_0=1/2 \beta_0$, with the Shifted  Sibuya distribution.

The PSD associated with the Hurwitz distribution (\ref{Hurwpmf}) is:
\begin{equation}\label{HurPSD}
 p_n = \frac{(a +  n)^{-s}\theta^n}{\Phi (\theta,s,a)}\;,\;\;g(n)
    =(n+1)\left(\frac{n+a}{n+a+1}\right)^s \theta\;,\;\; s\in \N\;.
\end{equation}
For $s=1$ ($q=2$)  (\ref{HurPSD}) gives a  generalization of the Logarithmic distribution with the p.m.f $(n+a)^{-1}\theta^n/\Phi (\theta, s, a)$; for $s=a=1$, the $\gamma\to 0$ limit  of the Shifted Extended Sibuya distribution  (Table \ref{tab:mark}). In stationary B-D processes, this limit corresponds to $\alpha_1 \to 3\alpha_0$; for $\alpha_2 \to \beta_1$, the process is divergent.

\subsection{Energy distribution and partition function in non-extensive thermodynamics}
We now apply the above formalism to the statistical mechanics of systems with long-range interactions and determine the energy distribution $p_q(E)$ for a single microstate with a fixed average energy. The maximization of the Tsallis entropy
\[
S_q = -\int_{0}^{\infty} p(E)\,\log_q p(E)\, dE
\]
under the constraint of fixed \(q\)-expectation value of the energy,
\[
\langle E \rangle_q = \int E\, P(E,q)\, dE\;,\qquad P(E,q) = \frac{p^q(E)}{\int p^q(E)\, dE}\,,
\]
with $\beta$ as the Lagrange multiplier, yields the desired distribution:
\begin{equation}\label{qBoltz}
p_q(E)=\beta e_q^{-\beta E}
=\beta(q-2)\left[1+(1-q)\beta E\right]_{+}^{\frac{1}{1-q}}\;.
\end{equation}
This p.d.f. with $q>1$ has found applications in various fields, including particle physics \cite{Cleymans:2011in}. Note that for $q<3/2$, the first and higher moments of $E$ are non-existent. It is worth noting that the p.d.f. (\ref{qBoltz}) can be equivalently obtained from the Boltzmann factor $e_1^{-\beta E}$  by assuming that $\beta$ is a random variable which fluctuates according to a gamma distribution with mean $\langle \beta \rangle = \beta_0$ and variance $\langle \beta^2 \rangle - \langle \beta \rangle^2 = (q-1)\beta_0$ \cite{Wilk:1999dr}.

In terms of the energy density $\sigma(E)$, the GCE partition function (\ref{eq:sigma}) of a system with non-additive entropy reads: 
\begin{equation}\label{eq:sigmaq}
\mathcal{Z}_q(\beta)= \int_0^{\infty}\sigma(E) e_q^{-\beta E} dE = \int_0^{\infty}\sigma(E)  (2-q)\left[1+(1-q)\beta E\right]_{+}^{\frac{1}{1-q}} dE\;.
\end{equation}
For energy density
$\sigma(E, V, \xi)=f(E) e^{\beta_0 E};\; \beta_0>0$
with $f(E)\sim E^{\alpha}$ polynomialy bounded, the integral (\ref{eq:sigmaq}) has a singularity only for $q=1$ (the additive case) or $\alpha \leq -1$.

Consider the GCE partition function of a system composed of two independent parts $\mathcal{Z}(\beta) = \mathcal{Z}_1(\beta)\mathcal{Z}_2(\beta)$. 
In the additive case, the total free energy $F=F_1+F_2$ and the pressure $P=P_1+P_2$ are both additive, as in Eq. \ref{pressure}.
For a system with non-additive entropy, neither quantity  is additive (see Eq.\ref{log_qprod}:
\begin{equation}\label{eq:sigmaq1}
-\beta F(q) = \log_q\mathcal{Z}(\beta)= 
-\beta \left[(F_1+F_2) -(q-1)\beta (F_1\cdot F_2) \right]\;.
\end{equation}
In particular, for $F_1=F_2$ we have $F=2F_1-(q-1)\beta F_1^2$. Hence, for $(q-1)\beta=2$, i.e., at the temperature $\beta=2/(q-1)$ the total free energy becomes zero: $F=0$.

\subsection{Scalability in non-extensive thermodynamics}
Rewriting the p.d.f. (\ref{qBoltz}) in new variables $s$ and $\nu$:
\begin{equation}\label{Tcontex}
 p(E;s,\nu)= \frac{s-1}{\nu}\left[1+\frac{E}{\nu}\right]^{-s}\;;\;\;\; s=\frac{1}{q-1}>1\;,\;\; \nu=\frac{s}{\beta}>0
\end{equation}
we calculate the moments $\langle E^r \rangle $:
\begin{equation}\label{momTcontex}
\langle E^r \rangle =\frac{(s-1) \nu ^r \Gamma (r+1) \Gamma (-r+s-1)}{\Gamma (s)}\;,\;\;\; \langle E \rangle =\frac{\nu}{s-2}\;,\;\;
\langle E^2 \rangle =\frac{2 \nu ^2}{(s-2)(s-3)}\;.
\end{equation}
They are finite for $s>r+1$. 

Since $\nu$ represents the energy scale, performing the Laplace integral (\ref{eq: lapdens}) gives a new scalable p.g.f. of non-negative r.v. $N$:
\[
T(w;s,\nu)=\int_0^{\infty}e^{-(1-w)E}p(E;s,\nu)dE
\]

The factorial moments $\langle (N)_r\rangle$ of the r.v. $N$ with the p.g.f. $T(w;s,\nu)$ are equal to ordinary moments of the continuous r.v. $E$ given by (\ref{momTcontex}) (see Proposition \ref{cor:g}). For $1<s \leq 2$, the distribution is heavy-tailed. The expansion $T(w;s,\nu)=\sum_n t_n w^n$ gives the PSD p.m.f. $t_n$ satisfying the two-step recurrence:
\begin{eqnarray}\label{Tsalpmf}
(n+2)t_{n+2}=(s+\nu-n-2)t_{n+1}-\nu t_n\;,\\
t_1=(s-1)e^{\nu } \nu  (E_{s-1}(\nu )-E_s(\nu ))\;,\;\; t_0=(s-1)e^{\nu} E_{s}(\nu) \nonumber \;.
\end{eqnarray}
However, owing to its scalability, $t_n$ also satisfies the one-step relation (\ref{eq: gnkl}) with transition function:
\[
g(n)=\frac{(n+1) U(s,s-n-1,\nu )}{U(s,s-n,\nu )}\;;\;\;U(a,b,z)=\frac{1}{\Gamma (a)}\int _0^{\infty }t^{a-1} (t+1)^{-a+b-1} e^{-zt}d t \;,
\]
where $U(a,b,z)$ is the Tricomi confluent hypergeometric function. For $\nu \to \infty$,  $g(n)\approx (n+1)(1-s/\nu)=(n+1)(1-\beta)$.

 The p.g.f. $T(w;s,\nu)$ can be expressed as the ratio of two p.g.f.{}s, so that $N=N_1-N_2$ is the difference of two non-negative r.v.:
\begin{equation}\label{Tw}
T(w;s,\nu)=\frac{G(w;s,\nu)}{P(w;\nu)}
=\frac{(s-1)E_{s}(\nu(1-w))}{e^{\nu (w-1)} }\;;\;\;\; E_s(\nu)\!=\!\int_1^{\infty} e^{-\nu t}t^{-s}dt\;.
\end{equation}
Here $N_2$ is Poisson-distributed with the parameter $\nu$, and $N_1$ has p.g.f. $G(w;s,\nu)$ with p.m.f. and  function:
\begin{equation}\label{numTsal}
p_n=\frac{(s-1) \nu^n  E_{s-n}(\nu )}{n!}\;,\;\;
g(n)=\nu \frac{E_{s-n-1}(\nu )}{E_{s-n}(\nu )}
\approx \nu +1+\frac{n-s}{\nu }+\mathcal{O}\left(\nu^{-2} \right)
\end{equation}
Unlike $N_2$, the r.v. $N_1$ is not is infinitely divisible\footnote{$p_n$ is strongly peaked at $p_0=(s-1)E_s(\nu)$  where it violates the necessary condition $p_n<1/e$ of infinite divisibility, see Proposition \ref{infd_nec}.}, but for $\nu\to \infty$  its distribution converges  to the NBD with parameters $\theta=1/\nu$ and $k=\nu(\nu+1)-s$ (see Table \ref{tab:mark}), which is infinitely divisible. Both $N$  and $N_2$ are heavy-tailed for $1<s \leq 2$. 
 
To prove the scalability of the r.v. $N_1$, we express  $G(w;s,\nu)$ as a Laplace transform. Using the integral representation (\ref{Tw}) for $E_s(\nu)$, the p.d.f. $f(E)$ corresponds to a Pareto distribution with scale  parameter $\nu$ and shape parameter $s$:
\begin{equation}\label{Pareto2pgf}
(s-1)E_{s}(\nu(1-w))=\int_0^{\infty}f(E)e^{-(1-w)\nu E}dE\;,\;
\;f(E)=\frac{s-1}{\nu}\left(\frac{E}{\nu}\right)^{-s}\cdot \theta(E-\nu)\;.
\end{equation}
Expressing  $P(w;\nu)=e^{\nu (w-1)}=\int_0^{\infty}e^{ (w-1)E}\delta(E-\nu)dE$ ((Eq.\ref{Poislapl})) we observe
that convolving the Tsallis p.d.f. (\ref{Tcontex}) with
$\delta(E-\nu)$ yields the Pareto distribution:
\[
\frac{s-1}{\nu}\left[\frac{E}{\nu}\right]^{-s}=\int_0^{\infty}\frac{s-1}{\nu}\left[1+\frac{x}{\nu}\right]^{-s}\delta(x-(E-\nu))dx\;.
\]

\newpage
\section{Conclusions}
In this article, we investigate the connections -- relatively underexplored in the existing literature -- between discrete random variables and statistical mechanics. Our principal motivation is the formulation of an equation of state for small systems within a grand-canonical framework.

In the first part, we present both classical and novel results on power series and infinitely divisible distributions of discrete, non-negative random variables, and analyze their self-decomposability and scalability properties. We introduce a significant new class of Markovian distributions that naturally emerge in the study of stationary birth–death processes and scalable non-negative random variables.

In the second part, these probabilistic results are applied to problems in statistical mechanics. We demonstrate that, under the assumption of infinite divisibility of the particle-number distribution in the grand canonical ensemble, the cumulants can be related to the moments of an associated secondary distribution. Furthermore, we examine the relationship between the virial expansion and combinants.

A central focus of the work is the characterization of the particle-number distribution near the critical point of a first-order liquid–vapor phase transition. We formulate a simple kinetic model of this transition, in which a discrete stable distribution describes the number of particles (bubbles) at the critical point. We show that this distribution constitutes a fixed point of a renormalization semigroup transformation.

Finally, we explore particle-number distributions in non-extensive systems. We demonstrate that, in contrast to infinite divisibility, the scalability property of the particle-number distribution can persist even in the non-extensive regime. In addition, we discuss deformed discrete distributions and illustrate how, for example, the negative binomial distribution can be obtained as a deformation of the Poisson distribution.

\section*{Acknowledgements}
  One of the authors, M.\v{S}, would like to thank professors Petr Jizba and Ji\v{r}\'i Kolafa for reading the manuscript and for interesting discussions.
\newpage

\appendix
\section{Log-convex infinitely divisible distributions}\label{ExLX}
In this section, we provide an explicit illustration of the application of the log-convexity/log-concavity condition given in equation~(\ref{eq:cvx}).\begin{enumerate}[label=(\roman*), leftmargin=*]
\item 
\label{ex:zextSbd}
{\it The zero-inflated Extended Sibuya distribution} has the p.g.f. 
 \begin{equation}\label{zexsbd_pgf}
 Q(w;\alpha,\theta)=\alpha+(1-\alpha)G(w;\theta)\;,\; \alpha\in (0,1)\;,
 \end{equation}
where $G(w;\theta)$ is given by (\ref{exsbd_pgf}).
Both the Extended Sibuya and its zero-inflated version have for $n>0$ the same transition function 
$g(n)=(n+1)p_{n+1}/p_n=\theta(n-\gamma)$. The latter is non-decreasing; so for $n>0$, the p.m.f. $p_n$ is ULX, see (\ref{eq: ulc}). Applying the condition (\ref{eq:cvx}) with $n=1$ gives:
\[ p_1^2=\left(\frac{(1-\alpha)\theta\gamma}{1-(1-\theta)^{\gamma}}\right)^2
 \leq p_0 p_2 =\alpha  \frac{(1-\alpha)\theta\gamma}{1-(1-\theta)^{\gamma}}\theta(1-\gamma) \]
The p.g.f. (\ref{zexsbd_pgf}) is aninfinitely divisible for $\alpha$ satisfying the inequalities:
\begin{equation}
 1>\alpha\geq \frac{\gamma}{1-(1-\gamma)(1-\theta)^{\gamma}}\;.   
 \end{equation} 
\item \label{ex:zgSbd}
The two-parameter {\it Generalized Sibuya distribution} on $\N$ \cite{Kozu2017} with $p_{n+1}/p_n=(n+\nu-\gamma)/(n+\nu+1)$ and $\nu>\gamma>0$  admits an extension  to {\it zero extended Generalized Sibuya distribution} on $\Z $ via the identity
\[1=\sum_{n=0}^{\infty}p_n= \sum_{n=0}^{\infty}p_0\prod_{i=0}^n \frac{g(i)}{i+1}=p_0\frac{\nu-\gamma}{\gamma}.\] 
The condition $0<p_0=\gamma/(\nu-\gamma)<1$ restricts its existence to $\nu>2\gamma$. Infinite divisibility follows from log-convexity. Looking at its p.g.f.
\[H(w) =\gamma  \Gamma (\nu +1) \, _2\tilde{F}_1(1,-\gamma +\nu +1;\nu +2;w)\;,\]
where $_2\tilde{F}_1(a,b;c;z)$ is the regularized hypergeometric function, it is far from obvious how to write it as a compound Poisson p.g.f. 
\item \label{divPois}
Consider an infinitely divisible p.g.f. $H(w;\lambda_1)$ with the secondary p.g.f. $G(w)=\sum g_nw^n$ and Poisson p.g.f. $P(w;\lambda_2)$. Let us ask: When is the ratio of those p.g.f.{}s $R(w)=H(w,\lambda_1)/P(w;\lambda_2)$ infinitely divisible p.g.f.? Assuming $R(w)=P(Q(w);\lambda)$ with $Q(w)=\sum_{j>0}q_jw^j$, $R(w)$ is compound Poisson and hence infinitely divisible if $Q(w)$ is the secondary p.g.f. Writing
\begin{eqnarray}
\log R(w) =\lambda_1\left[G(w)-1-\frac{\lambda_2}{\lambda_1}(w-1)\right]=
(\lambda_1\!-\!\lambda_2)+\lambda_1\left[G(w)-\frac{\lambda_2}{\lambda_1} w\right]\\\nonumber
=(\lambda_1\!-\!\lambda_2)+\lambda_1\left[g_0 +\left (g_1-\frac{\lambda_2}{\lambda_1}\right) w +\sum_{j>1}g^jw^j\right]=
\lambda \left[\sum_{j>0}q_jw^j -1 \right]\;.
   \end{eqnarray}
Equating the powers of $w$ gives
\[
\lambda=\lambda_1+\lambda_2,\; \; q_j=\frac{\lambda_1}{\lambda}g_j,\; j\neq 1,\; \;
q_1=\frac{\lambda_1}{\lambda}g_1-\frac{\lambda_2}{\lambda}\;.
\]
Thus $Q(w)$ is a valid p.g.f. provided $\lambda_1>\lambda_2$ and $\lambda_1 q_1>\lambda_2$.
\end{enumerate}

\section{Scalable infinitely divisible distributions}\label{Exscal}

The following examples show the scalability of several infinitely divisible p.g.f.{}s, and incidentally uncover new infinitely divisible distributions on $\R$. 

\begin{enumerate}[label=(\roman*), leftmargin=*]
\item 
The Poisson p.g.f. is a Laplace transform of a delta measure
\begin{equation}\label{Poislapl}
P(w;\lambda)=e^{-(1-w)\lambda}=\int_0^{\infty}e^{ (w-1)x}\delta(x-\lambda)dx
\end{equation}
\item \label{NBDLap}
The NBD p.g.f. is a Laplace transform of the gamma p.d.f. $p(x;\alpha,\beta)$.
\begin{equation}\label{nbd2lap}
H_{NBD}(w;\alpha,\beta) = \left[1\!+\! \frac{1\!-\!w}{\beta}  \right]^{-\alpha} = \int_0^\infty e^{-(1-w)x}\frac{\beta^{\alpha}}{\Gamma(\alpha)}x^{\alpha-1}e^{-\beta x}dx\,.
\end{equation}

\item \label{DMLlap}
The p.g.f. of the compound geometric Sibuya distribution, the Discrete Mittag-Leffler distribution (DML) \cite{Pillai1995}
\begin{equation}\label{DMLpgf}
Q(w;\lambda,\gamma)=G[\mathcal{S}(w;\gamma);\lambda]=[1+\lambda(1-w)^{\gamma}]^{-1}
\end{equation}
is the Laplace transform of the heavy-tailed p.d.f. 
\begin{equation}\label{DMLpdf}
p(x;\lambda,\gamma)=
\frac{x^{\gamma-1}}{\lambda}E_{\gamma ,\gamma }\left(-\frac{x^{\gamma}}{\lambda}\right) \xrightarrow[x\to\infty]{}\frac{x^{\gamma -1}}{\lambda  \Gamma (\gamma )}\;,
\end{equation}
where $E_{\alpha ,\beta }(z)= \sum_{k=0}^{\infty}z^k/\Gamma(\alpha k +\beta)$ is the Mittag-Leffler function. Since DML is scalable, it has
a Markovian distribution. Note that this property is not obvious from its p.m.f. recurrence \cite{Pillai1995}:
\begin{equation}\label{DMLpmf}
p_0=\frac{1}{1+\lambda},\;p_1=\frac{\gamma \lambda}{(1+\lambda)^2},\;\;p_n=\frac{\lambda}{1+\lambda}\sum_{j=0}^n (-1)^{j-1}\binom{\gamma}{j}p_{n-j};\; n\geq 2\;.
\end{equation}
\item \label{DMLExtlap} 
The PSD generalization of DML p.g.f. (\ref{DMLpgf}):
\begin{equation}\label{DMLPSDpgf}
    Q(w;\lambda, \gamma, \theta) =
  \frac{1+\lambda(1-\theta)^{\gamma}}{1+\lambda(1-\theta w)^{\gamma}}\;,   
  \end{equation}
is infinitely divisible (see Proposition \ref{infdthin}) and scalable: it represents the Laplace transform of the p.d.f. 
\[
p(x;\lambda,\gamma, \theta)=(1+A)A\cdot \nu^{\gamma}x^{\gamma-1}E_{\gamma ,\gamma }\left(-A(\nu x)^{\gamma}\right);\;\;\nu=\frac{\theta}{1-\theta}\;,\; A=\lambda(1-\theta)^{\gamma}\;.
\]
\item \label{sechdist}
Let us discuss three p.g.f. connected to the hyperbolic secant function. 

(a) Consider the r.v. $N\in \Z$ with the p.g.f. 
\begin{equation}\label{sechpgf}
H(w) = \text{sech}(\sqrt{1-w})\;;\;\; \E N =1/2
\end{equation}
Using the well-known product formula \cite{AbSteg1972}
\begin{equation}\label{sech2prod}
\text{sech}(u) = \frac{1}{\cosh(u)}=\prod_{j=1}^{\infty}\frac{1}{1+u^2/(\pi (j-1/2))^2)}
\end{equation} 
and setting $u^2=(1-w)$ we can express the p.g.f. $H(w)$ as an infinite product of geometric p.g.f.{}s $G(w,\mu_j)$ with the mean $\mu_j$:
\begin{equation}\label{infprodG}
H(w)= \prod_{j=1}^{\infty}\frac{1}{1+(1-w)/(\pi (j-1/2))^2} 
=\prod_{i=1}^{\infty}\frac{1}{1+\mu_j (1-w)}
=\prod_{i=1}^{\infty}G(w,\mu_j)
\;.
\end{equation} 
Since geometric p.g.f. are infinitely divisible and scalable, so is their infinite product $H(w)$. Moreover, since the p.g.f.{}s $G(w;\mu_j)$ are PSDs, their product is also PSD. 

(b) Replacing  $u^2 \to \lambda(1-w)^{\gamma}$ with $0<\gamma< 1$ in (\ref{sech2prod}) gives a new p.g.f. $H(w;\gamma)$ expressible as an infinite product of DML p.g.f.{}s:
\begin{equation}\label{infprod1}
H(w;\lambda,\gamma)= \text{sech}\left( (\lambda(1-w))^{\gamma /2}\right)
=\prod_{i=1}^{\infty}\frac{1}{1+\lambda\mu_j (1-w)^{\gamma}}\;.
\end{equation}
The infinite divisibility and scalability of the DML distribution imply the same properties for $H(w;\gamma)$. 

(c) Infinite product of PSD DML p.g.f. (\ref{DMLPSDpgf}) yields infinitely divisible and scalable p.g.f.:
\begin{equation}\label{infprod2}
H(w;\lambda,\gamma,\theta)= 
\prod_{i=1}^{\infty}\frac{1+\lambda\mu_j (1-\theta)^{\gamma}}{1+\lambda\mu_j (1-\theta w)^{\gamma}} =
\frac{\cosh (z(\theta))}{\cosh (z(\theta w))}\;;\;\; z(\theta)=\sqrt{\lambda  (1-\theta )^{\gamma}}\;.
\end{equation}

\item \label{recipr}
Infinitely divisible PSD p.g.f. of Eq. \ref{jj1} is the Laplace transform of the p.d.f.
\begin{equation}
    p(x;\theta) =-\frac{e^x}{Z(\theta)\theta^2} \left(\theta  e^{-\frac{x}{\theta }}-(\theta +x) \text{Ei}\left(-\frac{x}{\theta }\right)\right)\;, \;\; \text{Ei}(x)=-\int_{-x}^{\infty}e^{-t}t^{-1} dt\;, 
\end{equation}
where Ei$(x)$ is the exponential integral function. 
\item 
The p.d.f. corresponding to the shifted logarithmic p.g.f.  (\ref{shlog}) is:
\begin{equation}\label{shlogLap}
p(x;\theta)=\frac{e^x \text{Ei}\left(-x/\theta\right)}{\log (1-\theta )}\,.   
\end{equation}
\item \label{shexlap}
The p.d.f. corresponding to the shifted Extended Sibuya p.g.f. (\ref{shexsbd}) is:
\begin{equation}
p(x;\theta,\gamma)=-\frac{e^x \Gamma \left(-\gamma ,x/\theta\right)}{\left(1-(1-\theta )^{\gamma }\right) \Gamma (-\gamma )}\;;\;\;     \Gamma(a,z)=\int_z^{\infty}t^{z-1}e^{-t} dt\; .
\end{equation}

\end{enumerate} 
\begin{Remark}
Except for the gamma distribution with $k>1$ and PSD DML, all other obtained p.d.f.s are strictly monotonous. Using the fact that the sum of several \iid r.v. $N_1+N_2+\ldots,\; N_i \in \Z$ each having a maximum at $n=0$, can have a maximum at $n>0$, one can obtain a p.d.f. with a maximum at $x>0$ as an the inverse the Laplace transform of the new p.g.f. $(H(w))^j,\; j>1, j\in \N$ . For example, the Shifted Logarithmic distribution (\ref{shlogLap}) raised to the power $j=3$ has a maximum at $n=1$  for $0.857\gtrsim \theta>2/3$  and at $n=2$ for $0.933\gtrsim \theta\gtrsim 0.857$. 
\end{Remark}

\section{The virial expansion of particle-number p.g.f.{}s}\label{ExEoS}
The following examples show the virial EoS of several p.g.f.{}s.
\begin{enumerate}[label=(\roman*), leftmargin=*]
\item[]
\item For the {\it Hermite p.g.f.} \ref{Hermite_pgf} with $\lambda_1=\lambda(1-\alpha)$ and $\lambda_2=\lambda\alpha$ we have:
\[
\beta pV=\lambda(1-\alpha)\xi +\lambda \alpha \xi^2,\;\; V\rho=\lambda(1-\alpha+2\alpha \xi)\;,\; \xi=\frac{V \rho -\lambda(1-\alpha)}{2\alpha\lambda}\;.
\]
The second-order polynomial expansion in the mean $N = V\rho$ gives the EoS:
\begin{equation}\label{HermEoS}
 \beta p \approx \rho-a\rho^2\;,\;\; a=\frac{\alpha V}{(\alpha-1)^2 \lambda }\;.   
\end{equation} 
 Notably, for $\lambda\sim V$ it reduces to the van der Waals EoS (\ref{vdWEoS}) without the excluded volume correction ($b=0$). 
\item The {\it Conway–Maxwell–Poisson} (CMP) distribution with the partition function 
$\mathcal{Z}(\xi)=(I_0\left(2 \sqrt{\xi }\right)-1)/\xi$ of (see Table (\ref{tab:mark})) has a smaller variance than the Poisson distribution. Its virial EoS up to the 6th order in $\rho$ reads:
\[
\beta p \approx 
\rho  +\frac{\rho ^2 V}{18}+\frac{\rho ^3 V^2}{81}
-\frac{263 \rho ^4 V^3}{72900} 
+\frac{139 \rho ^5 V^4}{164025}
-\frac{3611 \rho ^6 V^5}{9644670}+\ldots
\]
Note that the second and third virial coefficients $B_2$ and $B_3$ are always positive, implying repulsion between two and three particles.
\medskip
\item The  {\it Hyper-Poisson} PSD, or the displaced Poisson distribution (see Table (\ref{tab:mark})) with $\mathcal{Z}(\xi)=\,_1F_1(1,\nu,\xi)$ yields: 
\[ \beta p\approx \rho  
-\frac{(\nu -1) \rho ^2 V}{2 (\nu +1)}+
\frac{(\nu -1) (\nu  (\nu +5)-2) \rho ^3 V^2}{3 (\nu +1)^2 (\nu +2)}+\mathcal{O}(\rho^4V^3)\;.
\]
This distribution is {\it sub-Poisson} for $\nu<1$  and {\it super-Poisson} for $\nu>1$. Therefore, the second virial coefficient $B_2$ is positive/negative for $\nu<1$/$\nu>1$.
\medskip
\item For the p.g.f. of Eq.\ref{jj1} with
\[ V\rho=
\frac{2 \xi -(\xi -2) \log (1-\xi )}{(\xi -1) \log (1-\xi )-\xi } \approx \frac{2 \xi ^2}{9}+\frac{\xi }{3};\;
\;\xi\approx \frac{3}{4} \left(\sqrt{8 V \rho +1}-1\right)
\]
we obtain the following virial expansion:
\[\beta p \approx -\frac{\log (2)}{V} +\rho -\rho ^2 V
+\frac{14 }{15}\rho ^3 V^2-\frac{7 }{8}\rho ^4 V^3
+\frac{146 }{175}\rho ^5 V^4
-\frac{97}{120} \rho ^6 V^5 + \ldots
\]
The positive absolute term due to a non-zero limit $\lim_{\xi\to 0^+} \mathcal{Z}(\xi)=1/2$ describes the {\it vacuum pressure} exerted on a bubble from the surrounding medium.
\medskip
\item The {\it shifted Extended Sibuya distribution} of Eq. \ref{shexsbd} with 
\[\mathcal{Z}(\xi)=\frac{1-(1-\xi)^{\gamma}}{\xi}\;,\;\;
V\rho =\frac{\gamma  \xi  }{(1-(1-\xi )^{\gamma })(1-\xi )^{1-\gamma}}-1\;,\;\; 0<\gamma<1\]
has the virial EoS containing the {\it negative vacuum pressure}:
\[\beta p\approx 
\frac{\log (\gamma )}{V}+\rho  -\frac{(5-\gamma) \rho ^2 V}{6 (1-\gamma)}+
\frac{((\gamma -4) \gamma +7) \rho ^3 V^2}{9 (1-\gamma )^2}
- \ldots;\;\;\]
 
\end{enumerate}
\section{Particle-number distributions derived from a given EoS}\label{RecEoS}
Here, we discuss particle-number distributions of two well-known EoS.
\begin{enumerate}[label=(\roman*), leftmargin=*]
\item[]
\item The {\it Carnahan-Sterling} EoS \cite{Carnahan1969} 
\begin{equation}\label{CS_EoS}
\beta p=\rho \frac{1+\eta+\eta^2-\eta^3}{(1-\eta)^3}=\rho+\sum_{j>1}^{\infty} \left(j^2+j-2\right) \left(\frac{v}{4}\right)^{j-1}\rho^j
\end{equation}
is valid for $\eta=v\rho/4<1$, where $v$ is the packing fraction.
Writing $\lambda_j=a_j(v/4)^{j-1}\lambda^j$ and truncating at $J=6$ gives the coefficients \(a_j\):
\[ a_1=1,\; a_2=-4,\;a_3=27,\;a_4\approx -227,\;a_5
\approx 2164,\;a_6\approx -22178\;.
\]
The particle number \(N = N_1 - N_2\) represents the difference of two infinitely divisible random variables \(N_1\) and \(N_2\) with probability generating functions \(H_1(w)\) and \(H_2(w)\), which describe, respectively, the gas of clusters containing an odd and even number of particles.
\item {\it van der Waals} (vdW) EoS:
\begin{equation}\label{vdWEoS}
\beta (p+ a\rho^2)(1-b \rho)=\rho\;,\;\;
\beta p  
=\frac{\rho}{1-b\rho}-\beta a \rho^2
\end{equation}
is valid for $b\rho<1$.The constants \(a\) and \(b\) account, respectively, for the mutual attractive interaction between particles and for their finite excluded volume. 


Requiring that the solution of the quadratic equation for \(\rho\), obtained from (\ref{vdWEoS}) by setting \(p=0\), possess real roots leads to the lower bound of the applicability of the vdW equation: $T_0>a/(4b)$.
An additional constraint arises from the thermodynamic stability condition 
\(\partial p/\partial\rho > 0\), which is automatically fulfilled in the domain \(\tau = T/T_c > 1\), i.e., above the inflection point of the function \(p(\rho)\):
\[
\rho_c = \frac{1}{3b}\;, \qquad 
p_c = \frac{a}{27 b^2}\;, \qquad
T_c = \frac{8a}{27 b}\;.
\]
Writing the condition $\partial p/\partial \rho=0$  in dimensionless variables $\tau$ and $x=b\rho$:
\begin{equation}\label{xstab}
x(1-x)^2=\frac{4\tau}{27}    
\end{equation}
the condition $\tau=1$ yields the inflection point at $x=1/3$. 
While for $\tau>1$ (\ref{xstab}) has no solution with $\partial p/\partial\rho>0$ everywhere, and therefore no instability,
for $\tau<1$ and $x\in (0, 1)$ it has two roots $x_1<1/3<x_2$ which bound the mechanically unstable region where  $\partial p/\partial\rho<0$ \footnote{ 
 The fluid EoS in this region is described by Maxwell’s construction which replaces the curve $p(\rho)$ by a constant pressure line:  if $T$ is fixed, $p$ is fixed too.}.

Before examining the full vdW equation let us analyze the two limiting cases:
\begin{itemize}
    \item[\(b=0:\)] This regime is relevant for densities \(\rho \ll 1/b\). The corresponding equation of state,
    \[
    \beta p = \rho - \beta a \rho^2,
    \]
    comes from the Hermite probability generating function given in (\ref{HermEoS}). The conditions \(p>0\) and \(\rho>0\) restrict its applicability to
    \[
    1 - \beta a > 0 \quad \Rightarrow \quad \beta < \frac{1}{a}.
    \]

    \item[\(a=0:\)] This case describes the high temperature limit $\beta \to \infty$ of  (\ref{vdWEoS}):
    \[
    \beta p = \frac{\rho}{1-b\rho}.
    \]
    The coefficients \(\alpha_j\) parameterizing the combinants \(\lambda_j = \alpha_j \lambda^j\) up to \(J=5\) are:
    \[
    \alpha_1 = 1,\quad
    \alpha_2 = -b,\quad
    \alpha_3 = \frac{3}{2} b^2,\quad
    \alpha_4 = -\frac{8}{3} b^3,\quad
    \alpha_5 = \frac{125}{24} b^4,\;\ldots
    \]
    The particle number \(N = N_1 - N_2\) represents the difference of two infinitely divisible random variables \(N_1\) and \(N_2\) with probability generating functions \(H_1(w)\) and \(H_2(w)\), which represent, respectively, the gas of clusters comprising an odd and an even number of particles. The total p.g.f.\ can be written as
    \[
    H(w)
    = \frac{\exp\!\Big[\lambda(w-1) + \frac{3}{2} b^2 \lambda^3 (w^3-1) + \frac{125}{24} b^4 \lambda^5 (w^5-1)\Big]}
           {\exp\!\Big[b\lambda^2 (w^2-1) + \frac{8}{3} b^3 \lambda^4 (w^4-1)\Big]}
    = \frac{H_1(w)}{H_2(w)}.
    \]
\end{itemize} 
For the general vdW EoS (\ref{vdWEoS}) we get the following set of the coefficients of $\lambda_j=\alpha_j \lambda^j$:
\[ \alpha_1=1\;,\;
\alpha_2=(a \beta-b ),\; \alpha_3=\frac{1}{2} (b-2 a \beta ) (3 b-2 a \beta )\;, 
\]
\[
\alpha_4=\frac{1}{3}\left(16 a^3 \beta ^3-48 a^2 \beta ^2 b+27 a \beta  b^2-2 b^3\right)\;,
\]
\[
\alpha_5=\frac{1}{24}\left(400 a^4 \beta ^4-1600 a^3 \beta ^3 b+2040 a^2 \beta ^2 b^2-944 a \beta  b^3+125 b^4\right)
\]
The sign of each combinant \(\alpha_j\) for \(j>1\) depends explicitly on the temperature. Although, in general, one cannot expect exact infinite divisibility, this temperature dependence renders an approximate infinite divisibility attainable. In particular, at sufficiently low temperatures, specifically for
\[
T_0 < T \lesssim \frac{4a}{11b},
\]
the first four combinants become positive. This approximate infinite divisibility is further constrained to particle densities satisfying
\[
\rho < \rho_{\max} = \frac{x_1}{b}, \qquad x_1 \in (0,1/3).
\]

Here $x_1$ is the smallest root of equation (\ref{xstab}). Using the Maxwell’s construction this can probably be extended even to densities where the pressure stays constant. The simultaneous constancy of the pressure and temperature means that the evolution of the particle number density is governed by the equation 
\[
p\beta =\frac{\rho }{1-b\rho} - \beta a \rho^2 =const.
\]
where $\beta$ is also some constant.
 
An in-depth mathematical analysis of the van der Waals equation can be found in \cite{Prodanov2022}.

\end{enumerate}

\newpage
\bibliographystyle{unsrt}
\bibliography{main}
\end{document}